\setlist{topsep=0.25\baselineskip,partopsep=0pt,itemsep=1pt,parsep=0pt}
\declaretheoremstyle[headfont=\normalfont\bfseries,bodyfont=\normalfont]{myremark}
\declaretheorem[style=plain,parent=section]{definition}
\declaretheorem[sibling=definition]{theorem}
\declaretheorem[sibling=definition]{proposition}
\declaretheorem[sibling=definition]{lemma}
\declaretheorem[style=myremark,sibling=definition,qed={\qedsymbol}]{remark}
\newcommand{\storeArg}{} 
\newcommand{\NN}{\mathbb{N}} 
\newcommand{\var}{T} 
\newcommand{\field}{\mathbb{K}} 
\newcommand{\polRing}{\field[\var]} 
\newcommand{\Poxi}{[\mkern-3mu[ \var^{-1} ]\mkern-3.2mu]}
\newcommand{\matSpace}[1][\rdim]{\renewcommand\storeArg{#1}\matSpaceAux} 
\newcommand{\matSpaceAux}[1][\storeArg]{\field^{\storeArg \times #1}} 
\newcommand{\polMatSpace}[1][\rdim]{\renewcommand\storeArg{#1}\polMatSpaceAux} 
\newcommand{\polMatSpaceAux}[1][\storeArg]{\polRing^{\storeArg \times #1}} 
\newcommand{\mat}[1]{\bm{\MakeUppercase{#1}}} 
\newcommand{\row}[1]{\bm{\MakeLowercase{#1}}} 
\newcommand{\col}[1]{\bm{\MakeLowercase{#1}}} 
\newcommand{\rdim}{m} 
\newcommand{\cdim}{{m'}} 
\newcommand{\diag}[1]{\mathrm{Diag}(#1)}  
\newcommand{\seqelt}[1]{\bm{F}_{#1}} 
\newcommand{\sseqeltSpace}{\matSpace[\rdim][\rdim]} 
\newcommand{\seq}{\mat{\mathcal{F}}} 
\newcommand{\seqL}{\mat{\mathcal{L}}} 
\newcommand{\seqpm}{\mat{Z}} 
\newcommand{\rel}{\col{p}} 
\newcommand{\relbas}{\mat{P}} 
\newcommand{\relSpace}{\polMatSpace[1][\rdim]} 
\newcommand{\relbasSpace}{\polMatSpace[\rdim][\rdim]} 
\newcommand{\num}{\row{q}} 
\newcommand{\nummat}{\mat{Q}} 
\newcommand{\rem}{\row{r}} 
\newcommand{\remmat}{\mat{R}} 
\newcommand{\remSpace}{\polMatSpace[1][\cdim]} 
\newcommand{\degBd}{d} 
\newcommand{\degBdr}{d_{r}} 
\newcommand{\degBdl}{d_{\ell}} 
\newcommand{\cdeg}[2][]{\mathrm{cdeg}_{{#1}}(#2)} 
\newcommand{\sys}{\mat{F}} 
\newcommand{\appMod}[2]{\mathcal{A}(#1,#2)} 
\newcommand{\basis}{\mathscr{B}}
\newcommand{\trace}{\operatorname{trace}}
\newcommand{\softO}[1]{O{\tilde{~}}(#1)} 
\newcommand{\genseries}{Z}
\newcommand{\minpoly}{P}
\newcommand{\mainalgoname}{\mathsf{ BlockParametrization}}
\newcommand{\lf}{X}
\newcommand{\mf}{Y}
\newcommand{\residueI}{\mathscr{Q}}
\newcommand{\sqfree}{Q}
\newcommand{\trsp}[1]{#1^{\mathsf{T}}} 
\newcommand{\itrsp}[1]{#1^{\mathsf{-T}}} 
\newcommand{\density}{\rho}
\def\M {\ensuremath{\mathsf{M}}}
\def\PP {\ensuremath{\mathsf{P}}}
\def\dg{\kappa}
\def\Q {\ensuremath{\mathbb{Q}}}
\def\N {\ensuremath{\mathbb{N}}}
\def\F {\ensuremath{\mathbb{F}}}
\def\K{\mathbb{K}}
\def\K {\ensuremath{\mathbb{K}}}
\def\Kbar {{\ensuremath{\overline{\mathbb{K}}}}}
\def\D {\ensuremath{D}}
\def\m {\ensuremath{\mathfrak{m}}}
\DeclareBoldMathCommand{\bell}{\ell}
\DeclareBoldMathCommand{\be}{e}
\DeclareBoldMathCommand{\bu}{u}
\DeclareBoldMathCommand{\bv}{v}
\DeclareBoldMathCommand{\bX}{X}
\DeclareBoldMathCommand{\bx}{x}
\DeclareBoldMathCommand{\balpha}{\alpha}
\DeclareBoldMathCommand{\bbeta}{\beta}
\DeclareBoldMathCommand{\mA}{A}
\DeclareBoldMathCommand{\mB}{B}
\DeclareBoldMathCommand{\mD}{D}
\DeclareBoldMathCommand{\mF}{F}
\DeclareBoldMathCommand{\mG}{G}
\DeclareBoldMathCommand{\mI}{I}
\DeclareBoldMathCommand{\mM}{M}
\DeclareBoldMathCommand{\mNs}{N^*}
\DeclareBoldMathCommand{\mN}{N}
\DeclareBoldMathCommand{\mS}{S}
\DeclareBoldMathCommand{\mT}{T}
\DeclareBoldMathCommand{\mU}{U}
\DeclareBoldMathCommand{\mV}{V}
\DeclareBoldMathCommand{\mW}{W}
\DeclareBoldMathCommand{\mX}{X}
\DeclareBoldMathCommand{\mY}{Y}
\DeclareBoldMathCommand{\mZ}{Z}
\DeclareBoldMathCommand{\bell}{\ell}
\DeclareBoldMathCommand{\be}{e}
\DeclareBoldMathCommand{\bu}{u}
\DeclareBoldMathCommand{\bv}{v}
\DeclareBoldMathCommand{\bX}{X}
\DeclareBoldMathCommand{\bx}{x}
\DeclareBoldMathCommand{\balpha}{\alpha}
\DeclareBoldMathCommand{\bbeta}{\beta}
\newcommand{\mUt}{\trsp{\mU}}
\begin{document}

\begin{frontmatter}

  \title{Block-Krylov techniques in the context of \\ sparse-FGLM algorithms}

  \author{Seung Gyu Hyun}
  \address{Cheriton School of Computer Science, University of Waterloo}

  \author{Vincent Neiger}
  \address{Univ.~Limoges, CNRS, XLIM, UMR 7252, F-87000 Limoges, France}

  \author{Hamid Rahkooy}
  \address{Cheriton School of Computer Science, University of Waterloo}

  \author{\'Eric Schost}
  \address{Cheriton School of Computer Science, University of Waterloo}

  \begin{abstract}
    Consider a zero-dimensional ideal $I$ in $\K[X_1,\dots,X_n]$.  Inspired by
    Faug\`ere and Mou's Sparse FGLM algorithm,
    we use Krylov sequences based on multiplication matrices of $I$ in order to
    compute a description of its zero set by means of univariate polynomials.

    Steel recently showed how to use Coppersmith's block-Wiedemann algorithm in
    this context; he describes an algorithm that can be easily parallelized, but
    only computes parts of the output in this manner. Using generating series
    expressions going back to work of Bostan, Salvy, and Schost, we show how to
    compute the entire output for a small overhead, without making any assumption
    on the ideal $I$ other than it having dimension zero. We then propose a refinement of this idea that partially
    avoids the introduction of a generic linear form.  We comment on experimental
    results obtained by an implementation based on the C++ libraries Eigen, LinBox and
    NTL.
  \end{abstract}

  \begin{keyword}
    Polynomial systems; Block-Krylov algorithms; Sparse FGLM.
  \end{keyword}

\end{frontmatter}

\section{Introduction}

Computing the Gr\"obner basis of an ideal with respect to a given term
ordering is an essential step in solving systems of polynomials.
Certain term orderings, such as the degree reverse lexicographic
ordering (\textit{degrevlex}), tend to make the computation of the
Gr\"obner basis faster. This has been observed empirically since the
1980's and is now supported by theoretical results, at least for some
``nice'' families of inputs, such as complete intersections or certain
determinantal systems~\citep{Faugere02,FaSaSp13,BaFaSa15}.  On the
other hand, other orderings, such as the lexicographic ordering
(\textit{lex}), make it easier to find the coordinates of the
solutions, or to perform arithmetic operations in the corresponding
residue class ring.  For instance, for a zero-dimensional radical
ideal $I$ in generic coordinates in $\K[X_1,\dots,X_n]$, for some
field $\K$, the Gr\"obner basis of $I$ for the lexicographic ordering
with $X_1 > \cdots > X_n$ has the form
\begin{equation}\label{eq:shapelemma}
  \{ X_1 - R_1(X_n),\dots,X_{n-1}-R_{n-1}(X_n),R_n(X_n)\},
\end{equation}
with all $R_i$'s, for $i =1,\dots,n-1$, of degree less than
$\deg(R_n)$ (and $R_n$ squarefree); this is known as the {\em
shape lemma}~\citep{GiMo89}. The points in the variety $V(I) \subset
\Kbar{}^n$ are then
$$\{ ( R_1(\tau), \dots, R_{n-1}(\tau), \tau ) \mid \tau \in \Kbar
\;\,\text{is a root of}\;\, R_n\}.$$ As a result, the standard approach to
solve a zero-dimensional system by means of Gr\"obner basis
algorithms is to first compute a Gr\"obner basis for a degree ordering
and then convert it to a more exploitable output, such as a lexicographic
basis. As pointed out in~\citep{FaMo17}, the latter step, while of
polynomial complexity, can now be a bottleneck in practice. This
paper will thus focus on this step; in order to describe our 
contributions, we first discuss previous work on the question.

Let $I$ be a zero-dimensional ideal in $\K[X_1,\dots,X_n]$. As
input, we assume that we know a monomial basis $\basis$ of
$\residueI=\K[X_1,\dots,X_n]/I$ together with the multiplication
matrices $\mM_1,\dots,\mM_n$ of respectively $X_1,\dots,X_n$ in this
basis. We denote by $D$ the degree of $I$, which is the vector space
dimension of $\residueI$. We should stress that starting from a
degree Gr\"obner basis of $I$, computing the multiplication matrices
efficiently is not a straightforward task. \citep{FaGiLaMo93}
showed how to do it in time $O(nD^3)$; more
recently, algorithms have been given with cost bound
$\softO{nD^\omega}$~\citep{FaGaHuRe13,FaGaHuRe14,Neiger16}, at least
for some favorable families of inputs. Here, the notation
$O\tilde{~}$ hides polylogarithmic factors and $\omega$ is a feasible
exponent for matrix multiplication over a ring (commutative, with
$1$). While improving these results is an interesting question in
itself, we will not address it in this paper.

Given such an input, including the multiplication matrices, the FGLM
algorithm~\citep{FaGiLaMo93} computes the lexicographic Gr\"obner basis
of $I$ in $O(nD^3)$ operations in $\K$.  While the algorithm has an
obvious relation to linear algebra, lowering the runtime to
$O\tilde{~}(nD^\omega)$ was only recently
achieved~\citep{FaGaHuRe13,FaGaHuRe14,Neiger16}.

Polynomials as in~\cref{eq:shapelemma} form a very useful data
structure, but there is no guarantee that the lexicographic Gr\"obner
basis of $I$ has such a shape (even in generic coordinates). When it does, we will say that $I$ is
in {\em shape position}; some sufficient conditions for being in shape
position are detailed in~\citep{BeMoMaTr94}.  As an alternative, one
may then use the Rational Univariate Representation algorithm of
\citet{Rouillier99} (see also~\citep{AlBeRoWo94,BeWo96} for related
considerations). The output is a description of the zero-set $V(I)$ by
means of univariate rational functions
\begin{equation}\label{eq:RUR}
  \left\{  F(T)=0, \quad X_1 = \frac{G_1(T)}{G(T)}, \dots,X_n = \frac{G_n(T)}{G(T)} \right\},
\end{equation}
where the multiplicity of a root $\tau$ of $F$ coincides with that of
$I$ at the corresponding point
$(G_1(\tau)/G(\tau),\dots,G_n(\tau)/G(\tau)) \in V(I)$. The fact that
we use rational functions makes it possible to control
precisely the bit-size of their coefficients, if working over $\K=\Q$.

The algorithms of~\citet{AlBeRoWo94, BeWo96, Rouillier99} rely on
\emph{duality}, which will be at the core of our algorithms as well.
Indeed, these algorithms compute sequences of values of the form
$v_s=(\trace(\lf^s))_{s \ge 0}$ and 
$v_{s,i}=(\trace(\lf^s X_i))_{s \ge 0}$, where $\trace: \residueI \to \K$ is the trace 
form and $\lf=t_1 X_1 + \cdots + t_n X_n$ is a generic $\K$-linear combination of the variables.
From these values, one may then recover the output in~\cref{eq:RUR} by means
of structured linear algebra calculations.

A drawback of this approach is that we need to know the trace of all
elements of the basis $\basis$; while feasible in polynomial time,
this is by no means straightforward. \citet{BoSaSc03}
introduced randomization to alleviate this issue. They show
that computing values such as $\ell(\lf^s)$ and $\ell(\lf^s X_i)$, where
$\lf$ is as above and 
$\ell$ is a random $\K$-linear form $\residueI \to \K$, allows one to deduce
a description of $V(I)$ of the form
\begin{equation}\label{eq:BoSaSc03}
  \{  \sqfree(T)=0, \quad X_1 = V_1(T), \dots,X_n = V_n(T) \},
\end{equation}
where $\sqfree$ is a monic squarefree polynomial in $\K[T]$ and $V_i$ is in
$\K[T]$ of degree less than $\deg(\sqfree)$ for all $i$. The tuple
$((\sqfree,V_1,\dots,V_n),\lf)$ computed by  such algorithms
will be called a {\em zero-dimensional parametrization} of $V(I)$. In
particular, it generally differs from the description
in~\cref{eq:RUR}, since the latter keeps track of the multiplicities
of the solutions (the algorithm in~\citep{BoSaSc03} actually computes
the {\em nil-indices} of the solutions). Remark that starting from
such an output, we can reconstruct the local structure of $I$ at 
its roots, using algorithms from~\citep{MaMoMo96},~\citep{Mourrain97} or~\citep{NeRaSc17}.

The most costly part of the algorithm of~\citep{BoSaSc03} is the
computation of the values $\ell(\lf^s)$ and $\ell(\lf^s X_i)$; the
rest essentially boils down to applying the Berlekamp-Massey algorithm
and univariate polynomial arithmetic. \citet{FaMo17} pointed out that the
multiplication matrices
$\mM_1,\dots,\mM_n$ can be expected to be sparse; for generic inputs,
they gave precise estimates on the sparsity of these matrices,
assuming the validity of a conjecture by
\citet{MorenoSocias91}.  On this basis, they designed
several forms of {\em sparse FGLM} algorithms. For instance, if $I$ is
in shape position, the algorithms in \citep{FaMo17} recover its
lexicographic basis, which is as in~\eqref{eq:shapelemma}, by also
considering values of a linear form $\ell:\residueI \to \K$. For less
favorable inputs, these algorithms fall back either on the
algorithm of \citet{Sakata90} or on plain FGLM.

The ideas at play in these algorithms are essentially based on Krylov subspace
methods, using projections as well as Berlekamp-Massey techniques, along the
lines of the algorithm of \citet{Wiedemann86} to solve sparse linear systems.
These techniques have also been widely used in integer
factorization or discrete logarithm calculations, going back to~\citep{LaOd90}.
It has become customary to design block versions of such algorithms to
parallelize their bottleneck, as pioneered by \citet{Coppersmith94}
in the context of integer factorization: it is then natural to adapt this
strategy to our situation. This was already discussed by
\citet{Steel15}, where he showed how to compute the analogue of the
polynomial $\sqfree$ in \cref{eq:BoSaSc03} using such techniques. In that
reference, one is only interested in the solutions in the base field $\K$ ($\K$
being a finite field in that context): the algorithm computes the roots of
$\sqfree$ in $\K$ and substitutes them in the input system, before computing a
Gr\"obner basis in $n-1$ variables for each of them.

Our first contribution is to give a block version of the algorithm
in~\citep{BoSaSc03} that extends the approach introduced
in~\citep{Steel15} to compute all polynomials in~\cref{eq:BoSaSc03} for
essentially the same cost as the computation of $\sqfree$. More
precisely, the bottleneck of the algorithm of~\citep{Steel15} is the
computation of a block-Krylov sequence; we show that once this
sequence has been computed, not only $\sqfree$ but also all other
polynomials in the zero-dimensional parametrization can be efficiently
obtained. Compared with the algorithms of~\citep{FaMo17}, a notable
difference is that our algorithm deals with any  zero-dimensional ideal
$I$ (for instance, we do not require shape position), but the
base field must have sufficiently large characteristic (and our output is
somewhat weaker than a Gr\"obner basis, since multiplicities are not
computed). While we focus on the case where the multiplication
matrices are sparse, we also give a cost analysis for the case of
dense multiplication matrices.

Our second contribution is a refinement of our first algorithm, where
we try to avoid computations with a generic linear form $\lf =t_1 X_1 +
\cdots + t_n X_n$ to the extent possible (this is motivated by the
fact that the multiplication matrix of $\lf$ is often denser than those
of the variables $X_i$). The algorithm first computes a zero-dimensional
parametrization of a subset of $V(I)$ for which we can take $\lf$
equal to (say) $X_1$, and falls back on the previous approach for the
residual part; if the former set has large cardinality, this is
expected to provide a speed-up over the general algorithm.

For experimental purposes, our algorithms have been implemented in C++ using
the libraries Eigen~\citep{Eigen}, LinBox~\citep{LinBox} and NTL~\citep{NTL}.

The paper is organized as follows. The next section mainly reviews
known results on scalar and matrix recurrent sequences, and introduces
a simple useful algorithm to compute a so-called {\em scalar
  numerator} for such sequences. \cref{sec:seq0} describes sequences
that arise in the context of FGLM-like algorithms; we prove slightly
refined versions of results from~\citep{BoSaSc03} that will be used
throughout the paper. The main algorithm is given in \cref{sec:main},
and the refinement mentioned above is in \cref{sec:original}.
Finally, in appendix, we prove a few technical statements on linearly
recurrent matrix sequences.

\paragraph{Complexity model}
We measure the cost of our algorithms by counting basic operations in
$\K$ at unit cost. Most algorithms are randomized; they involve the
choice of a vector $\gamma \in \K^S$ of field elements, for an integer
$S$ that depends on the size of our input, and success is guaranteed
if the vector $\gamma$ avoids a hypersurface of the parameter space
$\K^S$.

Suppose the input ideal $I$ is generated by polynomials
$F_1,\dots,F_t$.  Given a zero-dimensional parametrization
$((\sqfree,V_1,\dots,V_n),\lf)$ found by our algorithms, one can
always evaluate $F_1,\dots,F_t$ at $X_1 =V_1,\dots,X_n=V_n$, doing all
computations modulo $\sqfree$. This allows us to verify whether the
output describes a subset of $V(F_1,\dots,F_t)$, but not whether we
have found all solutions.  If $\deg(Q)$ coincides with the dimension
of $\residueI=\K[X_1,\dots,X_n]/I$, we can infer that we have all
solutions (and that $I$ is radical), so our output is correct.

In what follows, we assume $\omega>2$, in order to simplify a few cost
estimates. We use a time function $d \mapsto \M(d)$ for the cost of
univariate polynomial multiplication over $\K$, for which we assume
the super-linearity properties of \citep[Section~8.4]{GaGe13}.  Then,
two $m\times m$ matrices over $\K[T]$ whose degree is less than $d$
can be multiplied using $O(m^\omega \M(d))$ operations in $\K$.

\paragraph{Acknowledgments} We wish to thank Chenqi Mou, Dave Saunders, and
Gilles Villard for several discussions, and a reviewer of the first
version of this paper for their useful remarks. This research was
partially supported by NSERC (Schost's NSERC Discovery Grant) and by
the CNRS-INS2I Institute through its program for young researchers
(Neiger's project ARCADIE).


\section{Linearly recurrent sequences}

This section starts with a review of known facts on linearly recurrent
sequences: we first discuss the scalar case, and then we show how the ideas
carry over to matrix sequences. 
The main results we will need from the first three subsections
are~\cref{coro:cost_approx} and \cref{randXY}, which give cost
estimates for the computation of a {\em minimal matrix generator} of a
linearly recurrent matrix sequence, as well as a degree bound for such
generators, when the matrices we consider are obtained from a Krylov
sequence. These results are for the most part not new
(see~\citep{Villard97,Villard97a,KalVil01,Turner02,KaVi04}), but the
cost analysis we give uses results not available when those references
were written.  The fourth and last subsection presents a useful result
for our main algorithm that allows us to compute a ``numerator'' for a
scalar sequence from a similar object obtained for a matrix sequence.


\subsection{Scalar sequences} \label{section:linseq}

Let $\K$ be a field and consider a sequence $\mathcal{L}=(\ell_s)_{s
\ge 0} \in \K^\N$. We say that a degree $d$ polynomial $\minpoly =
p_0 + \cdots + p_d T^d \in\K[T]$ {\em cancels} the sequence
$\mathcal{L}$ if $p_0 \ell_s + \cdots + p_d \ell_{s+d}=0$ for all $s
\ge 0$. The sequence $\mathcal{L}$ is {\em linearly recurrent} if
there exists a nonzero polynomial that cancels it.  The {\em minimal
polynomial} of a linearly recurrent sequence
$\mathcal{L}=(\ell_s)_{s \ge 0}$ is the monic polynomial of lowest
degree that cancels it; the {\em order} of $\mathcal{L}$ is the degree
of this polynomial~$\minpoly$.

In terms of generating series, it is easier to work here with
generating series in the variable $1/T$.  Let thus $\genseries =
\sum_{s\ge0} \ell_s / T^{s+1}$ and $\minpoly$ be any polynomial; then,
$\minpoly$ cancels the sequence $\mathcal{L}$ if and only if
$Q=\minpoly \genseries $ is a polynomial, in which case $Q$ must have
degree less than $\deg(\minpoly)$.  In particular, given a scalar
sequence and a polynomial that cancels it, there is a well-defined
notion of associated numerator. This is formalized in the next
definition.

\begin{definition}
  \label{def:omega}
  Let $\mathcal{L}=(\ell_s)_{s \ge 0}\in \K^\N$ be a sequence and $P$ be a
  polynomial that cancels $\mathcal{L}$. Then, the {\em numerator} of $\mathcal{L}$
  with respect to $P$ is denoted by $\Omega(\mathcal{L},P)$ and defined as 
  \[
    \Omega(\mathcal{L},P) = P \genseries, \quad\text{where}\quad
    \genseries=\sum_{s \ge 0} \frac {\ell_s}{T^{s+1}}.
  \]
  In particular, $\Omega(\mathcal{L},P)$ is a polynomial of
  degree less than $\deg(P)$.
\end{definition}

The numerators thus defined are related to the Lagrange interpolation
polynomials; this will explain why they play an important role in our
main algorithm. Assuming that $\mathcal{L}$ is known to have order at
most $d$ and that we are given $\ell_0,\dots,\ell_{2d-1}$, we can recover
its minimal polynomial $P$ by means of the Berlekamp-Massey algorithm,
or of Euclid's algorithm~\citep{BrGuYu80}. Given $P$ and
$\ell_0,\dots,\ell_{d-1}$, we can deduce
$\Omega(\mathcal{L},P)$ by a simple multiplication.

As an example, consider the Fibonacci sequence $\mathcal{L} =
(1,1,2,3,5,8,\dots)$, which is linearly recurrent with minimal
polynomial $P=T^2-T-1$.
Defining
$\genseries = \sum_{s\ge0} \ell_{s}/T^{s+1}$, we obtain
\[
  \Omega(\mathcal{L},P) = (T^2-T-1)\left (\frac 1T +\frac 1{T^2} + \frac
  2{T^3} + \frac 3{T^4} + \cdots \right ) =T.
\]


\subsection{Linearly recurrent matrix sequences}\label{section:matrix_seq}

Next, we discuss the analogue of these ideas for matrix sequences; our
main goal is to give a cost estimate for the computation of a suitable
{\em matrix generator} for a matrix sequence, obtained by means of
recent algorithms for approximant bases. A similar discussion, relying
on the approximant basis algorithm in~\citep{BecLab94}, can be found
in~\citep[Chapter~4]{Turner02}; as a result, in the body of the text,
we indicate only the key definitions and results (and in particular,
an improved complexity analysis). Proofs of some statements are in
appendix.

We first define linear recurrences for matrix sequences over a field
$\field$ as in~\citep[Section~3]{KalVil01}
or~\citep[Definition~4.2]{Turner02}, hereby extending the above
notions for scalar sequences. We only discuss the square case, since
this is what we need below; the whole discussion can be generalized to
rectangular matrices.
\begin{definition}
  \label{dfn:recurrence_relation}
  Let $\seq = (\seqelt{s})_{s\ge 0}$ be a sequence of matrices in
  $\sseqeltSpace$. Then,
  \begin{itemize}
    \item a polynomial vector $\rel = \row{p}_0 + \cdots +
      \row{p}_\degBd T^\degBd \in
      \relSpace$ is a \emph{left vector relation for $\seq$} if $
      \row{p}_0 \seqelt{s} + \cdots + \row{p}_{\degBd}
      \seqelt{s+\degBd} = \row{0}$ holds for all $s \ge 0$;
    \item the sequence $\seq$ is \emph{linearly recurrent} if the set
      of left vector relations for $\seq$ is a $\polRing$-submodule of
      $\relSpace$ of rank $\rdim$.
  \end{itemize}
\end{definition}
Note that the set of left vector relations is always a free
$\K[T]$-module, but its rank may be less than $\rdim$. The following
equivalent characterization of linearly recurrent sequences can be
found for example in \citep{Villard97,KalVil01,Turner02}.
\begin{lemma}
  \label{lem:module_rank}
  Let $\seq = (\seqelt{s})_{s\ge 0}$ be a sequence of matrices in
  $\sseqeltSpace$. Then $\seq$ is linearly recurrent if and only if it admits
  a nonzero scalar relation, that is, if there exists a non-zero
  polynomial $P = p_0 + \cdots + p_\degBd T^\degBd \in \polRing$ such
  that the identity $p_0 \seqelt{s} + \cdots + p_{\degBd}
  \seqelt{s+\degBd} = \mat{0}$ holds for all $s \ge 0$.
\end{lemma}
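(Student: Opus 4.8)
The plan is to prove the two implications separately. The easy direction is that a nonzero scalar relation forces linear recurrence: if $P = p_0 + \cdots + p_\degBd T^\degBd$ is nonzero and satisfies $p_0 \seqelt{s} + \cdots + p_\degBd \seqelt{s+\degBd} = \mat{0}$ for all $s \ge 0$, then for every row vector $\row{u} \in \K^{1\times\rdim}$ the polynomial vector $\rel = u P \in \relSpace$ is a left vector relation for $\seq$ (here I conflate the scalar $P$ with the vector $\row{u}P$ obtained by scaling each coefficient). Taking $\row{u}$ to range over the standard basis $\be_1,\dots,\be_\rdim$ produces $\rdim$ left vector relations $\be_1 P, \dots, \be_\rdim P$ that are $\K[T]$-linearly independent (their leading coefficient rows $p_\degBd \be_i$ span $\K^{1\times\rdim}$ since $p_\degBd \ne 0$ after normalizing $P$ to have $\deg P = \degBd$ exactly). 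Hence the module of left vector relations has rank $\rdim$, and since it is always a submodule of the free module $\relSpace$ of rank $\rdim$, it has rank exactly $\rdim$; so $\seq$ is linearly recurrent by \cref{dfn:recurrence_relation}.

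For the converse, suppose $\seq$ is linearly recurrent, so the module $\scrM$ of left vector relations has rank $\rdim$. First I would argue that $\scrM$ contains, for each $i \in \{1,\dots,\rdim\}$, a relation whose leading (in fact: any) coefficient vectors are of a special form; more directly, since $\scrM$ has full rank $\rdim$ inside $\relSpace \cong \K[T]^{1\times\rdim}$, the quotient $\relSpace/\scrM$ is a torsion $\K[T]$-module, hence finite-dimensional over $\K$. Consequently, for each standard basis vector $\be_i \in \relSpace$ there is a nonzero polynomial $P_i \in \K[T]$ with $\be_i P_i \in \scrM$, i.e.\ $P_i$ times the $i$-th ``coordinate sequence'' of $\seq$ vanishes. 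Then set $P = \operatorname{lcm}(P_1,\dots,P_\rdim)$, or even just $P = P_1 \cdots P_\rdim$: this $P$ is nonzero, and writing $P = A_i P_i$ we get that $P$ applied to $\seq$ kills the $i$-th row contributions for every $i$. More cleanly: the relation $\be_i P = A_i (\be_i P_i)$ lies in $\scrM$ for each $i$, and these relations together say exactly that $p_0 \seqelt{s} + \cdots + p_\degBd \seqelt{s+\degBd}$ has zero entry in every row $i$ and every column, hence equals $\mat{0}$, which is the desired nonzero scalar relation $P$.

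The main obstacle is the bookkeeping in the converse: one must be careful that ``$\scrM$ has rank $\rdim$'' genuinely yields, for \emph{each} basis direction $\be_i$, a scalar multiple landing in $\scrM$ — this is where the structure theory of finitely generated modules over the PID $\K[T]$ is invoked (a full-rank submodule of a free module has finite-dimensional cokernel), and it is worth stating this lemma explicitly rather than leaving it implicit. Once that is in hand, combining the per-direction polynomials into a single $P$ via a product or least common multiple, and checking that the resulting identity on the matrix sequence holds entrywise and hence as matrices, is routine. I would also remark that this argument in fact shows the minimal such scalar $P$ is the generator of the ideal $\{Q \in \K[T] : Q \seqelt{s+k} \text{-combination} = \mat 0\}$, tying back to the scalar minimal polynomial of \cref{section:linseq}, though that refinement is not needed for the statement as given.
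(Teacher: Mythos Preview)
Your argument is correct in both directions. The forward implication is immediate, and for the converse you correctly invoke the structure theory over the PID $\K[T]$: a rank-$\rdim$ submodule of $\K[T]^{1\times\rdim}$ has torsion cokernel, so each $\be_i$ has a nonzero annihilating polynomial $P_i$ modulo $\scrM$, and taking $P$ to be their product (or lcm) yields the desired scalar relation. The only cosmetic issue is the phrasing ``$\rel = uP$'' in the forward direction, where you momentarily conflate a scalar polynomial with a polynomial vector; you catch this yourself, but it would be cleaner to simply say that the diagonal matrix $P\,\mat{I}_\rdim$ has rows $\be_1 P,\dots,\be_\rdim P$, all of which are left vector relations, and these are $\K[T]$-independent since $\det(P\,\mat{I}_\rdim)=P^\rdim\neq 0$.

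As for comparison with the paper: the paper does not actually prove this lemma. It is stated as a known characterization and attributed to \citep{Villard97,KalVil01,Turner02}; no argument is given in the body or the appendix. So your self-contained proof fills a gap that the paper deliberately leaves to the literature. Your closing remark about the minimal scalar relation generating an ideal is a nice observation but, as you note, not required for the lemma as stated.
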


\begin{definition}
  \label{dfn:matrix_generator}
  Let $\seq = (\seqelt{s})_{s\ge 0}$ be a linearly recurrent sequence of matrices in
  $\sseqeltSpace$. A matrix 
  $\mat{P}$ in $\relbasSpace$ is 
  \begin{itemize}
  \item a \emph{left matrix relation} if its rows are all left vector relations for $\seq$;
  \item a \emph{left matrix generator} if its rows form a basis of the
    module of left vector relations for $\seq$.
  \end{itemize}
  In the latter case, $\mat{P}$ is a {\em minimal generator} if it is in row
  reduced form.
\end{definition}

We refer to \citep{Wolovich74,Kailath80} for a definition of reduced forms.
The following proposition, which can be proved by a direct examination
of the terms in the product $\relbas \seqpm$, shows that matrix
relations are denominators in a fraction description of the generating
series of the sequence. As suggested in the previous subsection,
working with generating series in $1/T$ turns out to be the most
convenient choice here.

\begin{proposition}
  Let $\seq = (\seqelt{s})_{s\ge 0}$ be a sequence of matrices in
  $\sseqeltSpace$ and let $\mat{P}$ be a nonsingular matrix in $\relbasSpace$. 
  Then, $\mat{P}$ is a left matrix relation for $\seq$ if and only
  if the generating series $\seqpm = \sum_{s\ge 0} \seqelt{s} /
  \var^{s+1} \in \field\Poxi^{\rdim \times \rdim}$ can be written as a
  matrix fraction $\seqpm = \relbas^{-1} \nummat$, with $\nummat \in
  \polMatSpace[\rdim][\rdim]$. In this case, 
  the \(i\)th row of $\nummat$ has degree less than the \(i\)th row of
  $\relbas$, for \(1\le i\le \rdim\).
\end{proposition}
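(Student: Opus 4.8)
The plan is to reduce the whole statement to one elementary computation, namely writing out the product $\relbas\seqpm$ and sorting its terms by powers of $\var$. To make that product meaningful I would first place all the objects inside the field of Laurent series $\field(\!(\var^{-1})\!)$, which contains $\polRing$, the power-series ring $\field\Poxi$ in which $\seqpm$ lives, and the field of fractions $\field(\var)$. Since $\relbas$ is nonsingular, $\det\relbas$ is a nonzero element of $\polRing\subset\field(\!(\var^{-1})\!)$, so $\relbas$ is invertible over $\field(\!(\var^{-1})\!)$ and left-multiplication by $\relbas$ is a bijection of $\field(\!(\var^{-1})\!)^{\rdim\times\rdim}$. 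Consequently the asserted identity $\seqpm=\relbas^{-1}\nummat$ with $\nummat$ a polynomial matrix is equivalent to $\relbas\seqpm=\nummat\in\relbasSpace$, and the proposition reduces to: $\relbas$ is a left matrix relation if and only if $\relbas\seqpm$ is a polynomial matrix, together with the stated row-degree bound in that case. Both the left-matrix-relation property and the degree bound are conditions on $\relbas\seqpm$ row by row, so I would argue one row at a time.

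Fix a row $\rel=\row{p}_0+\cdots+\row{p}_{\degBd}\var^{\degBd}$ of $\relbas$, with $\degBd$ its degree (so $\row{p}_{\degBd}\ne\row{0}$; note the row is nonzero because $\relbas$ is nonsingular). Substituting $\seqpm=\sum_{s\ge 0}\seqelt{s}\var^{-s-1}$ and expanding gives $\rel\seqpm=\sum_{k=0}^{\degBd}\sum_{s\ge 0}\row{p}_k\seqelt{s}\var^{k-s-1}$, and the point is the clean split obtained after regrouping by the exponent of $\var$: for $s\ge 0$ the coefficient of $\var^{-s-1}$ is exactly $\row{p}_0\seqelt{s}+\cdots+\row{p}_{\degBd}\seqelt{s+\degBd}$, while for $j\ge 0$ the coefficient of $\var^{j}$ is the finite sum $\sum_{k=j+1}^{\degBd}\row{p}_k\seqelt{k-1-j}$, which vanishes as soon as $j\ge\degBd$. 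Hence $\rel\seqpm$ has no negative powers of $\var$, i.e.\ is a polynomial vector, if and only if $\row{p}_0\seqelt{s}+\cdots+\row{p}_{\degBd}\seqelt{s+\degBd}=\row{0}$ for all $s\ge 0$, which is precisely the definition (\cref{dfn:recurrence_relation}) of $\rel$ being a left vector relation for $\seq$; and when this holds the surviving part of $\rel\seqpm$ has degree at most $\degBd-1$, hence strictly less than $\deg\rel$.

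Applying this to each of the $\rdim$ rows of $\relbas$ then yields the proposition: $\relbas\seqpm$ is a polynomial matrix if and only if every row of $\relbas$ is a left vector relation, i.e.\ if and only if $\relbas$ is a left matrix relation, and in that case $\nummat:=\relbas\seqpm$ has its $i$th row of degree strictly less than the degree of the $i$th row of $\relbas$ for each $i$. I do not expect a genuine obstacle: the argument is pure bookkeeping. The only points that deserve a little care are (i) making the product $\relbas\seqpm$ well-defined and justifying the passage between "$\seqpm=\relbas^{-1}\nummat$" and "$\relbas\seqpm=\nummat$", handled by the detour through $\field(\!(\var^{-1})\!)$ and the nonsingularity of $\relbas$, and (ii) performing the reindexing so that the coefficients of negative powers of $\var$ collapse exactly onto the recurrence relations while the non-negative part is visibly of degree less than $\deg\rel$.
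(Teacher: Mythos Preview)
Your proposal is correct and follows exactly the approach indicated in the paper, which states that the proposition ``can be proved by a direct examination of the terms in the product $\relbas\seqpm$'' without spelling out the details. Your row-by-row expansion and reindexing into positive and negative powers of $\var$ is precisely that direct examination, and the care you take in passing through $\field(\!(\var^{-1})\!)$ to justify the inversion is appropriate.
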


Given a nonsingular matrix relation $\relbas$ for $\seq$, we will
thus write $\mat{\Omega}(\seq, \relbas)= \relbas \seqpm \in
\polMatSpace[\rdim][\rdim]$, generalizing \cref{def:omega}. 
By the previous proposition, this is a polynomial matrix, whose $i$th
row has degree less than the $i$th row of $\mat{P}$ for $1\le
i\le\rdim$.  As in the scalar case, if $\mat{P}$ has degree $d$, we
only need to know $\seqelt{0},\dots,\seqelt{d-1}$ to recover
$\mat{\Omega}(\seq, \relbas)$ as
\begin{align}\label{eq:computeOmega}
  \mat{\Omega}(\seq, \relbas) =  \left ( \relbas \cdot \sum_{s=0}^{d-1} 
    \seqelt{d-1-s} T^s \right ) \;\mathrm{div}\; T^d,
\end{align}
where ``$\mathrm{div}\; T^d$'' means we keep the quotient of each
entry by $T^d$.  Given $\mat{P}$ and $\seqelt{0},\dots,\seqelt{d-1}$, the cost of computing $\mat{\Omega}(\seq, \relbas) $ is then $O(\rdim^\omega \M(d))$
operations in $\K$.

Given a linearly recurrent sequence, our goal will be to find a minimal left
matrix generator of it (from which we will easily deduce the corresponding
numerator). In order to do this, we will assume that we know bounds on the
degree of this left generator, but also on that of a {\em right} generator.
Indeed, in all the previous discussion, we may also consider vector relations
operating on the right. In particular, \cref{lem:module_rank} shows that if a
sequence is linearly recurrent, these right relations form a submodule of
$\polMatSpace[\rdim][1]$ of rank $\rdim$, so that a linearly recurrent sequence
also admits right generators.

As it turns out, all minimal left generators have the same degree (by property
of reduced forms); the same remark holds for minimal right generators. Knowing
bounds $(\degBdl,\degBdr)$ on these degrees allows us to control the number of
terms of the sequence we will access during the algorithm (the bounds
$(\degBdl,\degBdr)$ correspond to $(\gamma_1,\gamma_2)$ in
\citep[Definitions~4.6~and~4.7]{Turner02} and $(\delta_l,\delta_r)$ in
\citep[Section~4.2]{Villard97a}).  In concrete terms, the fast computation of
minimal matrix generators is usually handled via minimal approximant basis
algorithms \citep[see for example][]{Villard97,Turner02,GioLeb14}. The runtime
below is obtained by using the divide and conquer approximant basis algorithm
in~\citep{GiJeVi03}.

\begin{theorem}\label{coro:cost_approx}
  Let $\seq = (\seqelt{s})_{s\ge 0}$ be a linearly recurrent sequence
  of matrices in $\sseqeltSpace$ and let $\degBd = \degBdl+\degBdr+1$,
  where $(\degBdl,\degBdr) \in \NN^2$ are such that the minimal left
  (resp.~right) matrix generators of $\seq$ have degree at most $\degBdl$
  (resp.~at most $\degBdr$). Then, given $\seqelt{0},\dots,\seqelt{d-1}$,
  one can compute a minimal left matrix generator of $\seq$ in
  $O(\rdim^\omega \M(\degBd) \log(\degBd))$ operations in $\field$.
\end{theorem}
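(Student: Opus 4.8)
The plan is to reduce the computation of a minimal left matrix generator to a single minimal approximant basis computation, for which the cost bound is known. First I would observe that a left vector relation $\rel = \row{p}_0 + \cdots + \row{p}_{\degBdl} T^{\degBdl}$ of degree at most $\degBdl$ is, by the proposition above, characterized by the property that $\rel \seqpm$ is a polynomial, equivalently that $\rel \cdot \sum_{s=0}^{N-1} \seqelt{s} T^{-s-1}$ has zero ``tail'' up to sufficiently high order. Reversing coefficients, this becomes an approximant condition: writing $\seqpm$ truncated and reversed as a polynomial matrix $\seqL = \sum_{s=0}^{\degBd-1} \seqelt{s} T^{\degBd-1-s} \in \polMatSpace[\rdim][\rdim]$, the vector $\rel$ corresponds to a vector $\bar{\rel}$ with $\bar{\rel}\, \seqL \equiv \mat{0} \bmod T^{\degBd}$ (up to the usual shift bookkeeping relating degrees of $\rel$ and orders of the approximant). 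So a minimal approximant basis of $\seqL$ to order $\degBd$, with an appropriate shift, will contain the minimal left vector relations of degree $\le \degBdl$ among its rows.

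Next I would argue that such an approximant basis actually \emph{is} (after discarding rows that are not relations, i.e.\ whose shifted degree is too large) a left matrix generator: since every left vector relation of degree $\le \degBdl$ appears in the $\K[T]$-span of the approximant basis, and since the hypothesis guarantees that the module of all left vector relations is generated in degree $\le \degBdl$ (a minimal left generator has degree $\le \degBdl$), the relevant rows of the approximant basis generate the whole relation module. Row reducing — or rather, the fact that a shifted-minimal approximant basis is already in shifted reduced form — yields a minimal generator. The choice $\degBd = \degBdl + \degBdr + 1$ enters here: one needs the order to be high enough that an approximant of degree $\le \degBdl$ which kills $\seqL \bmod T^{\degBd}$ is genuinely a relation for the full infinite sequence, and the bound $\degBdr$ on the right generators controls exactly how many terms $\seqelt{s}$ are needed for this (this is the classical argument, as in \citep{Villard97a,Turner02}).

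Finally, for the cost: the divide-and-conquer approximant basis algorithm of \citep{GiJeVi03} computes a shifted minimal approximant basis of an $\rdim \times \rdim$ polynomial matrix to order $\degBd$ in $O(\rdim^\omega \M(\degBd) \log(\degBd))$ operations in $\field$, and forming $\seqL$ from $\seqelt{0},\dots,\seqelt{\degBd-1}$ is free. Reading off the minimal left matrix generator from the output (selecting the $\rdim$ rows of small shifted degree, and reversing coefficients back) costs $O(\rdim^\omega \M(\degBd))$, which is absorbed. This gives the stated bound.

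The main obstacle is the bookkeeping in the reduction: getting the shift right so that ``minimal left matrix generator in row reduced form'' corresponds exactly to ``shifted-minimal approximant basis, restricted to its low-degree rows'', and verifying that order $\degBd = \degBdl+\degBdr+1$ is simultaneously sufficient (no spurious approximants that fail to be true relations) and necessary only up to this many terms. Since the excerpt allows citing \citep{Villard97,Villard97a,KalVil01,Turner02,GiJeVi03}, I would lean on those for the correctness of the reduction and restrict the new content to the improved complexity statement coming from \citep{GiJeVi03}; the appendix presumably carries out the degree-bound details.
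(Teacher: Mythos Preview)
Your overall strategy---reduce to a minimal approximant basis computation and invoke \citep{GiJeVi03}---is the paper's, and you correctly identify that the bound $\degBdr$ is what makes finitely many terms suffice (this is \cref{lem:finitely_many_terms}). The gap is in how you set up the approximant problem.

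You propose to compute approximants at order $\degBd$ of the $\rdim\times\rdim$ matrix $\sum_{s=0}^{\degBd-1}\seqelt{s}T^{\degBd-1-s}$ by itself, asserting that after some reversal a relation $\rel$ becomes such an approximant. This fails. If $\rel$ has degree at most $\degBdl$, then in the product $\rel\cdot\sum_{s}\seqelt{s}T^{\degBd-1-s}$ only the coefficients in degrees $\degBdl+1,\dots,\degBd-1$ carry the relation identities; the coefficients in degrees $0,\dots,\degBdl$ encode the numerator $\rel\,\seqpm$ and are generically nonzero. A degree shift does not help, since shifts reweight row degrees but do not change which coefficients the approximant condition forces to zero. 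Concretely, take $\rdim=1$ and $\seqelt{s}=1$ for all $s$: the minimal polynomial is $T-1$, one may take $\degBdl=\degBdr=1$, $\degBd=3$, and your matrix is $T^2+T+1$; then $(T-1)(T^2+T+1)=T^3-1\equiv-1\bmod T^3$, so $T-1$ is not an approximant, and the approximant module at order $3$ is $T^3\,\K[T]$, which does not contain the relation at all. No coefficient reversal of $T-1$ changes this.

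The paper's remedy (\cref{thm:mingen_via_appbas}) is to append an identity block and work with the $2\rdim\times\rdim$ matrix
\[
  \sys=\begin{bmatrix}\sum_{0\le s<\degBd}\seqelt{s}T^{\degBd-1-s}\\[1mm]-\mat{I}_\rdim\end{bmatrix}.
\]
An approximant $[\rel\;\;\rem]\in\appMod{\sys}{\degBd}$ then satisfies $\rel\cdot\sum_s\seqelt{s}T^{\degBd-1-s}\equiv\rem\bmod T^{\degBd}$: the extra unknowns $\rem$ absorb the coefficients of degree at most $\degBdl$, and once $\deg([\rel\;\;\rem])\le\degBdl$ the remaining constraints are precisely the relation identities for $s=0,\dots,\degBdr-1$. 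A reduced basis of $\appMod{\sys}{\degBd}$ has exactly $\rdim$ rows of degree at most $\degBdl$, whose left $\rdim\times\rdim$ block is a minimal left generator. The cost bound from \citep{GiJeVi03} then gives the theorem, as you say.
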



\subsection{Application to the block Wiedemann algorithm}\label{ssec:appliW}

We next apply the results seen above to a particular class of matrix
sequences, namely the Krylov sequences used in Coppersmith's block
Wiedemann algorithm~\citep{Coppersmith94}.  Let $\mM$ be in
$\mathbb{K}^{D \times D}$ and $\mU,\mV \in \mathbb{K}^{D \times m}$ be
two blocking matrices for some $m\le D$. We can then define the Krylov
sequence $\seq_{\mU,\mV}=(\seqelt{s,\mU,\mV})_{s \ge 0} \subset
\matSpace[m]$ by
$$\seqelt{s,\mU,\mV} = \mUt \mM^s \mV, \quad s \ge 0.$$ This
sequence is linearly recurrent, since the minimal polynomial of $\mM$
is a scalar relation for it. The following theorem states some useful
properties of any minimal left generator of $\seq_{\mU,\mV}$, with in
particular a bound on its degree, for generic choices of $\mU$ and
$\mV$; we also state properties of the invariant factors of such a
generator.  These results are not new, as all statements can be
found in~\citep{Villard97a} and~\citep{KaVi04} (see also~\citep{Kaltofen95}
for an analysis of the block Wiedemann algorithm). 

We let $s_1, \dots, s_r$ be the nontrivial invariant factors of $T\mI_D
- \mM$, ordered such that $s_i$ divides $s_{i-1}$ for \(2 \le i \le r\), and
let $d_i = $ deg$(s_i)$ for all $i$; for $i > r$, we let $s_i$ = 1,
with $d_i = 0$.  We define $\nu = d_1 + \cdots + d_m \le D$ and
$\delta = \lceil \nu / m \rceil \le \lceil D / m \rceil$.

\begin{theorem}
  \label{randXY}
  For a generic choice of $\mU$ and $\mV$ in $\K^{D \times m}$, the
  following holds.  Let $\mat{P}_{\mU,\mV}$ be a minimal left
  generator for $\seq_{\mU,\mV}$ and denote by $\sigma_1, \dots,
  \sigma_k$ the invariant factors of $\mat{P}_{\mU,\mV}$, for some $k
  \le m$, ordered as above, and write $\sigma_{k+1}=\cdots=\sigma_m=1$.
  Then,
  \begin{itemize}
    \item $\mat{P}_{\mU,\mV}$ is a minimal left generator for the
      matrix sequence $\seqL_{\mU} = (\mUt \mM^s)_{s \ge 0}$;
    \item $\mat{P}_{\mU,\mV}$ has degree $\delta$;
    \item $s_i = \sigma_i$ for $1 \le i \le m$.
  \end{itemize}
\end{theorem}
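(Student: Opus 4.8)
The statement is a "generic choice" result about the minimal left generator $\mat{P}_{\mU,\mV}$ of the Krylov sequence $\seq_{\mU,\mV} = (\mUt \mM^s \mV)_{s\ge0}$. The natural strategy is to first dispose of the case $\mV = \mI_D$ (i.e.\ to analyze the matrix sequence $\seqL_{\mU} = (\mUt \mM^s)_{s\ge0}$), and then transfer to $\seq_{\mU,\mV}$ by showing that post-multiplication by a generic $\mV$ does not change the minimal left generator. Concretely, I would proceed in three stages mirroring the three bullet points, but in the order: (2') analyze $\seqL_{\mU}$, (1') show the generator is unchanged under generic $\mV$, (3') identify the invariant factors.

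\emph{Step 1: the sequence $\seqL_{\mU}$.} Here I would use the classical link between $\seqL_{\mU}$ and the module structure of $\K^D$ as a $\K[T]$-module via $\mM$. A left vector relation $\rel$ for $\seqL_{\mU}$ is a polynomial row vector with $\rel(T)$ acting to kill $\mUt$ composed with all powers of $\mM$; equivalently, writing $\mU = [\bu_1 \mid \cdots \mid \bu_m]$, the $j$th coordinate functional must annihilate the $\K[T]$-submodule generated by the columns of $\mU$ under $\mM$ — no, more precisely, $\rel$ encodes a $\K[T]$-linear combination of the $\bu_j$ (viewed in the dual, or via the non-degenerate pairing) landing in the annihilator. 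The upshot, which is standard (this is exactly the content of \cite{Villard97a,KaVi04}), is that the module of left relations for $\seqL_{\mU}$ is determined by how the columns of $\mU$ generate $\K^D$ as a $\K[T]$-module: for generic $\mU$, the first $m$ "largest" cyclic pieces, of degrees $d_1 \ge \cdots \ge d_m$, are hit, the generator has the $s_i$ as invariant factors for $1\le i\le m$, and hence has degree $\max_i$ (row degrees) $= \delta = \lceil\nu/m\rceil$ once it is put in row reduced form (since $\nu = d_1+\cdots+d_m$ is the sum of the degrees of the invariant factors, which equals the determinantal degree, and a reduced form with $m$ rows and determinantal degree $\nu$ has largest row degree $\lceil \nu/m\rceil$ — this last equality is itself a genericity statement, that the row degrees are as balanced as possible). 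I would cite \cref{lem:module_rank} for linear recurrence and invoke the Smith/Popov normal form theory for the invariant-factor and degree bookkeeping.

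\emph{Step 2: specializing $\mV$.} Given that $\mat{P}_{\mU}$ is a left matrix generator for $\seqL_{\mU}$, it is automatically a left matrix relation for $\seq_{\mU,\mV} = \seqL_{\mU}\,\mV$ for \emph{any} $\mV$, simply because $(\rel_0 \mUt + \cdots + \rel_d \mUt\mM^d)\mV = \mat0$ whenever $\rel_0\mUt + \cdots = \mat 0$. The nontrivial direction is the converse: for generic $\mV$, every left relation for $\seq_{\mU,\mV}$ is already a left relation for $\seqL_{\mU}$. This is where the main obstacle lies. The cleanest route is to observe that the module of left relations for $\seq_{\mU,\mV}$ contains that of $\seqL_{\mU}$, both are free of rank $m$, and the quotient is a torsion module whose size is controlled by how far $\mV$ is from "seeing" the full action — specifically, a relation $\rel$ for $\seq_{\mU,\mV}$ but not for $\seqL_{\mU}$ would force a nonzero polynomial matrix identity $\relbas'(\mM)\mV = 0$ with $\relbas' \ne 0$ of controlled degree, and for $\mV$ outside a hypersurface (depending on $\mM$ and $\mU$) this is impossible. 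I would phrase this as: the determinant of the candidate generator, or a suitable maximal minor, is a fixed nonzero polynomial in the entries of $\mV$, so genericity of $\mV$ finishes it. One must also check that $\mat{P}_{\mU,\mV}$ remains \emph{minimal}, i.e.\ row reduced; but minimality of a generator is intrinsic to the module once we know the module is the same, so this is free. Finally, the invariant factors statement $s_i = \sigma_i$ for $1\le i\le m$ is then just restating Step 1, since the module of left relations (hence its invariant factors, which are those of any generator) is unchanged.

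\emph{Main obstacle and bookkeeping.} The genuinely delicate point is the generic-specialization argument for $\mV$ in Step 2, and dovetailing it with the generic choice of $\mU$ in Step 1 — one needs the combined "bad set" to be a proper hypersurface in $\K^{D\times m} \times \K^{D\times m}$, which is standard but requires checking that the polynomial witnessing non-degeneracy is not identically zero (exhibiting one good pair $(\mU,\mV)$, e.g.\ built from a cyclic-type decomposition of $\mM$, suffices). A secondary fussy point is the equality "largest row degree $= \lceil\nu/m\rceil$" for the reduced form: this needs that for generic blocking the row degrees of $\mat{P}_{\mU,\mV}$ are balanced, which again follows from \cite{Villard97a,KaVi04}. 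Since the paper explicitly says these results are not new, I would keep Steps 1 and 3 at the level of citing \cite{Villard97a} and \cite{KaVi04} and devote the actual writing to the reduction in Step 2 and the verification that the relevant exceptional locus is a proper hypersurface.
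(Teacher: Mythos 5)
The paper does not actually prove \cref{randXY}; it states explicitly that ``these results are not new, as all statements can be found in \citep{Villard97a} and \citep{KaVi04}'' and defers entirely to those references. Your sketch decomposes the argument into the natural sub-claims (analyze $\seqL_{\mU}$, transfer to $\seq_{\mU,\mV}$ by a generic-specialization argument, read off the invariant factors), correctly identifies where the genericity of $\mU$ and $\mV$ enters and why the bad locus must be checked to be a proper hypersurface, and ultimately points to the same two references for the substance — so this is essentially the same approach as the paper, just with more of the underlying module-theoretic scaffolding made visible.
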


In particular, combining \cref{coro:cost_approx} and \cref{randXY}, we
deduce that for generic $\mU,\mV$, given the first $2 \lceil D/m
\rceil+1$ terms of the sequence $\seq_{\mU,\mV}$, we can recover a
minimal matrix generator  $\mat{P}_{\mU,\mV}$ of it using $O(m^{\omega-1} \M(D) \log(D))
\subset \softO{m^{\omega-1} D}$ operations in $\K$.

Besides, the theorem shows that for generic $\mU$ and $\mV$, the
largest invariant factor $\sigma_1$ of $\mat{P}_{\mU,\mV}$ is the
minimal polynomial $\minpoly=s_1$ of $\mM$.  Given $\mat{P}_{\mU,\mV}$,
$\minpoly$ can thus be computed by solving a linear system
$\mat{P}_{\mU,\mV} \, \col{x} = \col{y}$, where $\col{y}$ is a vector of
$m$ randomly chosen elements in $\K$: for a generic choice of
$\col{y}$, the least common multiple of the denominators of the
entries of $\col{x}$ is $\minpoly$.  Thus, both $\minpoly$ and $\col{x}$
can be computed using
high-order lifting \citep[Algorithm~5]{Stor03} on input
$\mat{P}_{\mU,\mV}$ and $\col{y}$; by
\citep[Corollary~16]{Stor03}, this costs
\begin{equation}\label{eqn:hol_cost}
  O(m^{\omega-1} \M(D) \log(D)\log(m)) \subset \softO{m^{\omega-1}D}
  \end{equation}
operations in $\K$.  The latter algorithm is randomized, since it
chooses a random point in $\K$ to compute (parts of) the expansion of
the inverse of $\mat{P}_{\mU,\mV}$. An alternative solution would be
to compute the Smith form of $\mat{P}_{\mU,\mV}$ using an algorithm
such as that in \citep[Section~17]{Stor03}, yet the cost would be
slightly higher on the level of logarithmic factors.


\subsection{Computing a scalar numerator}\label{ssec:scalar_numer}

Let us keep the notation of the previous subsection.  The main
advantage of using the block Wiedemann algorithm is that it allows one
to distribute the bulk of the computation in a straightforward manner:
on a platform with $m$ processors (or cores, \dots), one would
typically compute the $D \times m$ matrices $\mat{L}_s=\mUt\mM^s$
for $s=0,\dots,2\lceil D/m \rceil$ by having each processor compute a
sequence $\row{u}_i \mM^s$, where $\row{u}_i$ is the $i$th row of
$\mUt$. From these values, we then deduce the matrices
$\seqelt{s,\mU,\mV}=\mUt \mM^s \mV = \mat{L}_s\mV$ for
$s=0,\dots,2\lceil D/m \rceil$. Note that in our description, we
assume for simplicity that memory is not an issue, so that we can
store all needed elements from e.g. sequence $\mat{L}_s$; we discuss
this in more detail in \cref{ssec:mainalgo}.

Our main algorithm will also resort to scalar numerators of
the form $\Omega((\row{u}_i \mM^s \col{w})_{s \ge 0},
\minpoly)$, where $\row{w}$ is a given vector in $\K^{D \times 1}$ and
$\minpoly$ is the minimal polynomial of $\mM$. Since $\minpoly$ may
have degree $D$, and then $\Omega((\row{u}_i \mM^s \col{w})_{s \ge 0},
\minpoly)$ itself may have degree $D-1$,
the definition of $\Omega$ suggests that we
may need to compute up to $D$ terms of the sequence $\row{u}_i \mM^s
\col{w}$, which we would of course like to avoid. We now present an
alternative solution which involves solving a univariate polynomial linear system
and computing a matrix numerator, but only uses the sequence elements
$\mat{L}_s= \mUt \mM^s$ for $s=0,\dots,\lceil D/m \rceil-1$
which have already been computed.

Fix $i$ in $1,\dots,m$ and let $\row{a}_i$ be the row vector defined
by $$\row{a}_i =[0~\cdots~0~\minpoly~0~\cdots~0]  (\mat{P}_{\mU,\mV})^{-1} ,$$
where the minimal polynomial $\minpoly$ appears at the $i$th entry  of the
left-hand row vector. 
\begin{lemma}\label{utilde}
  For generic $\mU,\mV$, the row vector $\row{a}_i$ has entries which are
  polynomials of degree at most~$\deg(P) \le D$.
\end{lemma}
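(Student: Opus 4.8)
The plan is to recognize $\row{a}_i$ as (a scalar multiple of) a row of the matrix numerator $\mat{\Omega}(\seqL_{\mU}, \mat{P}_{\mU,\mV})$, and then invoke the degree bound already established for matrix numerators. Concretely, by \cref{randXY}, for generic $\mU,\mV$ the matrix $\mat{P}_{\mU,\mV}$ is a minimal left generator not only for $\seq_{\mU,\mV}$ but also for the sequence $\seqL_{\mU} = (\mUt\mM^s)_{s\ge 0}$; in particular $\mat{P}_{\mU,\mV}$ is a left matrix relation for $\seqL_{\mU}$, and it is nonsingular (its determinant is, up to a unit, the product of its invariant factors, which by \cref{randXY} equals $s_1\cdots s_m\ne 0$). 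Hence the generating series $\seqpm_{\mU} = \sum_{s\ge0}\mUt\mM^s/T^{s+1} \in \field\Poxi^{\rdim\times D}$ admits the matrix-fraction description $\seqpm_{\mU} = (\mat{P}_{\mU,\mV})^{-1}\,\mat{\Omega}(\seqL_{\mU},\mat{P}_{\mU,\mV})$, where the numerator $\mat{\Omega}(\seqL_{\mU},\mat{P}_{\mU,\mV}) = \mat{P}_{\mU,\mV}\,\seqpm_{\mU}$ is a polynomial matrix whose $j$th row has degree less than the $j$th row degree of $\mat{P}_{\mU,\mV}$.

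Next I would relate $\row{a}_i$ to this numerator. Since $\minpoly$ is the minimal polynomial of $\mM$, we have $\minpoly\cdot\seqpm_{\mU} = \minpoly\sum_{s\ge0}\mUt\mM^s/T^{s+1} = \mUt\cdot\bigl(\sum_{s\ge 0}\mM^s/T^{s+1}\bigr)\minpoly(\mM)$-type cancellation — more carefully, $\minpoly(T)\sum_{s\ge0}\mM^s/T^{s+1}$ is exactly the polynomial matrix $\Omega$-style object associated to the matrix sequence $(\mM^s)_{s\ge0}$ with respect to the scalar relation $\minpoly$, hence a polynomial in $T$ with matrix coefficients; call it $\mat{N}(T) := \minpoly\,\seqpm_{\mI}$ where $\seqpm_{\mI}=\sum_{s\ge0}\mM^s/T^{s+1}$, a $D\times D$ polynomial matrix of degree less than $\deg(\minpoly)$. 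Then $\minpoly\,\seqpm_{\mU} = \mUt\,\mat{N}(T)$. Letting $\be_i$ denote the $i$th standard basis row vector in $\K^{1\times m}$, the defining identity gives
\[
  \row{a}_i \;=\; \minpoly\cdot\be_i\,(\mat{P}_{\mU,\mV})^{-1}
  \;=\; \be_i\,(\mat{P}_{\mU,\mV})^{-1}\,(\minpoly\,\mI)
  \;=\; \be_i\,\bigl(\mat{P}_{\mU,\mV}\bigr)^{-1}\bigl(\minpoly\,\mI_{m}\bigr);
\]
combining with $\minpoly\,\seqpm_{\mU}=\mUt\mat{N}(T)$ and the fraction description, one gets $\row{a}_i\,\mUt = \be_i\,\mat{\Omega}(\seqL_{\mU},\mat{P}_{\mU,\mV})$ after multiplying through by $\seqpm_{\mU}$ from the right and matching the polynomial parts. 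Thus $\row{a}_i\,\mUt$ is the $i$th row of the polynomial matrix $\mat{\Omega}(\seqL_{\mU},\mat{P}_{\mU,\mV})$, whose degree is bounded by the $i$th row degree of $\mat{P}_{\mU,\mV}$, itself at most $\deg(\mat{P}_{\mU,\mV})=\delta\le\lceil D/m\rceil\le D$.

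It remains to pass from "$\row{a}_i\,\mUt$ is polynomial of controlled degree" to "$\row{a}_i$ itself is polynomial of degree at most $\deg(\minpoly)$". A priori $\row{a}_i$ lives in $\field(T)^{1\times m}$. The cleanest route avoids $\mUt$ entirely: from $\row{a}_i\,\mat{P}_{\mU,\mV} = \minpoly\,\be_i$ (Cramer), each entry of $\row{a}_i$ is $\minpoly$ times a cofactor of $\mat{P}_{\mU,\mV}$ divided by $\det(\mat{P}_{\mU,\mV})$; since $\det(\mat{P}_{\mU,\mV}) = s_1\cdots s_m$ divides $s_1^{\,m}=\minpoly^{\,m}$ only in the sense of degree, this bound is too weak. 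Instead I would argue: the denominators of the entries of $\row{a}_i=\be_i(\mat{P}_{\mU,\mV})^{-1}(\minpoly\mI)$ all divide $\det(\mat{P}_{\mU,\mV})$, but simultaneously $\row{a}_i$ cancels the sequence $\seqL_{\mU}$ scaled appropriately — more directly, since $\mat{P}_{\mU,\mV}$ is a minimal \emph{generator}, every vector relation for $\seqL_{\mU}$ (in particular every polynomial combination produced by the generator) factors through $\mat{P}_{\mU,\mV}$ with polynomial cofactors; applying this to the relation $\minpoly\,\be_i$ of $\seqL_{\mU}$ (it is one, because $\minpoly$ annihilates $\mM$, hence $\minpoly(T)$ times any row is a relation — equivalently $\minpoly\,\be_i\,\seqelt{s,\mU}$... ) yields $\minpoly\,\be_i = \row{a}_i\,\mat{P}_{\mU,\mV}$ with $\row{a}_i$ forced to be a polynomial vector. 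Its degree is then $\deg(\minpoly\,\be_i) - (\text{min row degree of }\mat{P}_{\mU,\mV}) \le \deg(\minpoly)$, using that $\mat{P}_{\mU,\mV}$ is row reduced so that predictable-degree division applies.

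\textbf{Main obstacle.} The delicate point is the generator/relation factorization step: one must verify that $\minpoly\,\be_i$ really is a left vector relation for $\seqL_{\mU}$ (so that it lies in the row span of $\mat{P}_{\mU,\mV}$ over $\polRing$), and then invoke the predictable-degree property of the row-reduced generator $\mat{P}_{\mU,\mV}$ to get the clean bound $\deg(\row{a}_i)\le\deg(\minpoly)$ rather than the crude $\deg(\det(\mat{P}_{\mU,\mV}))$. The genericity hypothesis enters precisely through \cref{randXY}, which guarantees $\mat{P}_{\mU,\mV}$ has the stated invariant factors and, crucially, is a generator for $\seqL_{\mU}$ and not merely for $\seq_{\mU,\mV}$; without that, $\minpoly\,\be_i$ need not lie in its row span.
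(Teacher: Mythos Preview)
Your eventual argument is correct and is essentially the paper's proof, though you arrive at it after an unnecessary detour. The paper's proof is two lines: (i) for generic $\mU,\mV$, $\minpoly$ is the largest invariant factor of $\mat{P}_{\mU,\mV}$, hence the whole matrix $\minpoly\,(\mat{P}_{\mU,\mV})^{-1}$ has polynomial entries (this is the standard Smith-form fact that the largest invariant factor clears all denominators of the inverse), and $\row{a}_i$ is its $i$th row; (ii) since $\mat{P}_{\mU,\mV}$ is row reduced, the predictable-degree property gives $\deg(\row{a}_i)\le\deg(\row{a}_i\mat{P}_{\mU,\mV})=\deg(\minpoly)$.

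Your final route to polynomiality---observing that $\minpoly\,\be_i$ is a left relation for $\seqL_{\mU}$ (because $\minpoly$ annihilates $\mM$) and that $\mat{P}_{\mU,\mV}$ is, by \cref{randXY}, a \emph{basis} of such relations, forcing the coordinate vector $\row{a}_i$ to be polynomial---is a perfectly valid alternative to the invariant-factor argument, and your use of the predictable-degree property for the degree bound matches the paper exactly. However, everything in your first two paragraphs (the attempt to identify $\row{a}_i\mUt$ with a row of $\mat{\Omega}(\seqL_{\mU},\mat{P}_{\mU,\mV})$, the computation with $\mat{N}(T)$, the Cramer detour) is unnecessary and should be cut: you yourself abandon that line, and it never yields the claim. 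The clean proof is just the last two sentences of your third paragraph together with the predictable-degree step.
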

\begin{proof}
  For generic $\mU,\mV$, we saw that $\minpoly$ is the largest invariant factor
  of $ \mat{P}_{\mU,\mV}$; thus, the product $\minpoly\
  (\mat{P}_{\mU,\mV})^{-1}$ has polynomial entries. Since $\row{a}_i$ the $i$th
  row of this matrix, $\row{a}_i$ has polynomial entries.  Now, since
  $\mat{P}_{\mU,\mV}$ is reduced, the predictable degree property
  \citep[Theorem~6.3-13]{Kailath80} holds; it implies that each entry of
  $\row{a}_i$ has degree at most the maximum of the degrees of the entries of
  $\row{a}_i\mat{P}_{\mU,\mV}$. This maximum is $\deg(\minpoly)$.
\end{proof}
To compute $\row{a}_i$, we use again Storjohann's high-order lifting;
according to \cref{eqn:hol_cost}, the cost is $ O(m^{\omega-1} \M(D)
\log(D) \log(m)) \subset \softO{m^{\omega-1}D}$ operations in $\K$.
Once $\row{a}_i$ is known, the following lemma shows that we can
recover the scalar numerator $\Omega((\row{u}_i \mM^s \col{w})_{s \ge
0}, \minpoly)$ as a dot product.
\begin{lemma}\label{lemma:omegaOmega}
  For a generic choice of $\mU$ and $\mV$, and for any $\col{w}$ in
  $\K^{D \times 1}$, $ \mat{P}_{\mU,\mV}$ is a nonsingular matrix
  relation for the sequence $\mat{\mathcal{E}} =
  (\col{e}_s)_{s\ge0}$, with $\col{e}_s=\mUt \mM^s\, \col{w}$
  for all $s$, and we have
  \begin{align}\label{eq:OmegaOmega}
    [~\Omega((\row{u}_i \mM^s \col{w})_{s \ge 0}, \minpoly)~] = \row{a}_i\cdot \mat{\Omega}(\mat{\mathcal{E}}, \mat{P}_{\mU,\mV})
  \end{align}
  with $\row{a}_i$ in $\K[\var]^{1 \times m}$ and 
  $\mat{\Omega}(\mat{\mathcal{E}} , \mat{P}_{\mU,\mV}) \in \K[\var]^{m \times 1}$.
\end{lemma}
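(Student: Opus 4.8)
The plan is to work with generating series in $1/T$ throughout and to reduce the statement to the matrix identity $\sum_{s\ge0}\mM^s/T^{s+1}=(T\mI_D-\mM)^{-1}$, which holds in $\K\Poxi^{D\times D}$.

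First I would verify that $\mat{P}_{\mU,\mV}$ is a nonsingular left matrix relation for $\mat{\mathcal{E}}$. By \cref{randXY}, for generic $\mU,\mV$ the matrix $\mat{P}_{\mU,\mV}$ is a minimal left generator — hence in particular a left matrix relation — for $\seqL_{\mU}=(\mUt\mM^s)_{s\ge0}$. Now if $\rel=\row{p}_0+\cdots+\row{p}_\degBd T^\degBd$ is a left vector relation for $\seqL_{\mU}$, so that $\sum_{k}\row{p}_k\mUt\mM^{s+k}=\row{0}$ for all $s$, then right-multiplying by $\col{w}$ yields $\sum_{k}\row{p}_k\col{e}_{s+k}=\row{0}$ for all $s$, i.e.\ $\rel$ is a left vector relation for $\mat{\mathcal{E}}$. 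Applying this to each row of $\mat{P}_{\mU,\mV}$ shows it is a left matrix relation for $\mat{\mathcal{E}}$; it is nonsingular since, being a minimal generator, its rows form a basis of a $\K[T]$-module of rank $m$ (\cref{dfn:recurrence_relation,dfn:matrix_generator}).

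Next I would compute $\mat{\Omega}(\mat{\mathcal{E}},\mat{P}_{\mU,\mV})$ in closed form. By definition it is $\mat{P}_{\mU,\mV}\genseries_{\mathcal{E}}$, where $\genseries_{\mathcal{E}}=\sum_{s\ge0}\col{e}_s/T^{s+1}=\mUt\bigl(\sum_{s\ge0}\mM^s/T^{s+1}\bigr)\col{w}=\mUt(T\mI_D-\mM)^{-1}\col{w}$; hence $\mat{\Omega}(\mat{\mathcal{E}},\mat{P}_{\mU,\mV})=\mat{P}_{\mU,\mV}\mUt(T\mI_D-\mM)^{-1}\col{w}$, which is a polynomial vector by the earlier proposition expressing matrix relations as denominators of the generating series. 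Left-multiplying by $\row{a}_i$ and using $\row{a}_i\mat{P}_{\mU,\mV}=[0~\cdots~0~\minpoly~0~\cdots~0]$ (with $\minpoly$ in position $i$) together with the fact that $\row{u}_i$ is the $i$th row of $\mUt$, I obtain $\row{a}_i\mat{P}_{\mU,\mV}\mUt=\minpoly\,\row{u}_i$, and therefore $\row{a}_i\cdot\mat{\Omega}(\mat{\mathcal{E}},\mat{P}_{\mU,\mV})=\minpoly\,\row{u}_i(T\mI_D-\mM)^{-1}\col{w}$.

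Finally I would identify this last expression. Expanding once more, $\row{u}_i(T\mI_D-\mM)^{-1}\col{w}=\sum_{s\ge0}(\row{u}_i\mM^s\col{w})/T^{s+1}$ is exactly the $1/T$-generating series of the scalar sequence $(\row{u}_i\mM^s\col{w})_{s\ge0}$, and $\minpoly$ cancels that sequence because $\minpoly(\mM)=\mat{0}$; so by \cref{def:omega} the product equals $\Omega((\row{u}_i\mM^s\col{w})_{s\ge0},\minpoly)$, which establishes \eqref{eq:OmegaOmega}. The argument is essentially bookkeeping across scalar, matrix, and series objects, and I do not expect a genuine obstacle — the only non-formal ingredients are the genericity facts already in hand, namely that $\mat{P}_{\mU,\mV}$ generates $\seqL_{\mU}$ with largest invariant factor $\minpoly$ (\cref{randXY}) and that $\row{a}_i$ has polynomial entries (\cref{utilde}).
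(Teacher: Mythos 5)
Your argument is correct and follows essentially the same route as the paper's proof: both invoke \cref{randXY} to get that $\mat{P}_{\mU,\mV}$ cancels $(\mUt\mM^s)_{s\ge0}$ and hence $\mat{\mathcal{E}}$, and both establish \cref{eq:OmegaOmega} by inserting $(\mat{P}_{\mU,\mV})^{-1}\mat{P}_{\mU,\mV}$ between $\minpoly$ and the generating series and regrouping via the definitions of $\row{a}_i$ and $\mat{\Omega}$. You read the chain from right to left and make a couple of implicit steps explicit (that $\minpoly$ cancels the scalar sequence because $\minpoly(\mM)=\mathbf{0}$, and the closed form $(T\mI_D-\mM)^{-1}$ for the series), but these are presentational differences, not a different proof.
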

\begin{proof}
  The first item in \cref{randXY} shows that for a generic choice of
  $\mU$ and $\mV$, $\mat{P}_{\mU,\mV}$ cancels the sequence $(\mUt
  \mM^s)_{s \ge 0}$, and thus the sequence $\mat{\mathcal{E}}$ as well;
  this proves the first point. Then, the equality
  in \cref{eq:OmegaOmega} directly follows from the definitions:
  \begin{align*}
    [~ \Omega((\row{u}_i \mM^s \col{w})_{s \ge 0}, \minpoly)~]  &= [~\minpoly~]\ \sum_{s \ge 0} \frac{\row{u}_i \mM^s \col{w}}{T^{s+1}}\\
                                                                &=  [~0~\cdots~0~\minpoly~0~\cdots~0~]\  \sum_{s \ge 0} \frac{\mUt \mM^s \col{w}}{T^{s+1}}\\
                                                                &=  [~0~\cdots~0~\minpoly~0~\cdots~0~]\ (\mat{P}_{\mU,\mV})^{-1} \mat{P}_{\mU,\mV} \sum_{s \ge 0} \frac{\mUt \mM^s \col{w}}{T^{s+1}}\\
                                                                &=  \row{a}_i\cdot \mat{\Omega}(\mat{\mathcal{E}}, \mat{P}_{\mU,\mV}).
                                                                \qedhere
  \end{align*}
\end{proof}

The algorithm to obtain the scalar numerator $\Omega((\row{u}_i \mM^s
\col{w})_{s \ge 0}, \minpoly)$ follows; in the algorithm, we
assume that we know $\mat{P}_{\mU,\mV}$, $P$, $\row{a}_i$ and the
matrices $\mat{L}_s=\mUt \mM^s \in \K^{m \times D}$ for
$s=0,\dots,\lceil D/m \rceil-1$.

\begin{algorithm}[ht]
  \caption{{\sf ScalarNumerator}($\mat{P}_{\mU,\mV}, \minpoly, \row{w}, i, \row{a}_i,(\mat{L}_s)_{0 \le s < \lceil D/m\rceil}$)}
  {\bf Input:} \vspace{-0.5em}
  \begin{itemize}
    \item a minimal generator $\mat{P}_{\mU,\mV}$ of $(\mUt \mM^s \mV)_{s \ge 0}$
    \item the minimal polynomial $P$ of $\mM$
    \item $\row{w}$ in $\K^{D \times 1}$
    \item $i$ in $\{1,\dots,m\}$
    \item $\row{a}_i =  [0~\cdots~0~\minpoly~0~\cdots~0]  (\mat{P}_{\mU,\mV})^{-1} \in \K[T]^{1\times m}$
      where $\minpoly$ appears at the $i$th entry
    \item $\mat{L}_s = \mUt \mM^s \in \K^{m\times D}$, for $s=0,\dots,\lceil D/m\rceil-1$
  \end{itemize}
  {\bf Output:}  \vspace{-0.5em}
  \begin{itemize}
    \item         the scalar numerator $\Omega((\row{u}_i \mM^s \col{w})_{s \ge 0}, \minpoly) \in \K[T]$
  \end{itemize}
  \begin{enumerate}
    \item compute $\col{E}_s = \mat{L}_s \col{w} \in \K^{m \times 1}$ for $s=0,\dots,\lceil D/m\rceil-1$
    \item use these values to compute the matrix numerator $ \mat{\Omega}(\mat{\mathcal{E}}, \mat{P}_{\mU,\mV}) \in \K[T]^{m \times 1}$ 
      by \cref{eq:computeOmega}
    \item {\bf return} the entry of the $1 \times 1$ matrix $\row{a}_i\cdot  \mat{\Omega}(\mat{\mathcal{E}}, \mat{P}_{\mU,\mV})$ 
  \end{enumerate}
  \label{algo:scalar_numerator}
\end{algorithm}

Computing the first $\lceil D/m \rceil$ values of the sequence
$\mat{\mathcal{E}}=(\col{e}_s)_{s \ge 0}$ is done by using the
equality $\col{E}_s = \mat{L}_s \col{w}$ and takes $O(D^2)$ field
operations. Then, applying the cost estimate given after
\cref{eq:computeOmega}, we see that we have enough terms to compute
$\mat{\Omega}(\mat{\mathcal{E}}, \mat{P}_{\mU,\mV}) \in \K[T]^{m
\times 1}$ and that it takes $O(m^2 \M(D/m)) \subset O(m \M(D))$ operations in
$\K$. Then, the dot product with $\row{a}_i$ takes $O(m \M(D))$
operations, since both vectors have size $m$ and entries of degree at
most $D$. Thus, the runtime is 
\(O(D^2 + m \M(D)) \subset \softO{D^2}\)
operations in $\K$.

\section{Sequences associated to a zero-dimensional ideal}\label{sec:seq0}

We now focus on our main question: computing a zero-dimensional
parametrization of an algebraic set of the form $V=V(I)$, for some
zero-dimensional ideal $I$ in $\K[X_1,\dots,X_n]$. We write
$V=\{\balpha_1,\dots,\balpha_\dg\},$ with
$\balpha_i=(\alpha_{i,1},\dots,\alpha_{i,n}) \in \Kbar{}^n$ for all
$i$.  We also let $\D$ be the dimension of
$\residueI=\K[X_1,\dots,X_n]/I$, so that $\dg \le \D$, and {\em we
  assume that ${\rm char}(\K)$ is greater than $D$}.

In this section, we recall and expand on results from the appendix
of~\citep{BoSaSc03}, with the objective of computing a zero-dimensional
parametrization of $V$. These results were themselves inspired by 
those in~\citep{Rouillier99}, the latter being devoted to computations
with the trace linear form $\mathrm{tr}: \residueI\to\K$.
At this stage, we do not discuss data structures or complexity (this
is the subject of the next section); the algorithm in this
section simply describes what polynomials should be computed in order
to obtain a zero-dimensional parametrization.

\subsection{The structure of the dual}\label{ssec:dual}

For $i$ in $\{1,\dots,\dg\}$, let $\residueI_i$ be the local algebra at
$\balpha_i$, that is $\residueI_i=\Kbar[X_1,\dots,X_n]/I_i$, with $I_i$ the
$\m_{\balpha_i}$-primary component of $I$. By the Chinese Remainder
Theorem, $\residueI\otimes_\K \Kbar=\Kbar[X_1,\dots,X_n]/I$ is isomorphic to
the direct product $\residueI_1\times \cdots \times \residueI_\dg$.  We let $N_i$ be the
{\em nil-index} of $\residueI_i$, that is, the maximal integer $N$ such that
$\m_{\alpha_i}^N$ is not contained in $I_i$; for instance, $N_i=0$ if
and only if $\residueI_i$ is a field, if and only if $\balpha_i$ is a
nonsingular root of $I$. We also let $\D_i=\dim_\Kbar(\residueI_i)$, so that
we have $D_i \ge N_i$ and $\D=\D_1 + \cdots + \D_\dg$.

The sequences we consider below are of the form $(\ell(\lf^s))_{s \ge
0}$, for $\ell$ a $\K$-linear form $\residueI \to \K$ and $\lf$ in $\residueI$
(we will often write $\ell \in {\rm hom}_\K(\residueI,\K)$). For
such sequences, the following standard result will be useful
(see e.g.~\citep[Propositions~1 \& 2]{BoSaSc03} for a proof).
\begin{lemma}\label{lemma:minpoly}
  Let $\lf$ be in $\residueI$ and let $P \in \K[T]$ be its minimal
  polynomial. For a generic choice of $\ell$ in ${\rm hom}_\K(\residueI,\K)$,
  $P$ is the minimal polynomial of the sequence $(\ell(\lf^s))_{s \ge
  0}$.
\end{lemma}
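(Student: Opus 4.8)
The plan is to use the Chinese Remainder decomposition $\residueI \otimes_\K \Kbar \cong \residueI_1 \times \cdots \times \residueI_\dg$ introduced above, together with the Jordan-type structure of multiplication by $\lf$ on each local factor, to reduce the claim to a statement about the generic behavior of a linear form on a direct sum of local algebras. First I would recall that the minimal polynomial $P$ of $\lf$ in $\residueI$ factors as $P = \prod_{j=1}^{c} (T - \beta_j)^{e_j}$ over $\Kbar$, where the $\beta_j$ are the distinct values taken by $\lf$ at the points of $V$ (after clustering the local algebras by this value), and $e_j$ is the largest nil-index among the local factors on which $\lf - \beta_j$ is nilpotent — equivalently, the size of the largest Jordan block of the multiplication-by-$\lf$ operator $M_\lf$ with eigenvalue $\beta_j$. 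This is the standard structure statement, proved e.g. in \citep[Propositions~1~\&~2]{BoSaSc03}, which I may invoke.

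Next I would observe that for any $\ell \in {\rm hom}_\K(\residueI,\K)$, the polynomial $P$ always cancels the sequence $(\ell(\lf^s))_{s\ge0}$, since $P(M_\lf) = 0$ as an operator on $\residueI$; hence the minimal polynomial $P_\ell$ of the sequence always divides $P$. So it suffices to show that, for a generic choice of $\ell$, no proper divisor of $P$ cancels the sequence; by the factorization of $P$, it is enough to show that for each root $\beta_j$, the polynomial $(P/(T-\beta_j))$ does not cancel the sequence, and more precisely that dropping the multiplicity $e_j$ down to $e_j - 1$ fails. Concretely, writing $\genseries = \sum_{s\ge0}\ell(\lf^s)/T^{s+1}$, one has that $P\,\genseries$ is a polynomial, and $P_\ell$ is the unique monic polynomial such that $P_\ell\,\genseries$ is a polynomial coprime to $P_\ell$ (i.e. the true denominator of the rational function $\genseries$ in lowest terms). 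So $P_\ell = P$ if and only if, in the partial-fraction expansion of $\genseries = P\genseries / P$, the numerator $P\genseries$ is coprime to $P$, which amounts to the nonvanishing, for each $j$, of the leading coefficient of the order-$e_j$ pole at $\beta_j$.

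The heart of the argument is then to compute these pole coefficients and see that each is a nonzero $\K$-linear functional of $\ell$. Passing to a basis of $\residueI\otimes\Kbar$ adapted to the CRT decomposition and to the Jordan form of each $M_\lf$ restricted to $\residueI_j$, one writes $\ell(\lf^s)$ as a $\Kbar$-linear combination of terms $\binom{s}{t}\beta_j^{\,s-t}$ for $0 \le t \le e_j - 1$, where the coefficient of the top term $t = e_j - 1$ at $\beta_j$ is, up to a nonzero constant, the value of $\ell$ on a specific basis vector, namely (the image in $\residueI$ of) a generator of the socle-direction in the largest Jordan block with eigenvalue $\beta_j$. Equivalently, the coefficient of the highest-order pole $(T-\beta_j)^{-e_j}$ in $\genseries$ is $\ell(w_j)$ for a fixed nonzero $w_j \in \residueI$ (the class of $(\lf-\beta_j)^{e_j-1}$ times an appropriate idempotent). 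Each condition ``$\ell(w_j)\ne0$'' cuts out a nonempty Zariski-open subset of ${\rm hom}_\K(\residueI,\K) \cong \K^D$ — nonempty because $w_j \ne 0$ — and a finite intersection of nonempty Zariski-open sets is nonempty (indeed dense). On that intersection, $P\genseries$ is coprime to $P$, so $P_\ell = P$, which is exactly the claim.

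The step I expect to be the main obstacle is making the bookkeeping of the Jordan/pole structure fully precise over $\Kbar$ — in particular identifying, independently of $\ell$, the fixed nonzero vectors $w_j \in \residueI$ whose functionals $\ell \mapsto \ell(w_j)$ must be nonzero, and checking that the resulting genericity condition is genuinely a single hypersurface complement in $\K^D$ (so that it is defined over $\K$, not just over $\Kbar$). Since $P$ has coefficients in $\K$ and the pole-coefficient conditions are permuted by ${\rm Gal}(\Kbar/\K)$ along with the $\beta_j$, their product is a $\K$-rational polynomial in the entries of $\ell$, and the non-emptiness over $\Kbar$ established above guarantees this polynomial is nonzero; this is the point that needs to be spelled out carefully but presents no real difficulty. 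Everything else is the routine partial-fraction bookkeeping that I would not grind through here, deferring instead to \citep[Propositions~1~\&~2]{BoSaSc03}.
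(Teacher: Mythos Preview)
Your proposal is correct and follows the natural approach: show that $P$ always cancels the sequence, then use the partial-fraction structure of the generating series $\genseries$ to identify the coefficient of each top-order pole $(T-\beta_j)^{-e_j}$ as a fixed nonzero linear functional $\ell\mapsto\ell(w_j)$, so that the genericity condition is the complement of finitely many hyperplanes (descended to $\K$ via Galois invariance). One small terminological slip: what you call the ``nil-index'' of the local factor should really be the nilpotency order of $\lf-\beta_j$ in that factor (equivalently the Jordan block size you mention), not the nil-index $N_i$ of the maximal ideal as defined in the paper; but your Jordan-block formulation is the correct one and nothing in the argument hinges on this.

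As for comparison: the paper does not prove this lemma at all --- it simply cites \citep[Propositions~1~\&~2]{BoSaSc03}. Your sketch is therefore strictly more than what the paper provides, and since you defer to the same reference at the end, there is no divergence in approach to speak of. Your write-up is essentially an unpacking of that citation.
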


The following results are classical; they go back
to~\citep{Macaulay16}, and have been used in computational algebra
since the 1990's~\citep{MaMoMo96,Mourrain97}. Fix $i$ in $1,\dots,\dg$.
There exists a basis of the dual ${\rm hom}_\Kbar(\residueI_i,\Kbar)$
consisting of linear forms $(\lambda_{i,j})_{1\le j \le \D_i}$ of the
form
$$\lambda_{i,j}: f \mapsto (\Lambda_{i,j}(f))(\balpha_i),$$
where $\Lambda_{i,j}$ is the operator
$$f \mapsto \Lambda_{i,j}(f) = \sum_{\mu=(\mu_1,\dots,\mu_n) \in
S_{i,j}} c_{i,j,\mu} \frac{ \partial^{\mu_1 + \cdots + \mu_n} f}
{\partial X_1^{\mu_1} \cdots \partial X_n^{\mu_n}},$$ for some finite
subset $S_{i,j}$ of $\N^n$ and nonzero constants $c_{i,j,\mu}$ in
$\Kbar$. 
For instance, when $\balpha_i$ is nonsingular, we have $D_i=1$, so
there is only one function $\lambda_{i,j}$, namely $\lambda_{i,1}$; we
write it $\lambda_{i,1}(f) = f(\balpha_i)$.
More generally, we can always take $\lambda_{i,1}$ of the form
$\lambda_{i,1}(f) = f(\balpha_i)$; for $j>1$, we can then also assume
that $S_{i,j}$ does not contain $\mu=(0,\dots,0)$ (that is, all terms
in $\Lambda_{i,j}$ have order $1$ or more). Thus, introducing new
variables $(U_{i,j})_{j =1,\dots,D_i}$, we deduce the existence of
nonzero homogeneous linear forms $P_{i,\mu}$ in
$(U_{i,j})_{j=1,\dots,D_i}$ such that for any $\ell$ in ${\rm
hom}_\Kbar(\residueI_i,\Kbar)$, there exists $\bu_i=(u_{i,j}) \in
\Kbar{}^{D_i}$ such that we have
\begin{align}\label{ell_param}
  \ell: f \mapsto \ell(f)
&= \sum_{j=1}^{D_i} u_{i,j} \lambda_{i,j}(f)\nonumber\\
&= \sum_{j=1}^{D_i} u_{i,j} \big(\Lambda_{i,j}(f)\big)(\balpha_i)\nonumber\\
&= \sum_{j=1}^{D_i} u_{i,j}
\sum_{\mu=(\mu_1,\dots,\mu_n) \in
S_{i,j}} c_{i,j,\mu} \frac{ \partial^{\mu_1 + \cdots + \mu_n} f}
{\partial X_1^{\mu_1} \cdots \partial X_n^{\mu_n}}(\balpha_i)\nonumber\\
&= \sum_{\mu=(\mu_1,\dots,\mu_n) \in S_i} P_{i,\mu}(\bu_i)
\frac{ \partial^{\mu_1 + \cdots + \mu_n} f}
{\partial X_1^{\mu_1} \cdots \partial X_n^{\mu_n}}(\balpha_i),
\end{align}
where $S_i$ is  the union of $S_{i,1},\dots,S_{i,D_i}$,
with in particular $P_{i,(0,\dots,0)}=U_{i,1}$ and where $P_{i,\mu}$
depends only on $(U_{i,j})_{j =2,\dots,D_i}$ for all $\mu$ in $S_i$,
$\mu \ne (0,\dots,0)$. Explicitly, we can write $P_{i,\mu}=\sum_{j\in
\{1,\dots,D_i\} \;\mid\; \mu \in S_{i,j}} c_{i,j,\mu}
U_{i,j}$. 

Fix $\ell$ nonzero in ${\rm hom}_\Kbar(\residueI_i,\Kbar)$, written
as in \cref{ell_param}. We can then
define its {\em order} $w$ and {\em symbol} $\pi$. The former is
the maximum of all $|\mu|=\mu_1+\cdots+\mu_n$ for
$\mu=(\mu_1,\dots,\mu_n)$ in $S_i$ such that $P_{i,\mu}(\bu_i)$ is
nonzero; by~\citep[Lemma~3.3]{Mourrain97} we have $w \le
N_i-1$. Then, we let
$$\pi =\sum_{\mu \in S_i,\ |\mu|=w}P_{i,\mu}(\bu_i) X_1^{\mu_1} \cdots
X_n^{\mu_n}$$ be the {\em symbol} of $\ell$; by construction,
this is a nonzero polynomial. 

Finally, we say a word about global objects.  Fix a linear form $\ell:
\residueI \to \K$. By the Chinese Remainder Theorem, there exist unique
$\ell_1,\dots,\ell_\dg$, with $\ell_i$ in ${\rm hom}_\Kbar(\residueI_i,\Kbar)$
for all $i$, such that the extension $\ell_\Kbar: \residueI\otimes_\K \Kbar
\to \Kbar$ decomposes as $\ell_\Kbar = \ell_1 + \cdots + \ell_\dg$.
Note that formally, we should write 
$\ell_\Kbar = \ell_1 \circ \phi_1 + \cdots + \ell_\dg \circ \phi_\dg$,
where for all $i$, $\phi_i$ is the canonical projection $\residueI \to \residueI_i$;
we will however omit these projection operators for simplicity.

We call {\em support} of $\ell$ the subset $\mathfrak{S}$ of
$\{1,\dots,\dg\}$ such that $\ell_i$ is nonzero exactly for $i$ in
$\mathfrak{S}$.  As a consequence, for all $f$ in $\residueI$, we have
\begin{equation}\label{eq:fui}
  \ell(f) = \ell_1(f) + \cdots + \ell_\dg(f)
  =  \sum_{i \in \mathfrak{S}} \ell_i(f).
\end{equation}
For $i$ in $\mathfrak{S}$, we denote by $w_i$ and $\pi_i$ respectively
the order and the symbol of $\ell_i$. For such a subset $\mathfrak{S}$
of $\{1,\dots,\dg\}$, we also write $\residueI_\mathfrak{S}=\prod_{i
  \in \mathfrak{S}} \residueI_i$ and $V_\mathfrak{S}= \{\balpha_i \mid
i \in \mathfrak{S}\}$.

\subsection{A fundamental formula}  \label{ssec:genseries}

Let $\lf$ be in $\residueI$ and $\ell$ in ${\rm hom}_\K(\residueI,\K)$.  The
sequences $(\ell(\lf^s))_{s\ge 0}$, and more generally the sequences $(\ell(v
\lf^s))_{s\ge 0}$ for $v$ in $\residueI$, are the core ingredients of our
algorithm.  This is justified by the following lemma, which gives a
description of generating series of the form $\sum_{s \ge 0} \ell(v
\lf^s)/T^{s+1}$. A slightly less precise version of it is in~\citep{BoSaSc03};
the more explicit expression given here will be needed in the last section of
this paper.

\begin{lemma}\label{lemma:formula}
  Let $\ell$ be in ${\rm hom}_\K(\residueI,\K)$, with support $\mathfrak{S}$,
  and let $\{\pi_i \mid i \in \mathfrak{S}\}$ and $\{w_i \mid i \in
  \mathfrak{S}\}$ be the symbols and orders of $\{\ell_i \mid i \in \mathfrak{S}\}$,
  for $\{\ell_i \mid i \in \mathfrak{S}\}$ as in \cref{ssec:dual}.

  Let $\lf=t_1 X_1 + \cdots +t_n X_n$, for some $t_1,\dots,t_n$ in $\K$
  and let $v$ be in $\K[X_1,\dots,X_n]$. Then, we have the equality
  \begin{align}\label{eq:sumgenseries}
    \sum_{s \ge 0} \frac{\ell(v \lf^s)}{T^{s+1}} = \sum_{i \in \mathfrak{S}}
    \frac{ v(\balpha_i)\, w_i!\, \pi_{i}(t_1,\dots,t_n) +
    (T-\lf(\balpha_i))A_{v,i}} {(T-\lf(\balpha_i))^{w_{i}+1}},
  \end{align}
  for some polynomials $\{A_{v,i} \in \Kbar[T] \mid i \in \mathfrak{S}\}$ which
  depend on the choice of $v$ and are such that $A_{v,i}$ has degree less than
  $w_i$ for all $i$ in $\mathfrak{S}$.
\end{lemma}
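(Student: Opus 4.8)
The plan is to reduce the claim to a one-point computation: by~\cref{eq:fui} the left-hand side is a finite sum $\sum_{i\in\mathfrak{S}}\sum_{s\ge0}\ell_i(v\lf^s)/T^{s+1}$, so it suffices to establish, for a fixed $i\in\mathfrak{S}$, the identity
\[
  \sum_{s\ge0}\frac{\ell_i(v\lf^s)}{T^{s+1}}
  = \frac{v(\balpha_i)\,w_i!\,\pi_i(t_1,\dots,t_n) + (T-\lf(\balpha_i))A_{v,i}}{(T-\lf(\balpha_i))^{w_i+1}}
\]
for some $A_{v,i}\in\Kbar[T]$ of degree less than $w_i$. First I would note that multiplication by $\lf$ is a nilpotent-plus-scalar operator on $\residueI_i$: writing $\beta_i=\lf(\balpha_i)$, the element $\lf-\beta_i$ lies in $\m_{\balpha_i}$, hence is nilpotent in $\residueI_i$ with nilpotency index at most $N_i$. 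Therefore $\ell_i(v\lf^s)$ is, as a function of $s$, a linear combination of terms $\binom{s}{k}\beta_i^{s-k}$ for $0\le k\le N_i-1$, which forces the generating series $\sum_s\ell_i(v\lf^s)/T^{s+1}$ to be a rational function with a single pole at $T=\beta_i$ of order at most $N_i$; so the only real content is (a) that the pole order is in fact at most $w_i+1$, and (b) the exact value of the leading Laurent coefficient at that pole.

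The key computation is to expand $\lf^s$ around $\balpha_i$. Using the binomial expansion $\lf^s=(\beta_i+(\lf-\beta_i))^s=\sum_{k\ge0}\binom{s}{k}\beta_i^{s-k}(\lf-\beta_i)^k$ (a finite sum since $(\lf-\beta_i)^k=0$ in $\residueI_i$ for $k\ge N_i$), I get $\ell_i(v\lf^s)=\sum_{k\ge0}\binom{s}{k}\beta_i^{s-k}\,\ell_i\!\big(v(\lf-\beta_i)^k\big)$. Summing the standard identity $\sum_{s\ge0}\binom{s}{k}\beta_i^{s-k}/T^{s+1}=1/(T-\beta_i)^{k+1}$ over $k$ then yields
\[
  \sum_{s\ge0}\frac{\ell_i(v\lf^s)}{T^{s+1}}
  = \sum_{k\ge0}\frac{\ell_i\!\big(v(\lf-\beta_i)^k\big)}{(T-\beta_i)^{k+1}}.
\]
Now I invoke the description of $\ell_i$ from~\cref{ssec:dual}: $\ell_i$ is a combination of differential functionals evaluated at $\balpha_i$, all of order $\le w_i$ after removing the constant part, and more precisely, by definition of the order $w_i$ and symbol $\pi_i$, the functional $\ell_i$ kills every element of $\m_{\balpha_i}^{w_i+1}$, while on $\m_{\balpha_i}^{w_i}$ it acts (up to the lower-order part) through the homogeneous form of degree $w_i$ obtained from $\pi_i$. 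Consequently $\ell_i(v(\lf-\beta_i)^k)=0$ for $k\ge w_i+1$, which bounds the pole order by $w_i+1$, and the residue-type coefficient at $k=w_i$ is $\ell_i\!\big(v(\lf-\beta_i)^{w_i}\big)$. The final arithmetic step is to identify this last quantity with $v(\balpha_i)\,w_i!\,\pi_i(t_1,\dots,t_n)$: since $\lf-\beta_i=\sum_j t_j(X_j-\alpha_{i,j})$, the element $(\lf-\beta_i)^{w_i}$ expands by the multinomial theorem into $\sum_{|\mu|=w_i}\binom{w_i}{\mu}t^\mu\prod_j(X_j-\alpha_{i,j})^{\mu_j}$ plus higher-order terms in $\m_{\balpha_i}^{w_i+1}$ (which $\ell_i$ annihilates), and applying the top-order part of $\ell_i$ (the operator $\sum_{|\mu|=w_i}P_{i,\mu}(\bu_i)\,\partial^\mu/\mu!$ in the notation of~\cref{ssec:dual}) to $v\cdot\prod_j(X_j-\alpha_{i,j})^{\mu_j}$ and evaluating at $\balpha_i$ leaves exactly $\mu!\,v(\balpha_i)$ from the monomial $\prod(X_j-\alpha_{i,j})^{\mu_j}$; collecting, $\ell_i(v(\lf-\beta_i)^{w_i})=v(\balpha_i)\sum_{|\mu|=w_i}w_i!\,P_{i,\mu}(\bu_i)t^\mu=v(\balpha_i)\,w_i!\,\pi_i(t_1,\dots,t_n)$. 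All the lower-pole-order terms $k<w_i$ get absorbed into $(T-\beta_i)A_{v,i}$ with $\deg A_{v,i}<w_i$, and summing over $i\in\mathfrak{S}$ gives~\cref{eq:sumgenseries}.

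The main obstacle I anticipate is the bookkeeping in the last step: one must be careful that the differential operators $\Lambda_{i,j}$ have order \emph{at most} $w_i$ (not exactly), so $\ell_i(v(\lf-\beta_i)^{w_i})$ \emph{a priori} receives contributions from the lower-order parts of $\ell_i$ as well — but those parts applied to $(\lf-\beta_i)^{w_i}\in\m_{\balpha_i}^{w_i}$ involve derivatives of order $<w_i$ of a function vanishing to order $w_i$ at $\balpha_i$, hence evaluate to zero, so only the genuine order-$w_i$ part survives and the leading coefficient is clean. The other point requiring care, already noted above, is that $\pi_i$ is defined as a polynomial in the formal variables $X_1,\dots,X_n$ with coefficients $P_{i,\mu}(\bu_i)$, and one is evaluating it at $(t_1,\dots,t_n)$; matching the normalization (the factor $w_i!$ versus the $1/\mu!$ hidden in the Taylor/differential pairing, against the multinomial coefficients $\binom{w_i}{\mu}$) is a routine but slightly delicate constant-chasing exercise that I would spell out carefully in the final write-up.
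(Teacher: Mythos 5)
Your proof is correct, but takes a genuinely different route from the paper's. The paper argues formally in $\K(X_1,\dots,X_n)$: it first establishes $\sum_{s\ge0} v\lf^s/T^{s+1} = v/(T-\lf)$, then applies each differential operator $\partial^\mu$ to this rational function (tracking a remainder of lower pole order at each step), takes linear combinations, and only at the end evaluates at $\balpha_i$. You instead work directly inside the local algebra $\residueI_i$: you exploit that $\lf - \lf(\balpha_i)$ is nilpotent there, expand $\lf^s$ by the binomial theorem around $\lf(\balpha_i)$ to get a \emph{finite} sum $\sum_k \ell_i(v(\lf-\beta_i)^k)/(T-\beta_i)^{k+1}$, bound the range of $k$ using the vanishing of $\ell_i$ on $\m_{\balpha_i}^{w_i+1}$, and evaluate the top coefficient via Leibniz plus the multinomial expansion of $(\lf-\beta_i)^{w_i}$. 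This buys a cleaner pole-order argument (finiteness and the bound $w_i+1$ are visible at once) at the cost of doing the Leibniz/multinomial bookkeeping by hand; the paper's formal-series route instead pushes the single identity $\partial^\mu(1/(T-\lf))$ through and avoids expanding $(\lf-\beta_i)^{w_i}$.

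One wrinkle to fix in a final write-up, which you already flag yourself: you describe the top-order part of $\ell_i$ as $\sum_{|\mu|=w_i}P_{i,\mu}(\bu_i)\,\partial^\mu/\mu!$, but in the paper's normalization from~\cref{ssec:dual} the operator is $\sum_{|\mu|=w_i}P_{i,\mu}(\bu_i)\,\partial^\mu$, with no $1/\mu!$, and $\pi_i$ is defined using the same coefficients $P_{i,\mu}(\bu_i)$. With your $1/\mu!$ inserted, the chain of equalities you write does not close: the $\mu!$ from $\partial^\mu(X-\balpha_i)^\mu$ would cancel your $1/\mu!$, leaving only the multinomial factor $w_i!/\mu!$, which does not reassemble into $w_i!\,\pi_i(t)$. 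Using the paper's convention (no $1/\mu!$), the $\mu!$ and $\binom{w_i}{\mu}=w_i!/\mu!$ combine to $w_i!$ exactly as you want, and the constant-chasing goes through. So the final formula you state is right; just drop the spurious $1/\mu!$ when you spell it out.
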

\begin{proof}
  Take $v$ and $\lf$ as above. Consider first an operator of the form $f
  \mapsto \frac{ \partial^{|\mu|} f} {\partial X_1^{\mu_1} \cdots
  \partial X_n^{\mu_n}}$, where we write
  $|\mu|=\mu_1+\cdots+\mu_n$. Then, we have the following generating
  series identities, with coefficients in $\K(X_1,\dots,X_n)$:
  \begin{align*}
    \sum_{s \ge 0} 
    \frac{ \partial^{|\mu|} ( v \lf^s )} {\partial X_1^{\mu_1} \cdots
    \partial X_n^{\mu_n}}
    \frac{1}{T^{s+1}} 
    &=  \sum_{s \ge 0} 
    \frac{ \partial^{|\mu|} (v \lf^s/T^{s+1})} {\partial X_1^{\mu_1} \cdots
    \partial X_n^{\mu_n}}\\
    &=  
    \frac{ \partial^{|\mu|} } {\partial X_1^{\mu_1} \cdots
    \partial X_n^{\mu_n}}
    \left (\sum_{s \ge 0} \frac{v \lf^s}{T^{s+1}}\right ) \\
    &= \frac{ \partial^{|\mu|} } {\partial X_1^{\mu_1} \cdots
    \partial X_n^{\mu_n}}
    \left (\frac v{T-\lf} \right ) \\
    &= \left (v\, |\mu|!\,    \frac {1}{(T-\lf)^{|\mu|+1}} \prod_{1 \le k \le n} 
      \left (\frac{ \partial \lf} {\partial X_k} \right)^{\mu_k}
    \right ) + \frac{P_{|\mu|}}{(T-\lf)^{|\mu|}} + \cdots + \frac{P_{1}}{(T-\lf)}\\
    &=\left ( v\, |\mu|!\,    \frac {1}{(T-\lf)^{|\mu|+1}} \prod_{1 \le k \le n} 
      t_k^{\mu_k}
    \right ) + \frac{H}{(T-\lf)^{|\mu|}},
  \end{align*}
  for some polynomials $P_1,\dots,P_{|\mu|},H$ in $\K[X_1,\dots,X_n,T]$ that
  depend on the choices of $\mu$, $v$ and $X$, with $\deg_T(P_i) < i$
  for all $i$ and thus $\deg_T(H) < |\mu|$.

  Take now a $\Kbar$-linear combination of such operators, such as $f \mapsto
  \sum_{\mu \in R} c_\mu \frac{ \partial^{|\mu|} f } {\partial
  X_1^{\mu_1} \cdots \partial X_n^{\mu_n}}$ for some finite subset $R$ of
  $\N^n$. The corresponding generating series becomes
  \[
    \sum_{s \ge 0} \sum_{\mu \in R} c_\mu \frac{ \partial^{|\mu|} ( v
    \lf^s )} {\partial X_1^{\mu_1} \cdots \partial X_n^{\mu_n}}
    \frac{1}{T^{s+1}} = v\,\sum_{\mu \in R} \left( c_\mu |\mu|!\,  \frac {1}{(T-\lf )^{|\mu|+1}} \prod_{1
      \le k \le n} t_k^{\mu_k} \right )+\sum_{\mu
    \in R} \frac{H_\mu}{(T-\lf)^{|\mu|}},
  \]
  where each $H_\mu \in \Kbar[X_1,\dots,X_n,T]$ has degree in $T$ less than
  $|\mu|$.  Let $w$ be the maximum of all $|\mu|$ for $\mu$ in
  $R$. We can rewrite the above as
  \begin{align*}
    v\, w! 
    \sum_{\mu \in R, |\mu|=w}\left ( c_\mu
      \,    \frac {1}{(T-\lf )^{w+1}} \prod_{1 \le k \le n} 
    t_k^{\mu_k}\right )
    + \frac{A}{(T-\lf )^{w}},
  \end{align*}
  for some polynomial $A \in \Kbar[X_1,\dots,X_n,T]$ of degree less than $w$ in $T$. 
  Then, if we let 
  $\pi =\sum_{\mu \in R,\ |\mu|=w} c_{\mu} X_1^{\mu_1} \cdots
  X_n^{\mu_n}$, this becomes
  \begin{align*}
    \sum_{s \ge 0} 
    \sum_{\mu \in R} c_\mu \frac{ \partial^{|\mu|} ( v \lf^s )} { X_1^{\mu_1} \cdots
    X_n^{\mu_n}}
    \frac{1}{T^{s+1}} 
    &=
    v\, w! \,  \pi(t_1,\dots,t_n)
    \frac {1}{(T-\lf )^{w+1}}
    + \frac{A}{(T-\lf )^{w}}.
  \end{align*}
  Applying this formula to the sum $\ell=\sum_{i \in
  \mathfrak{S}}\ell_i$ from \cref{eq:fui}, and taking into account
  the expression in \cref{ell_param} for each $\ell_i$, we obtain the
  claim in the lemma.
\end{proof}

\subsection{Computing a zero-dimensional parametrization}  \label{ssec:abstractlago}

As a corollary, we recover the following result, that shows how to compute a
zero-dimensional parametrization of $V_{\mathfrak{S}}$ (see \cref{algo:para2}).
Our main usage of it will be with $\mathfrak{S}=\{1,\dots,\dg\}$, in which case
$V_{\mathfrak{S}}=V$, but in the last section of the paper, we will also work
with strict subsets. 

\begin{algorithm}[ht]
  \caption{$\mathsf{Parametrization}(\ell,\lf)$}  ~\\
  {\bf Input:} \vspace{-0.5em}
  \begin{itemize}\setlength\itemsep{0em}
    \item  a linear form $\ell$ over $\residueI_\mathfrak{S}$
    \item $\lf=t_1 X_1 + \cdots + t_n X_n$
  \end{itemize}
  {\bf Output:}  \vspace{-0.5em}
  \begin{itemize}
    \item              polynomials $((\sqfree,V_1,\dots,V_n),\lf)$, with $\sqfree,V_1,\dots,V_n$ in $\K[T]$
  \end{itemize}
  \begin{enumerate}\setlength\itemsep{0em}
    \item let $\minpoly$ be the minimal polynomial of the sequence $(\ell(\lf^s))_{s \ge 0}$
    \item let $\sqfree$ be the squarefree part of $\minpoly$
    \item let $C_1 = \Omega((\ell(\lf^s))_{s\ge0} ,\minpoly)$
    \item \textbf{for} $i=1,\dots,n$ \textbf{do} \\
      \phantom{for} let $C_{X_i} = \Omega((\ell(X_i \lf^s))_{s\ge0}, \minpoly)$ 
    \item \textbf{return} $((\sqfree, C_{X_1}/ C_1 \bmod \sqfree, \dots, C_{X_n}/ C_{1} \bmod \sqfree),\lf)$
  \end{enumerate}
  \label{algo:para2}
\end{algorithm}

\begin{lemma}[{\citep[Proposition~3]{BoSaSc03}}]
  \label{lemma:para2}
  Let $\ell$ be a generic element of ${\rm
  hom}_{\Kbar}(\residueI_\mathfrak{S},\Kbar)$ and 
 $\lf = t_1 X_1 + \cdots + t_n X_n$
 be a generic 
  linear combination of $X_1,\dots,X_n$. Then the output $((\sqfree,V_1,\dots,V_n),\lf)$ of
  $\mathsf{Parametrization}(\ell,\lf)$ is a zero-dimensional
  parametrization of $V_{\mathfrak{S}}$.
\end{lemma}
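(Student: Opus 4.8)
I would run the whole argument off the partial‑fraction identity of \cref{lemma:formula}. Write $\beta_i = \lf(\balpha_i) = t_1\alpha_{i,1}+\cdots+t_n\alpha_{i,n}$ for $i\in\mathfrak{S}$. A generic linear form $\ell$ over $\residueI_{\mathfrak{S}}=\prod_{i\in\mathfrak{S}}\residueI_i$ has all its local components nonzero, so, extending it by zero on the components outside $\mathfrak{S}$, its support is exactly $\mathfrak{S}$ and \cref{lemma:formula} applies: for every $v\in\K[X_1,\dots,X_n]$ the series $\sum_{s\ge0}\ell(v\lf^s)/T^{s+1}$ equals $\sum_{i\in\mathfrak{S}}\bigl(v(\balpha_i)\,w_i!\,\pi_i(t_1,\dots,t_n)+(T-\beta_i)A_{v,i}\bigr)/(T-\beta_i)^{w_i+1}$ with $\deg A_{v,i}<w_i$. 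First I would collect the genericity facts needed: since the points $\balpha_i$ are pairwise distinct, a generic $\lf$ makes the scalars $\beta_i$ ($i\in\mathfrak{S}$) pairwise distinct; since each symbol $\pi_i$ is a nonzero polynomial, a generic $\lf$ also makes $\pi_i(t_1,\dots,t_n)\ne0$ for every $i$; and since $w_i\le N_i<\D$ and $\mathrm{char}(\K)>\D$, each $w_i!$ is invertible in $\K$. In particular the numerator of the $i$th summand, evaluated at $T=\beta_i$, is exactly $v(\balpha_i)\,w_i!\,\pi_i(t_1,\dots,t_n)$.

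From here $\minpoly$ and $\sqfree$ drop out. Taking $v=1$, every summand above is already a reduced fraction (its numerator is $w_i!\,\pi_i(t_1,\dots,t_n)\ne0$ at $T=\beta_i$) and the poles are distinct, so $\sum_s\ell(\lf^s)/T^{s+1}$ is reduced with denominator $\prod_{i\in\mathfrak{S}}(T-\beta_i)^{w_i+1}$; by the correspondence between minimal polynomials and generating‑series denominators recalled in \cref{section:linseq}, $\minpoly=\prod_{i\in\mathfrak{S}}(T-\beta_i)^{w_i+1}$ (consistent with \cref{lemma:minpoly}, which in addition identifies $\minpoly$ as the minimal polynomial of $\lf$ on $\residueI_{\mathfrak{S}}$). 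Hence $\sqfree$, its squarefree part, equals $\prod_{i\in\mathfrak{S}}(T-\beta_i)$: it is monic, squarefree, of degree $\#\mathfrak{S}=\#V_{\mathfrak{S}}$, with root set $\{\beta_i\mid i\in\mathfrak{S}\}$.

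For the coordinates, I would use that $\minpoly(\lf)=0$ on $\residueI_{\mathfrak{S}}$, so $\minpoly$ cancels each sequence $(\ell(X_j\lf^s))_{s\ge0}$ and therefore $C_1=\minpoly\cdot\sum_s\ell(\lf^s)/T^{s+1}$ and $C_{X_j}=\minpoly\cdot\sum_s\ell(X_j\lf^s)/T^{s+1}$ are polynomials of degree $<\deg\minpoly$. Multiplying out and evaluating at $T=\beta_i$ annihilates every summand of index $\ne i$, leaving $C_1(\beta_i)=w_i!\,\pi_i(t_1,\dots,t_n)\prod_{l\in\mathfrak{S},\,l\ne i}(\beta_i-\beta_l)^{w_l+1}\ne0$ and $C_{X_j}(\beta_i)=\alpha_{i,j}\,C_1(\beta_i)$. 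Thus $\gcd(C_1,\sqfree)=1$, so $C_1$ is invertible modulo $\sqfree$ and $V_j:=C_{X_j}C_1^{-1}\bmod\sqfree$ is a well‑defined polynomial of degree $<\deg\sqfree$ with $V_j(\beta_i)=\alpha_{i,j}$ for all $i\in\mathfrak{S}$. Consequently $\tau\mapsto(V_1(\tau),\dots,V_n(\tau))$ sends the root set of $\sqfree$ bijectively onto $V_{\mathfrak{S}}$ by $\beta_i\mapsto\balpha_i$ (a bijection since the $\beta_i$ and the $\balpha_i$ are each pairwise distinct over $i\in\mathfrak{S}$); with the degree bounds on $\sqfree$ and the $V_j$, this is exactly the statement that $((\sqfree,V_1,\dots,V_n),\lf)$ is a zero‑dimensional parametrization of $V_{\mathfrak{S}}$.

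The algebraic verification is routine once \cref{lemma:formula} is available; I expect the real friction to be in the genericity and rationality bookkeeping. One must make precise that ``generic $\ell$'' and then ``generic $\lf$'' mean avoiding a (possibly $\ell$‑dependent) proper hypersurface, that the symbols $\pi_i$ are genuinely nonzero as polynomials so their generic evaluation does not vanish, and that the output lands in $\K[T]$ rather than only $\Kbar[T]$. For the last point one takes $\ell$ to be the extension of a $\K$‑linear form, which is possible while keeping $\ell$ generic exactly when $\mathfrak{S}$ is stable under the Galois action of $\K$ (automatic for $\mathfrak{S}=\{1,\dots,\dg\}$, and arranged by hand in \cref{sec:original}); granting this, all the sequences $(\ell(v\lf^s))_s$ are $\K$‑valued, so $\minpoly$, $\sqfree$, $C_1$, the $C_{X_j}$, and the $V_j$ are over $\K$.
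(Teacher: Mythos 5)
Your proof is correct and follows essentially the same route the paper relies on: the paper does not re-prove this lemma but cites Proposition~3 of Bostan--Salvy--Schost (2003), observing that the argument there---built on the same partial-fraction identity, in a form slightly weaker than \cref{lemma:formula}---carries over verbatim to a general support set $\mathfrak{S}$. Your write-up makes that argument explicit using the stronger \cref{lemma:formula} proved here, and the genericity and $\K$-rationality caveats you flag at the end are exactly the points the cited proof attends to.
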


The proof in~\cite{BoSaSc03} is written for
$\mathfrak{S}=\{1,\dots,\kappa\}$, but works equally as well for the
more general case; it uses a weaker form of the previous lemma, that
is sufficient in this case.  We demonstrate how this algorithm works
through a small example, in which we already know the coordinates of
the solutions. Let
$$I = \langle (X_1-1)(X_2-2),(X_1-3)(X_2-4)\rangle \subset
\F_{101}[X_1,X_2].$$ Then, $V(I) = \{\balpha_1,\balpha_2\}$,
with $\balpha_1= (1,4)$ and $\balpha_2=(3,2)$; we take
$\lf=X_1$, which separates the points of $V(I)$.  We choose the linear form
\[
  \ell: \F_{101}[X_1,X_2]/I \to \F_{101},\;
  f \mapsto \ell(f) = 17 f(\balpha_1) + 33 f(\balpha_2);
\]
then, $\mathfrak{S}=\{1,2\}$, and the symbols $\pi_1$ and $\pi_2$ 
are respectively the constants 17 and 33. We have
\begin{align*}
  \ell(X_1^s) &= 17 \cdot 1^s + 33 \cdot 3^s,\\
  \ell(X_2X_1^s) &= 17 \cdot 4 \cdot 1^s + 33 \cdot 2 \cdot 3^s.
\end{align*} 
We associate a generating series to each sequence:
\begin{align*}
  Z_1 = \sum_{s \ge 0} \frac{\ell(X^s_1)}{T^{s+1}}
&= \frac{17}{T-1} + \frac{33}{T-3}
= \frac{17(T-3)+33(T-1)}{(T-1)(T-3)}, \\
Z_{X_2} = \sum_{s\ge0} \frac{\ell(X_2X_1^s)}{T^{s+1} }
&= \frac{17\cdot 4}{T-1} + \frac{33 \cdot 2}{T-3}
= \frac{17\cdot 4 (T-3) + 33\cdot 2(T-1)}{(T-1)(T-3)}.
\end{align*}
These generating series have for common denominator $\minpoly = (T-1)(T-3)$,
whose roots are the coordinates of $X_1$ in $V(I)$;
their numerators are respectively
\begin{align*}
  C_{1} &= \Omega((\ell(X^s_1))_{s\ge 0},\minpoly) = 17 (T-3) + 33(T-1), \\
  C_{X_2} &= \Omega((\ell(X_2X^s_1))_{s\ge 0},\minpoly) = 17\cdot 4 (T-3) + 33\cdot 2(T-1).
\end{align*}
Now, let
\[
  V_2 
  =\frac{C_{X_2}}{C_1} \bmod \minpoly
  =\frac{17\cdot 4 (T-3) + 33\cdot 2(T-1)}{17(T-3)+33(T-1)} \bmod \minpoly
  =100 T +5.
\]
Then, $V_2(1) = 4$ and $V_2(3) = 2$, as expected.

\section{The main algorithm}\label{sec:main}

In this section, we extend the algorithm of~\citep{BoSaSc03} to compute
a zero-dimensional parametrization of $V(I)$, for some
zero-dimensional ideal $I$ of $\K[X_1,\dots,X_n]$, by using blocking
methods. Our input is a monomial basis
$\basis=(b_1,\dots,b_D)$ of $\residueI=\K[X_1,\dots,X_n]/I$, together
with the multiplication matrices $\mM_1,\dots,\mM_n$ of respectively
$X_1,\dots,X_n$ in this basis; for definiteness, we suppose that the
first basis element in $\basis$ is~$b_1=1$. As above, $D$ denotes the
dimension of $\residueI$.

The first subsection presents the main algorithm. Its main feature is
that after we compute the Krylov sequence used to find a minimal
matrix generator, we recover all entries of the output for a minor
cost, without computing another Krylov sequence. We make no assumption
on $I$ (radicality, shape position, \dots), except of course that it
has dimension zero; however, we assume (as in the previous subsection)
that the characteristic of $\K$ is greater than $D$. 

Then, in the second subsection we present a simple example, and in the third we
show experimental results of an implementation based on the C++ libraries
Eigen, LinBox and NTL.

\subsection{Description, correctness and cost analysis}\label{ssec:mainalgo}

We mentioned that the method of \citet{Steel15} already uses the block
Wiedemann algorithm to compute the minimal polynomial $\minpoly$ of
$\lf=t_1 X_1 + \cdots + t_n X_n$; {given sufficiently many terms of the
  sequence $(\mUt \mM^s \mV)$, this is done by means of polynomial
  lattice reduction}. Knowing the roots of $\minpoly$ in $\K$, that
algorithm uses an ``evaluation'' method for the rest (several
Gr\"obner basis computations, all with one variable less).

Our algorithm (\cref{algo:block-sparse-fglm}) computes the whole
zero-dimensional parametrization of $V(I)$ for essentially the same cost as the
computation of the minimal polynomial. Hereafter, $\col{\varepsilon}_1$
denotes the size-$D$ column vector whose only nonzero entry is a $1$ at the
first index: $\col{\varepsilon}_1 =\trsp{[1~0~\cdots~0]}$.

\begin{algorithm}[ht]
  \caption{$\mainalgoname(\mM_1,\dots,\mM_n,\mU,\mV,\lf)$}
  {\bf Input:} \vspace{-0.5em}
  \begin{itemize}
    \item $\mM_1,\dots,\mM_n$ multiplication matrices defined as above
    \item  $\mU,\mV \in \mathbb{K}^{D \times m}$, for some block dimension  $m \in \{1,\dots,D\}$
    \item $\lf =t_1 X_1 + \cdots + t_n X_n$
  \end{itemize}
  {\bf Output:}  \vspace{-0.5em}
  \begin{itemize}
    \item         polynomials $((\sqfree,V_1,\dots,V_n),\lf)$, with $\sqfree,V_1,\dots,V_n$ in $\K[T]$
  \end{itemize}
  \begin{enumerate}
    \item\label{mainstep1}   let $\mM = t_1 \mM_1 + \cdots + t_n \mM_n$
    \item\label{mainstep3} { compute $\mat{L}_s = \mUt\mM^s$ for $s=0,\dots,2d-1$, with $d = \lceil D/m \rceil$}
    \item\label{mainstep4} { compute $\seqelt{s,\mU,\mV}= \mat{L}_s\mV$ for $s=0,\dots, 2d-1$}
    \item\label{mainstep5} { compute a minimal matrix generator $\mat{P}_{\mU,\mV}$ of $(\seqelt{s,\mU,\mV})_{0 \le s < 2d}$}
    \item\label{mainstep6} { let $\minpoly$ be the largest invariant factor of $\mat{P}_{\mU,\mV}$}
    \item\label{mainstep7} { let $\sqfree$ be  the squarefree part  of $\minpoly$}
    \item\label{mainstep8} { let $\row{a}_1 = [P~0 ~\cdots~ 0] (\mat{P}_{\mU,\mV})^{-1}$}
    \item\label{mainstep9}  let $C_1 = \mathsf{{\sf ScalarNumerator}}(\mat{P}_{\mU,\mV}, \minpoly, \col{\varepsilon}_1, 1, \row{a}_1, 
      (\mat{L}_s)_{0 \le s < d})$
    \item\label{mainstep10} \textbf{for} $i=1,\dots,n$ \textbf{do} \\
      \phantom{for}  let $C_{X_i} = \mathsf{{\sf ScalarNumerator}}(\mat{P}_{\mU,\mV}, \minpoly, \mM_i\col{\varepsilon}_1, 1, \row{a}_1, (\mat{L}_s)_{0 \le s < d})$
    \item\label{mainstep11}     \textbf{return} $((\sqfree, C_{X_1}/ C_1 \bmod \sqfree, \dots, C_{X_n}/ C_{1} \bmod \sqfree),\lf)$
  \end{enumerate}  \label{algo:block-sparse-fglm}
\end{algorithm}

We first prove correctness of the algorithm, for generic choices of
$t_1,\dots,t_n$, $\mU$ and $\mV$. The first step computes the
multiplication matrix $\mM=t_1 \mM_1 + \cdots + t_n \mM_n$ of $\lf=t_1
X_1 + \cdots + t_n X_n$.
Then, we compute the first $2d$ terms of the sequence $\seq_{\mU,\mV}=
(\mUt\mM^s\mV)_{s\ge 0}$. As discussed in \cref{ssec:appliW}, \cref{randXY} shows that
the matrix polynomial $\mat{P}_{\mU,\mV}$ is indeed a minimal left
generator of the sequence $\seq_{\mU,\mV}$, that $\minpoly$ is the minimal 
polynomial of $\lf$ and $\sqfree$ its squarefree part.

We find the rest of the polynomials in the output by following
\cref{algo:para2}. In particular, the scalar numerators
needed in this algorithm are computed using \cref{algo:scalar_numerator}
($\mathsf{ScalarNumerator}$); indeed, applying \cref{lemma:omegaOmega},
we see that calling this algorithm
at Steps~\ref{mainstep9} and~\ref{mainstep10}
computes $$C_1 = \Omega((\row{u}_i \mM^s \col{\varepsilon}_1)_{s \ge
0}, \minpoly) \quad\text{and}\quad C_{X_i} = \Omega((\row{u}_1 \mM^s
\mM_i \col{\varepsilon}_1)_{s \ge 0}, \minpoly),\ \ i=1,\dots,n.$$ 

Let $\ell:\residueI \to \K$ be the linear form $f = \sum_{i=1}^D f_i b_i
\mapsto \sum_{1 \le i \le D} f_i u_{i,1}$, where $u_{i,1}$ is the entry at
position $(i,1)$ in $\mU$. The two polynomials above can be rewritten as 
\[
  C_1 = \Omega( (\ell(X^s))_{s \ge 0}, \minpoly) \quad\text{and}\quad
C_{X_i} = \Omega( (\ell(X_i X^s))_{s \ge 0}, \minpoly),
\]
so they coincide with the polynomials in \cref{algo:para2}.  Thus, by
\cref{lemma:para2}, for generic $\mU$ and $\lf$ the output of $\mainalgoname$
is indeed a zero-dimensional parametrization of $V(I)$.  

\begin{remark}
  As already pointed out in \cref{ssec:scalar_numer}, the algorithm is
  written assuming that memory usage is not a limiting factor (this
  makes it slightly easier to write the pseudo-code). As described
  here, the algorithm stores $\Theta(D^2)$ field elements in the
  sequence $\mat{L}_s$ computed at Step~\ref{mainstep3}, since they
  are re-used at Steps~\ref{mainstep9} and~\ref{mainstep10}.  We may
  instead discard each matrix $\mat{L}_s$ after it is used, by
  computing on the fly the column vectors needed for Steps~\ref{mainstep9}
  and~\ref{mainstep10}.

  If the multiplication matrices are dense, little is gained this way (in
  the worst case, they use themselves $n D^2$ field elements), but savings can
  be substantial if these  matrices are sparse.
\end{remark}

For the cost analysis, we focus on a sparse model: we let $\density
\in [0,1]$ denote the density of $\mM$ and the $\mM_i$'s, that is, all
these matrices have at most $\density D^2$ nonzero entries.  As a
result, a matrix-vector product by $\mM$ can be done in $O(\density
D^2)$ operations in $\K$. In particular, the cost incurred at
Step~\ref{mainstep1} to compute $\mM$ is $O(\density n D^2)$.

In this context, the main purpose of Coppersmith's blocking strategy
is to allow for easy parallelization. Computing the matrices
$\mat{L}_s=\mUt\mM^s$, for $s=0,\dots,2d-1$, is the bottleneck
of the algorithm but can be parallelized. This is done by
working row-wise, computing independently the sequences
$(\row{\ell}_{i,s})_{0 \le s < 2d}$ of the $i$th rows of
$(\mat{L}_s)_{0 \le s < 2d}$ as $\row{\ell}_{i,0}=\row{u}_i$ and
$\row{\ell}_{i,s+1} = \row{\ell}_{i,s}
\mM$ for all $i,s$, where $\row{u}_i$ is the $i$th row of $\mUt$.
For a fixed $i \in \{1,\dots,m\}$, computing $(\mat{\ell}_{i,s})_{0
\le i < 2d}$ costs $O(d \density D^2) = O(\density D^3/m )$ field operations. If
we are able to compute in parallel $m$ vector-matrix products at once,
the {\em span} of Step~\ref{mainstep3} is thus $O(\density D^3/m)$, whereas
the total work is $O(\density D^3)$.

At Step~\ref{mainstep4}, we can then compute $\seqelt{s,\mU,\mV}=\mUt\mM^s\mV$, for $s=0,\dots,2d-1$ by the
product
$$
\begin{bmatrix}
  \mUt\\
  \mUt \mM\\
  \mUt \mM^2\\
  \vdots\\
  \mUt \mM^{2d-1}\\
\end{bmatrix} \mV
= 
\begin{bmatrix}
  \mUt \mV\\
  \mUt \mM \mV\\
  \mUt \mM^2 \mV\\
  \vdots \\
  \mUt \mM^{2d-1} \mV\\
\end{bmatrix}
$$
of size $O(D) \times D$ by $D \times m$; since $m \le D$, this  costs $O(m^{\omega-2}D^2)$
base field operations.

Recall from \cref{section:matrix_seq} that we can compute a minimal
matrix generator $\mat{P}_{\mU,\mV}$ in time
\[
  O(m^{\omega} \M(D/m) \log(D/m)) \subseteq O(m^{\omega-1} \M(D) \log(D)),
\]
and from
\cref{ssec:appliW,ssec:scalar_numer} that the
largest invariant factor $P$ and the vector $\row{a}_1$ can be
computed in time $O(m^{\omega-1} \M(D) \log(D) \log(m))$.  Computing $\sqfree$
takes time $O(\M(D) \log(D))$.

In \cref{ssec:scalar_numer}, we saw that each call to {\sf ScalarNumerator}
takes $O(D^2 + m\M(D))$ operations, for a total of $O(nD^2 + n
m\M(D))$; the final computations modulo $\sqfree$ at Step \ref{mainstep11} take time $O(n\M(D) + \M(D) \log(D))$.
Thus, altogether, assuming perfect parallelization 
at Step~\ref{mainstep3}, the total span is
$$O\left (\density \frac{D^3}m + m^{\omega-1} \M(D) \log(D) \log(m) + nD^2 + nm\M(D)\right ),$$
and the total work is
$$O\left (\density D^3 + m^{\omega-1} \M(D) \log(D) \log(m) + nD^2 +
nm\M(D)\right ).$$ Although one may work on parallelizing other steps than
Step~\ref{mainstep3}, we note that this step is simultaneously the most costly in
theory and in practice, and the easiest to parallelize.

\begin{remark}
  Our algorithm only computes the first invariant factor of
  $\mat{P}_{\mU,\mV}$, that is, of $T \mat{I}_D-\mM$. A natural
  question is whether computing further invariant factors can be of
  any use in the algorithm (or possibly can help us determine part of
  the structure of the algebras $\residueI_i$).
\end{remark}

We conclude this section by a discussion of ``dense'' versions of the
algorithm (to be used when the density $\density$ is close to $1$). 
If we use a dense model for our matrices, our algorithms
should rely on dense matrix multiplication. We will see two possible
approaches, which respectively take $m=1$ and $m=D$; we will not
discuss how they parallelize, merely pointing out that one may simply
parallelize dense matrix multiplications throughout the algorithms.

Let us first discuss the modifications in the algorithm to apply if we choose
$m=1$. We compute the
row-vectors $\mat{L}_s$, for $s=0,\dots,2D-1$, using the
square-and-multiply technique from \citep{Keller85},
for $O(D^\omega \log(D))$ operations in
$\K$. For generic choices of $\mU$ and $\mV$, a minimal matrix generator
$\mat{P}_{\mU,\mV}$ is equal to the minimum
polynomial $\minpoly$ of $\mM$, and can be computed efficiently by the
Berlekamp-Massey algorithm; besides,
$\row{a}_1 = P (\mat{P}_{\mU,\mV})^{-1} = 1$. Computing the scalar
numerators is simply a power series multiplication in degree at most
$D$. Altogether, the runtime is $O(D^{\omega} \log(D) + nD^2)$, where
the second term gives the cost of computing $\mM$.

When $m = D$, $\mU,\mV \in \mathbb{K}^{D \times D}$ are square
matrices and $d = D/m = 1$. The canonical matrix generator of
$\seq_{\mU,\mV} = (\mUt\mV, \mUt\mM\mV,\dots)$ is
$\mat{P}_{\mU,\mV} = T \mat{I}_D - \mUt\mM\itrsp{\mU}$ and its
largest invariant factor $\minpoly$ and $\row{a}_1$ can be computed in
$O(D^\omega \log(D))$ operations in $\K$ using high-order lifting.

The numerator $\mat{\Omega}(\mUt \mM \col{\varepsilon}_1,
\mat{P}_{\mU,\mV})$ is then seen to be $\mUt \col{\varepsilon}_1$, that is, the
first column of $\mUt$; we recover $C_1$ from it through a dot
product with $\row{a}_1$. Similarly, the numerator
$\mat{\Omega}(\mUt \mM \mM_i \col{\varepsilon}_1,
\mat{P}_{\mU,mV})$ is $\mUt \mM_i \col{\varepsilon}_1$, and gives
us $C_{X_i}$. Altogether, the runtime is again $O(D^{\omega} \log(D) +
nD^2)$, where the second term now gives the cost of computing $\mM$,
as well as $C_1$ and all $C_{X_i}$.

\subsection{Example}

We give an example of our algorithm with a non-radical system as input. Let
$$
I = 
\sbox0{$\begin{array}{c}
    X_1^3 + 88X_1^2 + 56X_1 + 21,\\
    X_1^2X_2 + 91X_1^2 + 92X_1X_2 + 90X_1 + 20X_2 + 2,\\
    X_1X_2^2 + 81X_1X_2 + 100X_1 + 96X_2^2 + 100X_2 + 5,\\
    X_1^2X_2 + 81X_1^2 + 93X_1X_2 + 59X_1 + 16X_2 + 84,\\
    X_1X_2^2 + 71X_1X_2 + 99X_1 + 97X_2^2 + 19X_2 + 8,\\
    X_2^3 + 61X_2^2 + 96X_2 + 20
\end{array}$}
\mathopen{\resizebox{1.2\width}{\ht0}{$\Bigg\langle$}}
  \usebox{0}
\mathclose{\resizebox{1.2\width}{\ht0}{$\Bigg\rangle$}}
\subset \F_{101}[X_1,X_2];
$$ the corresponding residue class ring $\residueI=\F_{101}[X_1,X_2]/I$
has dimension $D=4$.  Although it is not obvious from the generators,
the ideal $I$ is simply $\mathfrak{m}_1^2 \mathfrak{m}_2$, where
$\mathfrak{m}_1 =\langle X_1-4,X_2-10\rangle$ and $\mathfrak{m}_2
=\langle X_1-5,X_2-20\rangle$ (this immediately implies that $D=3+1=4$).

We choose $\lf = 2X_1 + 53 X_2$, so that the multiplication matrices of $X_1$, $X_2$ and $\lf$
in the basis $\basis=(1,X_1,X_2,X_2^2)$ of $\residueI$ are respectively
\begin{align*}
  \mM_1 = \begin{bmatrix}
    7   & 91 & 100& 0\\
    41  & 2  & 20 & 0\\
    100 & 10 & 8  & 1\\
    1   & 71 & 86 & 0
  \end{bmatrix},\,
  \mM_2 = \begin{bmatrix}
    40&  1 & 91 & 0\\
    5 &  0 &  2 & 1\\
    0 &  0 & 10 & 0\\
    81&  0 & 71 & 0
  \end{bmatrix},\,\text{and }
  \mM = \begin{bmatrix}
    13 & 33&  74&  0\\
    44 &  4&  45&  53\\
    99 & 20&  41&  2\\
    53 & 41&  97&  0
  \end{bmatrix}.
\end{align*}

We choose $m = 2$ and take $\mU,\mV \in \F_{101}^{D\times m}$ with 
entries
\[ \mU = \begin{bmatrix}
    84& 38\\
    29& 58\\
    80& 43\\
    7& 82
  \end{bmatrix},\quad
  \mV = \begin{bmatrix}
    6&  97\\
    83&  58\\
    0&  95\\
    59&  89
  \end{bmatrix}.
\]
We compute the first $2d=2\lceil D/m\rceil =4$ terms in the matrix
sequence $\seq_{\mU,\mV} = (\mUt\mM^s\mV)_{s\ge0}$ and its
minimum matrix generator $\mat{P}_{\mU,\mV}$. This is done
by first computing
\begin{align}\label{exampleproducts}
  \mUt =
  \begin{bmatrix}
    84&  29&  80&   7\\
    38&  58&  43&  82
  \end{bmatrix},
& 
\quad \mUt \mM 
=
\begin{bmatrix}
  54&  28&  67&  81\\
  34&  52&  90&  29
\end{bmatrix},
\\[2mm]
\mUt \mM^2 
=
\begin{bmatrix}
  33&  91&   3&  2\\
  47&  77&  47&  7
\end{bmatrix},
 &\quad \mUt \mM^3 
 =
 \begin{bmatrix}
   89&  80&  87&  82\\
   34&  56&  55&  34
 \end{bmatrix}, \nonumber
\end{align}
from which we get, by right-multiplication by $\mV$,
\[
  \seq_{\mU,\mV} =
  \begin{bmatrix}
    92& 75\\  
    83& 51
  \end{bmatrix},
  \begin{bmatrix}
    54& 34\\  
    70& 73
  \end{bmatrix},
  \begin{bmatrix}
    92& 54\\  
    16& 74
  \end{bmatrix},
  \begin{bmatrix}
    94& 51\\
    91& 51
  \end{bmatrix},\dots
\]
and then we obtain the minimal matrix generator
\[
  \mat{P}_{\mU,\mV} =
  \begin{bmatrix}
    T^2 + 60T + 62 &       88T + 25\\
    100T + 33 & T^2 + 84T + 78
  \end{bmatrix}.
\]
The largest invariant factor of $\mat{P}_{\mU,\mV}$ is 
$P = T^3 + 76T^2 + 100T + 7$,
with squarefree part $\sqfree=T^2+8T+61$.
Next, we compute the row vector $\row{a}_1 = [T+16,13]$ 
by solving $\row{a}_1 = [P~~0] (\mat{P}_{\mU,\mV})^{-1}$,
and the matrix numerator
\[
  \mat{\Omega}( (\mUt \mM^s\col{\varepsilon}_1)_{s \ge 0}, \mat{P}_{\mU,\mV}) 
  =\left [\begin{matrix} 84T + 55 \\ 38T + 11 \end{matrix}\right ]
\]
which is made from the entries of nonnegative degree in the product
\[\mat{P}_{\mU,\mV} 
  \left (
    \begin{bmatrix}
      84\\  
      38
    \end{bmatrix}\cdot \frac 1T+
    \begin{bmatrix}
      54\\  
      34
    \end{bmatrix}\cdot \frac 1{T^2}+
  \cdots \right),
\]
where the columns are the first columns of the
matrices in \cref{exampleproducts} (as per \cref{eq:computeOmega}, we only need $d=2$
terms in the right-hand side).  From this, we find the scalar
numerator 
\[
  C_1 = \Omega((\row{u}_1 \mM^s\col{\varepsilon}_1)_{s \ge
  0}, \minpoly) = 84T^2 + 75T + 13
\]
by means of the dot product $[~C_1~] = \row{a}_1\cdot
\mat{\Omega}( (\mUt \mM^s\col{\varepsilon}_1)_{s \ge 0},
\mat{P}_{\mU,\mV})$.

We then find $C_{X_1} = 88T^2 + 47T  + 16$, by proceeding similarly: we find
the matrix numerator
\[
  \mat{\Omega}( (\row{u}_1 \mM^s \mM_1\col{\varepsilon}_1)_{s \ge 0},
  \mat{P}_{\mU,\mV}) = \begin{bmatrix} 88T + 19 \\ 57T+40
  \end{bmatrix}
\]
and we take the dot product with $\row{a}_1$.
Thus, we obtain the polynomial $V_1 = C_{X_1}/C_1 \bmod \sqfree =
15T+14$. 
We compute $V_2= 49T+9$ in the same way,
and our output is 
$$((T^2+8T+61, 15T+14, 49T+9), 2X_1 + 53 X_2).$$ As a sanity check, we
recall that $V( I)$ has two points in $\F_{101}^2$, namely
$(4,10)$ and $(5,20)$; accordingly, $Q$ has two roots in $\F_{101}$,
$33$ and $60$, and we have $(V_1(33) = 4, V_2(33) = 10)$ and $(V_1(60) = 5,
V_2(60) = 20)$, as expected.

\subsection{Experimental results}\label{section:ex}

In \cref{tbl:timings_mainalgo}, we give the timings in seconds for
different values of $m$ for \cref{algo:block-sparse-fglm}.  Our
implementation is based on Shoup's NTL for univariate polynomials, LinBox
for dense polynomial matrices and matrix generator
computations, and Eigen for sparse matrix-vector
products. It is dedicated to small prime fields; thus,
as is done in~\citep{fflas-ffpack} for dense matrices, we use machine
floats to do the bulk of the calculations. Explicitly, we rely on
Eigen's {\tt SparseMatrix<double, RowMajor>} class to store our
multiplication matrices and do the matrix-vector products, reducing
the results modulo $p$ afterwards.

All timings are measured on an Intel Xeon CPU E5-2667 with 128GB RAM
and 8 cores (or 16 via hyperthreading). For each value of
$m$ in $\{1,3,6\}$, we create and run $m$ threads in parallel.

In all cases, we start from multiplication matrices computed from a
{degrevlex} Gr\"obner basis in Magma~\citep{BoCaPl97}. The timings
reported here do not include this precomputation; instead, we refer
the reader to~\citep{FaMo17} for extensive experiments comparing the
runtime of two similar algorithmic stages (degree basis computation and
conversion to a lex ordering).  Rather than optimizing our
implementation, our main focus here was to demonstrate the effects of
parallelization, and how the Krylov sequence computation dominates the
runtime in these examples. In particular, improvements are possible for
polynomial matrix computations (minimal matrix generator, largest invariant
factor, \dots), but this is by no means a bottleneck here.

All our inputs as well as our source code are available at
\url{https://github.com/vneiger/block-sparse-fglm}.
Some systems
are well-known (Katsura or Eco from~\citep{Morgan88}), while we also
consider families of randomly generated inputs (some are
inspired from~\citep{FaMo17}). Systems rand1-$i$ have $3$ variables and
randomly generated equations (of degree depending on $i$), and
rand2-$i$ are similar with $4$ variables. These systems are
generically radical and in shape position (for the projection on the
first coordinate axis; the last column uses that
convention). Systems mixed1-$i$ and mixed2-$i$ have similar numbers of
variables as the previous ones, but some of their solutions are
multiple, so the ideals are not radical (and not in shape
position). Systems mixed3-$i$ have only one multiple root (the others
are simple), in increasing numbers of variables; they are not in shape
position. Systems W1-$\kappa$-$n$-$p$ are determinantal equations that
describe the computation of critical points for the projection $
(\alpha_1,\dots,\alpha_n) \mapsto \alpha_1$ on $V(f_1,\dots,f_p)$,
where $f_1,\dots,f_p$ are $p$ equations of degree $\kappa$ in $n$
variables.

We used OpenMP {\tt parallel
for} pragmas to parallelize the computation of the Krylov sequence,
as described in \cref{ssec:mainalgo}.
In the columns $m=1$, $m=3$, $m=6$, the numbers in brackets indicate the
fraction of the total time spent in computing the Krylov sequence
$(\mUt \mM^s)_{0 \le s< 2d}$, with $d=\lceil D/m\rceil$. This is
always by far the dominant factor.  In other words, almost all the time is
spent doing sparse matrix-vector products for matrices with machine float
entries. 

Increasing $m$ has two effects. On the plus side, it
decreases the length of the Krylov sequence. On the other side, while
the algorithm performs better by a factor often close to 3 for $m=3$,
the gain is not as significant for $m=6$, so that parallelization is
less effective. It is an interesting question to clarify this issue.  Besides, increasing $m$ also increases the time to
compute the output polynomials, through minimal generator computation,
high-order lifting, \dots. However, from the observed speedup and the
ratios in the table, we can conclude that this effect is minor.

\begin{table}[ht]
  \centering
  \setlength\tabcolsep{4pt}
  \caption{Timings (in seconds) for polynomials over $\F_{65537}$}
  \label{tbl:timings_mainalgo}
  \begin{tabular}{c|c|c|c|c|c|c|c}
    \textbf{name}& $\bm{n}$ & $\bm{D}$ & \textbf{density} $\boldsymbol\rho$ & $\bm{m = 1}$ & $\bm{m = 3}$ & $\bm{m = 6}$ 
                 & \textbf{radical/shape} \\
                 \hline
    rand1-26&3 &17576&0.06& 692(0.98) & 307(0.969) & 168(0.926) & yes/yes \\
    rand1-28&3 &21952&0.05&1261(0.983) & 471(0.971) & 331(0.944)& yes/yes  \\
    rand1-30&3 &27000&0.05&2191(0.986) & 786(0.974) & 512(0.946)& yes/yes  \\
    rand2-10&4 &10000&0.14&301(0.981) & 109(0.964) & 79(0.934)& yes/yes  \\
    rand2-11&4 &14641&0.13&851(0.987) & 303(0.975) & 239(0.961) & yes/yes \\
    rand2-12&4&20736&0.12&2180(0.99) & 784(0.982) & 648(0.972)& yes/yes  \\
    mixed1-22&3 &10864&0.07&207(0.973) & 75(0.947) & 58(0.909) &no/no \\
    mixed1-23&3 &12383&0.07&294(0.976) & 107(0.95) & 92(0.925) &no/no  \\
    mixed1-24&3 &14040&0.07&413(0.979) & 150(0.958) & 125(0.934) &no/no  \\
    mixed2-10&4 &10256&0.16&362(0.984) & 130(0.969) & 113(0.954)  &no/no \\
    mixed2-11&4 &14897&0.14&989(0.988) & 384(0.98) & 278(0.965)  &no/no \\
    mixed2-12&4 &20992&0.13&2480(0.991) & 892(0.984) & 807(0.977)  &no/no \\
    mixed3-12&12 &4109&0.5&75(0.963) & 27(0.941) & 21(0.929)  &no/no \\
    mixed3-13&13&8206&0.48&554(0.982) & 198(0.973) & 171(0.968)   &no/no\\
    eco12 &12 & 1024 & 0.55 & 1(0.801) & 1(0.641) & 1(0.63) & yes/yes  \\
    sot1 &5 & 8694 & 0.01 & 21(0.745) & 9(0.62) & 9(0.552) & yes/no \\
    W1-6-5-2 & 5 & 18000 & 0.2 & 2362(0.992) & 859(0.986) & 696(0.979)& yes/yes  \\
    W1-4-6-2 & 6 & 6480 & 0.32 & 184(0.981) & 66(0.965) & 54(0.951)& yes/yes  \\
    katsura10 &11 & 1024 & 0.63 & 1(0.836) & 1(0.679) & 1(0.672)  & yes/yes
  \end{tabular}
\end{table}

We refer the reader to experiments with systems of comparable (or
higher) degrees presented in~\citep{FaMo17} (the algorithm in that
reference does not use a blocking strategy). Recall that the output
in~\citep{FaMo17} is somewhat stronger than here (the authors compute a
Gr\"obner basis of the input ideal, so multiplicities are preserved). On
the other hand, for ideals not in shape position, Table~2 of that
reference reports only the calculation of the last polynomial in the
Gr\"obner basis, whereas our algorithm makes no distinction between
ideals in shape position or not.

\section{Using the original coordinates}\label{sec:original}

In this last section, we propose a refinement of the algorithm given
previously; the main new feature is that we focus on avoiding or reducing the
use of a generic linear form $\lf = t_1 X_1 + \cdots + t_n X_n$.  Indeed, such
a linear combination is likely to result in a multiplication matrix $\mM = t_1
\mM_1 + \cdots + t_n \mM_n$ significantly less sparse than $\mM_1,\dots,\mM_n$.
In all this section, we will work under the assumption that $\mM_1$ is the
sparsest matrix among $\mM_1,\dots,\mM_n$, and try to rely on computations
involving $\mM_1$ as much as possible.

All notation, such as $I$, $\residueI=\K[X_1,\dots,X_n]/I$, its basis
$\basis=(b_1,\dots,b_D)$, the local algebras $\residueI_i$, \dots, are as in the  previous two sections.
In particular, we write $V=V(I)=\{\balpha_1,\dots,\balpha_\dg\},$ with
$\balpha_i=(\alpha_{i,1},\dots,\alpha_{i,n}) \in \Kbar{}^n$ for all $i$.

\subsection{Overview}

The algorithm decomposes $V$ into two parts: for the first part,
we will be able to use $X_1$ as a linear form in our zero-dimensional
parametrization; for the remaining points, we will use a
random linear form $\lf=t_1 X_1 + \cdots + t_n X_n$ as
above. Throughout, we rely on the following operations: evaluations of
linear forms on successive powers of a given element in $\residueI$
(such as $1,X_1,X_2^2,\dots$ or $1,\lf,\lf^2,\dots$), linear
algebra over univariate polynomials, and some
operations on univariate polynomials related to the so-called {\em
  power projection}~\citep{Shoup94,Shoup99}.

The main algorithm is as follows. For the moment, we only describe its
main structure; the subroutines are detailed in the next subsections.

\begin{algorithm}[ht]
  \caption{$\mathsf{BlockParametrizationWithSplitting}(\mM_1,\dots,\mM_n,\mU,\mV,\lf,\mf$)}
  {\bf Input:} \vspace{-0.5em}
  \begin{itemize}
    \item $\mM_1,\dots,\mM_n$ defined as above
    \item  $\mU,\mV \in \mathbb{K}^{D \times m}$, for some block dimension  $m \in \{1,\dots,D\}$
    \item $\lf =t_1 X_1 + \cdots + t_n X_n$
    \item $\mf =y_2 X_2 + \cdots + y_n X_n$
  \end{itemize}
  {\bf Output:}  \vspace{-0.5em}
  \begin{itemize}
    \item polynomials $((\sqfree,V_1,\dots,V_n),\lf)$, with $\sqfree,V_1,\dots,V_n$ in $\K[T]$
  \end{itemize}
  \begin{enumerate}
    \item let $(F,G_1,\dots,G_n,X_1)=\mathsf{BlockParametrizationX}_1(\mM_1,\dots,\mM_n,\mU,\mV,\mf)$
    \item let $(\mat{\Delta}_s),(\mat{\Delta}'_s),(\mat{\Delta}''_s)$ be correction matrices associated
      to the previous calculation
    \item let $(R,W_1,\dots,W_n,\lf)=\mainalgoname{\sf Residual}(\mM_1,\dots,\mM_n,\mU,\mV,$ \\
      \phantom{bla} \hfill $(\mat{\Delta}_s),(\mat{\Delta}'_s),(\mat{\Delta}''_s),\lf)$
    \item let $((\sqfree,V_1,\dots,V_n),\lf)=\mathsf{Union}(((F,G_1,\dots,G_n),X_1), ((R,W_1,\dots,W_n),\lf))$
    \item \textbf{return} $((\sqfree,V_1,\dots,V_n),\lf)$
  \end{enumerate}
\end{algorithm}

The call to $\mathsf{BlockParametrizationX}_1$ computes a zero-dimensional
parametrization of a subset $V'$ of $V$ such that $X_1$ separates the
points of $V'$ (that is, takes pairwise distinct values on the points
of $V'$); this is done by using sequences of the form $(\mUt
\mM_1^s \mV)_{s \ge 0}$. The next step finds three 
sequences of correction matrices, using objects computed 
in the call to  $\mathsf{BlockParametrizationX}_1$.

We then apply a modified version of Algorithm $\mainalgoname$, which we call
{\sf Block} {\sf ParametrizationResidual}. It computes a zero-dimensional
parametrization of $V''=V\setminus V'$ using sequences such as $(\mUt
\mM^s \mV)_{s \ge 0} - \mat{\Delta}_s$, where $\mM= t_1 \mM_1 + \cdots
+ t_n \mM_n$. Subtracting the correction terms $\mat{\Delta}_s$ has
the effect of removing from $V$ the points in $V'$; in those
cases where $V'$ is a large subset of $V$, then $V''=V\setminus V'$ contains
a few points, and few values for the latter sequences will be needed.
The last step involves changing coordinates in $((F,G_1,\dots,G_n),X_1)$ to use
$\lf$ as a linear form instead, and performing the union of the two components
$V'$ and $V''$. These operations can be done in time $O(n D^{(\omega+1)/2})$
\citep[Lemmas~2 \&~3]{PoSc13b}; we will not discuss them further.

\subsection{Describing the subset \texorpdfstring{$V'$}{V'} of \texorpdfstring{$V$}{V}}

In this paragraph, we give the details of Algorithm
$\mathsf{BlockParametrizationX}_1$. This is done by specializing the
discussion of \cref{ssec:genseries} to the case $\lf=X_1$:
in the notation of that section, we take
$\mathfrak{S}=\{1,\dots,\dg\}$, that is, $V_{\mathfrak{S}}=V$, and we
let $r_1,\dots,r_c$ be the pairwise distinct values taken by $X_1$ on
$V$, for some $c \le \dg$.  For $j=1,\dots,c$, we write $T_j$ for the
set of all indices $i$ in $\{1,\dots,\dg\}$ such that
$\alpha_{i,1}=r_j$; the sets $T_1,\dots,T_c$ form a partition of
$\{1,\dots,\dg\}$. When $T_j$ has cardinality~$1$, we denote it as
$T_j=\{\sigma_j\}$, for some index $\sigma_j$ in $\{1,\dots,\dg\}$, so
that $\alpha_{\sigma_j,1}=r_j$.

For $i=1,\dots,\dg$, let us write $\nu_i$ for the degree of the minimal
polynomial of $X_1$ in $\residueI_i$; thus, this polynomial is
$(T-\alpha_{i,1})^{\nu_i}$. For $j$ in $\{1,\dots,c\}$, we define
$\mu_j$ as the maximum of all $\nu_i$, for $i$ in~$T_j$. As a result, the minimal
polynomial of $X_1$ in $\prod_{j \in T_j} \residueI_j$ is 
$(T-r_j)^{\mu_j}$, and the minimal polynomial of $X_1$ in $\residueI$ is
$M=\prod_{j \in \{1,\dots,c\}} (T-r_j)^{\mu_j}$.

Recall that for any linear form $\ell: \residueI \to \K$, the
extension $\ell: \residueI\otimes_\K \Kbar \to \Kbar$ can be written
uniquely as $\ell=\sum_{i\in \{1,\dots,\dg\}} \ell_i$, with
$\ell_i:\residueI_i \to \Kbar$; collecting terms, $\ell$ may also be
written as $\ell=\sum_{j \in \{1,\dots,c\}} \lambda_j$, with
$\lambda_j=\sum_{i \in T_j} \ell_i$.  Given such an $\ell$, we first
explain how to compute values of the form $\lambda_j(1)$. We will do
this for some values of $j$ only, namely those $j$ for which
$\mu_j=1$. 

\begin{lemma}\label{lemma:valuelambda}
  Let $\ell$ be in ${\rm hom}_\K(\residueI,\K)$ and let $M$ be the minimal
  polynomial of $X_1$ in $\residueI$. Then, the polynomial
  $\Omega((\ell(X_1^s))_{s\ge0},M)$ is well-defined and satisfies
  $$\Omega((\ell(X_1^s))_{s\ge0},M)(r_j) = \lambda_j(1) M'(r_j) \quad \text{for all $j$ in $\{1,\dots,c\}$ such that $\mu_j=1$.}$$
\end{lemma}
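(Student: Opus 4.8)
The plan is to compute the generating series $\genseries=\sum_{s\ge0}\ell(X_1^s)/T^{s+1}$ through its Chinese--Remainder decomposition over $\Kbar$ and then evaluate the numerator $M\genseries$ at each root $r_j$. Well-definedness of $\Omega((\ell(X_1^s))_{s\ge0},M)$ is immediate: since $M$ is the minimal polynomial of $X_1$ in $\residueI$ we have $M(X_1)=0$, so writing $M=\sum_k m_kT^k$ gives $\sum_k m_k\ell(X_1^{s+k})=\ell(M(X_1)X_1^s)=0$ for all $s$, i.e.\ $M$ cancels the sequence, and \cref{def:omega} applies. Next, write $\ell=\sum_{j=1}^c\lambda_j$ with $\lambda_j=\sum_{i\in T_j}\ell_i$ as in the discussion preceding the statement, and set $\genseries_j=\sum_{s\ge0}\lambda_j(X_1^s)/T^{s+1}$, so that $\genseries=\sum_{j=1}^c\genseries_j$.

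The key observation is that $(X_1-r_j)^{\mu_j}$ vanishes in $\prod_{i\in T_j}\residueI_i$ (this is exactly why $\mu_j$ was taken to be the maximum of the local degrees $\nu_i$, $i\in T_j$), so the same computation as above --- now with $\lambda_j$ and $(T-r_j)^{\mu_j}$ --- shows that $(T-r_j)^{\mu_j}$ cancels the sequence $(\lambda_j(X_1^s))_{s\ge0}$. By the generating-series characterization of cancellation recalled in \cref{section:linseq}, the product $B_j:=(T-r_j)^{\mu_j}\genseries_j$ is therefore a polynomial of degree less than $\mu_j$. Since $M=\prod_{k=1}^c(T-r_k)^{\mu_k}$, this gives
\[
  \Omega((\ell(X_1^s))_{s\ge0},M)=M\genseries=\sum_{j=1}^c B_j\prod_{k\ne j}(T-r_k)^{\mu_k},
\]
and evaluating at $T=r_j$ kills every summand of index $k\ne j$ (each carries the factor $(T-r_j)^{\mu_j}$ with $\mu_j\ge1$), leaving $\Omega((\ell(X_1^s))_{s\ge0},M)(r_j)=B_j(r_j)\prod_{k\ne j}(r_j-r_k)^{\mu_k}$.

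It remains to specialize to $\mu_j=1$. Then the minimal polynomial of $X_1$ in $\prod_{i\in T_j}\residueI_i$ is $T-r_j$, so $X_1$ acts there as the scalar $r_j$; hence $\lambda_j(X_1^s)=r_j^s\lambda_j(1)$ and $\genseries_j=\lambda_j(1)\sum_{s\ge0}r_j^s/T^{s+1}=\lambda_j(1)/(T-r_j)$, i.e.\ $B_j=\lambda_j(1)$ is a constant. On the other hand, differentiating $M=(T-r_j)\prod_{k\ne j}(T-r_k)^{\mu_k}$ and evaluating at $r_j$ gives $M'(r_j)=\prod_{k\ne j}(r_j-r_k)^{\mu_k}$, and substituting both into the previous display yields $\Omega((\ell(X_1^s))_{s\ge0},M)(r_j)=\lambda_j(1)M'(r_j)$. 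There is no serious obstacle here beyond carefully tracking the local minimal polynomials of $X_1$; the single step that deserves a clean justification is the passage from the fact that $(T-r_j)^{\mu_j}$ cancels the block sequence to the conclusion that $B_j$ is a polynomial of degree $<\mu_j$, which is precisely the generating-series reformulation of \cref{section:linseq}. Alternatively one could read $B_j(r_j)$ off \cref{lemma:formula} applied with $v=1$ and $\lf=X_1$, but the scalar action of $X_1$ when $\mu_j=1$ makes the direct computation above shorter.
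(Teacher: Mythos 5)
Your proof is correct, and it takes a genuinely different and cleaner route than the paper's. The paper's proof immediately splits the index set into $\mathfrak{e}=\{j:\mu_j=1\}$ and its complement $\mathfrak{f}$, invokes \cref{lemma:formula} (the differential-operator formula for generating series) to handle the $\mathfrak{f}$ part, and for the $\mathfrak{e}$ part runs a structural argument on the dual operators $\Lambda_i$ to show they cannot involve $\partial/\partial X_1$ when $\nu_i=1$, so that $\ell_i(X_1^s)=\ell_i(1)\alpha_{i,1}^s$. You instead treat every block $j$ uniformly via the elementary observation that the minimal polynomial of $X_1$ in $\prod_{i\in T_j}\residueI_i$ is $(T-r_j)^{\mu_j}$, so $(T-r_j)^{\mu_j}$ cancels $(\lambda_j(X_1^s))_{s\ge 0}$ and $B_j=(T-r_j)^{\mu_j}\genseries_j$ is a polynomial of degree $<\mu_j$; only then do you specialize to $\mu_j=1$, where the fact that $X_1$ acts as the scalar $r_j$ on the block immediately gives $\lambda_j(X_1^s)=r_j^s\lambda_j(1)$, sidestepping the operator analysis entirely. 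This buys you a shorter and more self-contained argument that does not depend on \cref{lemma:formula} at all; what the paper's version buys in exchange is that it exhibits the same type of generating-series decomposition that \cref{lemma:formula} provides for general $\lf$, which keeps the two proofs stylistically parallel. One step you handle somewhat implicitly but correctly: the decomposition $\ell_\Kbar=\sum_j\lambda_j$ and the identification of $\lambda_j(X_1^s)$ with the value of $\lambda_j$ on the image of $X_1^s$ under the projection to $\prod_{i\in T_j}\residueI_i$ are exactly what make "$(X_1-r_j)^{\mu_j}$ acts as zero on the $j$-th block sequence" a valid statement; you may want to make that projection explicit to match the paper's conventions in \cref{ssec:dual}.
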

\begin{proof}
  Let $\mathfrak{e}$ be the set of all indices $j$ in $\{1,\dots,c\}$
  such that $\mu_j=1$, and let $\mathfrak{f}=\{1,\dots,c\}-\mathfrak{e}$;
  this definition allows us to split the generating series
  of sequence $(\ell(X_1^s))_{s\ge 0}$ as
  \[
    \sum_{s \ge 0} \frac{\ell(X_1^s)}{T^{s+1}  }
  = \sum_{j \in \{1,\dots,c\}}\sum_{i\in T_j} 
  \sum_{s \ge 0} \frac{\ell_i(X_1^s)}{T^{s+1}} 
  =\sum_{j \in \mathfrak{e}}\sum_{i\in T_j}\sum_{s \ge 0}  \frac{\ell_i(X_1^s)}{T^{s+1}} +
  \sum_{j \in \mathfrak{f}}\sum_{i\in T_j}\sum_{s \ge 0}  \frac{\ell_i(X_1^s)}{T^{s+1}}.
  \]
  Using \cref{lemma:formula} with $\lf=X_1$ and $v=1$, any sum $\sum_{s \ge 0} \ell_i(X_1^s)/T^{s+1}$ 
  in the second summand
  can be rewritten as 
  $E_i/(T-r_j)^{v_i}$,
  for some integer $v_i$, and for some polynomial $E_i \in \Kbar[T]$ of degree less than
  $v_j$. Next, take $j$ in $\mathfrak{e}$. Since $\mu_j=1$, $\nu_i=1$ for all $i$ in $T_j$,
  so that
  each $\ell_i$ takes the form 
  $\ell_i: f \mapsto (\Lambda_{i}(f))(\balpha_i)$, 
  for a differential operator $\Lambda_{i}$ that does not involve $\partial/\partial
  X_1$. Since all terms of positive order in $\Lambda_i$ involve one of
  $\partial/\partial X_2,\dots,\partial/\partial X_n$, they cancel
  $X_1^s$ for $s\ge 0$. Thus, $\ell_i(X_1^s)$ can be rewritten 
  as $\ell_{i,1} \alpha_{i,1}^s$, for some constant $\ell_{i,1}$,
  and the generating series of these terms is 
  $$\frac {\ell_{i,1}}{T-\alpha_{i,1}}=\frac {\ell_{i,1}}{T-r_j}.$$
  Remarking  that we can write $\ell_{i,1}=\ell_i(1)$,
  altogether, the sum in question can be written
  \[
    \sum_{s \ge 0} \frac{\ell(X_1^s)}{T^{s+1}  }
  =\sum_{j \in \mathfrak{e}} 
  \frac{ \sum_{i\in T_j}  \ell_{i}(1) }{T-r_j }
  + \sum_{j \in \mathfrak{f}} \frac{D_j}{(T-r_j )^{x_j}}
  = \sum_{j \in \mathfrak{e}} 
  \frac{ \lambda_j(1) }{T-r_j }
  + \sum_{j \in \mathfrak{f}} \frac{D_j}{(T-r_j )^{x_j}}
  \]
  for some integers $\{x_j \mid j \in \mathfrak{f}\}$ and
  polynomials $\{D_j \mid j \in \mathfrak{f}\}$ such that
  $\deg(D_j) < x_j$ holds, and with $D_j$ and $T-r_j $
  coprime. In particular, the minimal polynomial of  $(\ell(X_1^s))_{s\ge 0}$ is $N=\prod_{j\in
  \mathfrak{e}}(T-r_j) \prod_{j \in  \mathfrak{f}}(T-r_j)^{x_j}$.

  The polynomial $N$ divides $M$, so that $x_j \le \mu_j$ holds for all $j$
  in $\mathfrak{f}$.  As a result,
  $\Omega((\ell(X_1^s))_{s\ge0},M)$ is well-defined and is given by
  \begin{align*}
    \Omega((\ell(X_1^s))_{s\ge0},M)=&
    \sum_{j \in \mathfrak{e}}
    \Big(
    \lambda_j(1) \prod_{\iota \in \mathfrak{e}-\{j\}}(T-r_\iota)\Big)
    \Big(\prod_{j \in \mathfrak{f}}(T-r_j)^{\mu_j} \Big)\\
                                    &+
                                    \sum_{j\in \mathfrak{f}}
                                    \Big(  (T-r_j)^{\mu_j-x_j} D_j
                                    \prod_{\iota \in \mathfrak{f}-\{j\}}(T-r_j)^{\mu_\iota}\Big)
                                    \Big(\prod_{j\in \mathfrak{e}} (T-r_j) \Big).
  \end{align*}
  This implies that $$\Omega((\ell(X_1^s))_{s\ge0},M)(r_k) =\lambda_k(1) 
  \prod_{\iota \in \mathfrak{e}-\{k\}}(r_k-r_\iota)
  \prod_{j \in \mathfrak{f}}(r_k-r_j)^{\mu_j} = \lambda_k(1) M'(r_k)$$ 
  holds for all $k$ in $\mathfrak{e}$.
\end{proof}

We now show how this result allows us to use sequences of the form
$(\ell(X_1^s))_{s \ge 0}$ to compute a zero-dimensional
parametrization of a subset $V'$ of $V$. Precisely, we characterize
the set $V'$ as follows: for $i$ in $\{1,\dots,\dg\}$, $\balpha_i$ is
in $V'$ if and only if:
\begin{itemize}
  \item for $i'$ in $\{1,\dots,\dg\}$, with $i'\ne i$, $\alpha_{i',1} \ne
    \alpha_{i,1}$;
  \item $\residueI_i$ is a reduced algebra, or equivalently, $I_i$ is radical (see \cref{ssec:dual} 
    for the notation $\residueI_i,I_i$).
\end{itemize}
We denote by $\mathfrak{A}\subset \{1,\dots,\dg\}$ the set of
corresponding indices $i$, and we let
$\mathfrak{B}=\{1,\dots,\dg\}\setminus\mathfrak{A}$, so that we have
$V'=V_{\mathfrak{A}}$ and $V''=V_{\mathfrak{B}}$.  Remark that $X_1$
separates the points of $V'$.

Correspondingly, we define $\mathfrak{a}$ as the set of all indices
$j$ in $\{1,\dots,c\}$ such that $\sigma_j$ is in $\mathfrak{A}$. In
other words, $j$ is in $\mathfrak{a}$ if and only if $T_j$ has
cardinality $1$, so that $T_j=\{\sigma_j\}$, and
$\residueI_{\sigma_j}$ is reduced.  The algorithm in this paragraph
will compute a zero-dimensional parametrization of $V_{\mathfrak{A}}$;
we will use the following lemma to perform the decomposition.

\begin{lemma}\label{lemma:acb2}
  Let $j$ be in $\{1,\dots,c\}$ such that $\mu_j=1$, let $\lambda$ be
  a linear form over $\prod_{i \in T_j} \residueI_i$ and let $\mf=y_2
  X_2 + \cdots + y_n X_n$, for some $y_2,\dots,y_n$ in $\K$. Define constants
  $a=\lambda(1)$, $b=\lambda(\mf)$, $c=\lambda(\mf^2)$ in $\Kbar$.
  Then, $j$ is in $\mathfrak{a}$
  if and only if, for a generic choice of $\lambda$ and $\mf$, $ac=b^2$.
\end{lemma}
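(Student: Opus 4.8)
The plan is to recast the equivalence as the polynomial identity ``$ac-b^2\equiv 0$'' in the parameters describing $\lambda$ and $\mf$, and then to split according to whether $|T_j|\ge 2$ or $|T_j|=1$. Parametrizing each $\ell_i$ ($i\in T_j$) by the coefficients appearing in \cref{ell_param} and $\mf$ by $(y_2,\dots,y_n)$, the scalars $a,b,c$ become polynomials in these parameters, so ``$ac=b^2$ for a generic choice'' is the same as ``$ac=b^2$ for every choice''. Recall that $\mu_j=1$ forces $\nu_i=1$ for each $i\in T_j$, hence $X_1$ acts as the scalar $\alpha_{i,1}=r_j$ on each $\residueI_i$; thus $\residueI_i$ is a local algebra in $X_2,\dots,X_n$ with maximal ideal $\m$ generated by the images of $X_2-\alpha_{i,2},\dots,X_n-\alpha_{i,n}$. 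For the easy implication: if $j\in\mathfrak{a}$ then $T_j=\{\sigma_j\}$ is a singleton and $\residueI_{\sigma_j}$ is a reduced local Artinian $\Kbar$-algebra, hence $\cong\Kbar$, so $\lambda$ is simply $f\mapsto u\,f(\balpha_{\sigma_j})$ for some scalar $u$; then $a=u$, $b=u\,\mf(\balpha_{\sigma_j})$, $c=u\,\mf(\balpha_{\sigma_j})^2$, and $ac=b^2$ for all $\lambda$ and $\mf$.

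For the converse I would argue by contrapositive: assuming $j\notin\mathfrak{a}$, it suffices to exhibit one pair $(\lambda,\mf)$ with $ac\ne b^2$. First, if $|T_j|\ge 2$, I would specialize $\lambda$ to $\sum_{i\in T_j}u_i\,\mathrm{ev}_{\balpha_i}$ (a sum of evaluation functionals), so that $\left[\begin{smallmatrix}a&b\\ b&c\end{smallmatrix}\right]=\sum_{i\in T_j}u_i\left[\begin{smallmatrix}1\\ m_i\end{smallmatrix}\right]\left[\begin{smallmatrix}1&m_i\end{smallmatrix}\right]$ with $m_i=\mf(\balpha_i)$. The points $\balpha_i$ ($i\in T_j$) are pairwise distinct and all have first coordinate $r_j$, so they differ in some coordinate among $X_2,\dots,X_n$ and for generic $\mf$ the $m_i$ are pairwise distinct; taking exactly two of the $u_i$ nonzero makes the above determinant equal $(m_{i_1}-m_{i_2})^2\,u_{i_1}u_{i_2}\ne 0$, whence $ac-b^2\not\equiv 0$.

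The remaining case is $|T_j|=1$, say $T_j=\{i\}$, with $\residueI_i$ non-reduced, so its maximal ideal $\m$ is nonzero. Here I would write $\mf=m+\widetilde{\mf}$ with $m=\mf(\balpha_i)\in\Kbar$ and $\widetilde{\mf}=\sum_{l=2}^n y_l(X_l-\alpha_{i,l})\in\m$; expanding $\mf^2=m^2+2m\widetilde{\mf}+\widetilde{\mf}^2$ and using linearity of $\lambda$ collapses $ac-b^2$ to $\lambda(1)\,\lambda(\widetilde{\mf}^2)-\lambda(\widetilde{\mf})^2$. For fixed $\mf$, regard this as a quadratic form $Q$ in the variable $\lambda$; since $\mathrm{char}(\K)>D\ge 2$ (so the characteristic is $\ne 2$), $Q\equiv 0$ is equivalent to the vanishing of its polar bilinear form, and pairing that form against $\mathrm{ev}_{\balpha_i}$ (which annihilates $\m$, hence $\widetilde{\mf}$ and $\widetilde{\mf}^2$) leaves $\tfrac12\lambda(\widetilde{\mf}^2)$. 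Thus $Q\equiv 0$ would force $\widetilde{\mf}^2=0$ in $\residueI_i$. If instead $\widetilde{\mf}^2\ne 0$ for generic $\mf$, then $Q\not\equiv 0$ and a generic $\lambda$ yields $ac\ne b^2$. Otherwise every square of a linear form from $\m$ vanishes, so polarization gives $V_1^2=0$ for $V_1$ the span of the images of the $X_l-\alpha_{i,l}$; since $\m=V_1+\m^2$ this forces $\m^2\subseteq V_1^2+\m^3=\m^3$, hence $\m^2=0$ by Nakayama, and then $ac-b^2=-\lambda(\widetilde{\mf})^2$ with $\widetilde{\mf}$ nonzero in $\m\cong\m/\m^2$ for generic $\mf$, so a generic $\lambda$ still makes it nonzero.

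The step I expect to be the main obstacle is this last (non-reduced singleton) case: certifying that $ac-b^2$ genuinely survives as a nonzero polynomial, which the polar-form identity together with the $\m^2\ne 0$ versus $\m^2=0$ dichotomy are designed to handle; one should also check that each specialization invoked (evaluation functionals, only two nonzero weights, a generic $\mf$ that separates the $\balpha_i$ or stays nonzero in $\m/\m^2$) indeed lies in the relevant parameter space. Everything else is $2\times 2$ linear algebra built around the moment curve $m\mapsto(1,m,m^2)$, of which a single scaled point is the only configuration forcing $\left[\begin{smallmatrix}a&b\\ b&c\end{smallmatrix}\right]$ to be singular for every parameter choice.
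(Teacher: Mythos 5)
Your proof is correct, and it arrives at the same conclusion as the paper, but by a partly different route that is worth contrasting.

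The easy direction ($j\in\mathfrak{a}\Rightarrow ac\equiv b^2$) is handled the same way in both. For the converse (by contrapositive, exhibiting a specialization with $ac\ne b^2$), the paper and you partition $\mathfrak{a}^c$ differently, and the paper's argument is more elementary. The paper distinguishes between (i) some $\residueI_\sigma$, $\sigma\in T_j$, is non-reduced, and (ii) $|T_j|\ge 2$ with all $\residueI_\sigma$ reduced. In case (i) it exploits the explicit dual parametrization from~\cref{ell_param}: setting all $\bu_{\sigma'}$ to zero for $\sigma'\ne\sigma$, setting $u_{\sigma,1}=0$, and keeping a single $y_\rho$ nonzero (one for which $P_{\sigma,\rho}\not\equiv 0$) makes $a=c=0$, so $ac-b^2$ collapses to $-\bigl(P_{\sigma,\rho}(u_{\sigma,2},\dots,u_{\sigma,D_\sigma})\,y_\rho\bigr)^2$, a visibly nonzero polynomial --- no structural input is needed beyond the form of $\Lambda_\sigma$. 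In case (ii), reduced local algebras at a $\Kbar$-rational point are $\cong\Kbar$, so only the $u_{\sigma,1}$ survive and one gets the same $u_\sigma u_{\sigma'}\bigl(\mf(\balpha_\sigma)-\mf(\balpha_{\sigma'})\bigr)^2$ determinant as you. Your split is instead on $|T_j|\ge 2$ (handled by evaluation functionals and the moment-matrix determinant, independently of reducedness) versus $|T_j|=1$ non-reduced; for the latter you replace the paper's one-line specialization with a more conceptual argument via the polar bilinear form of the quadratic $\lambda\mapsto\lambda(1)\lambda(\widetilde{\mf}^2)-\lambda(\widetilde{\mf})^2$, plus a Nakayama step to dispatch the $\m^2=0$ subcase. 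Both are valid (in particular your use of $\mathrm{char}\,\K\ne 2$ is justified since $D\ge 2$ here), but the paper's specialization is shorter because it reads off the needed nonvanishing directly from the coefficients $P_{\sigma,\rho}$ of the dual basis, which you reconstruct abstractly. Your approach has the small advantage of not depending on the coordinate expression in~\cref{ell_param} for the non-reduced case, and of treating all of $|T_j|\ge 2$ at once. One cosmetic point: where you write ``every square of a linear form from $\m$ vanishes,'' what you actually have is that every square of an element of $V_1=\mathrm{span}(X_l-\alpha_{i,l})$ vanishes; since $V_1\subseteq\m$ the argument still goes through, but the wording should be tightened.
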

\begin{proof}
  The assumption that $\mu_j=1$ means that for all $i$ in $T_j$,
  $\nu_i=1$. The linear 
  form $\lambda$ can be uniquely written as a sum $\lambda=\sum_{i \in T_j}
  \ell_i$, where each $\ell_i$ is in ${\rm hom}_\Kbar(\residueI_i,\Kbar)$.
  The fact that all $\nu_i$ are equal to $1$ then implies that each $\ell_i$ takes the form 
  $$\ell_i: f \mapsto (\Lambda_{i}(f))(\balpha_i),$$
  where $\Lambda_{i}$ is a differential operator that does not 
  involve $\partial/\partial X_1$. Thus, as in \cref{ell_param}, we can write a general
  $\Lambda_i$ of this form as
  $$\Lambda_i: f \mapsto u_{i,1} f + \sum_{2 \le r \le n}
  P_{i,r}(u_{i,2},\dots,u_{i,D_i}) \frac{\partial}{\partial X_j} f +
  \sum_{2 \le r \le s \le n} P_{i,r,s}(u_{i,2},\dots,u_{i,D_i})
  \frac{\partial^2}{\partial X_j\partial X_k} f +
  \tilde\Lambda_i(f),$$ where all terms in $\tilde \Lambda_i$ have
  order at least $3$, $\bu_i=(u_{i,1},\dots,u_{i,D_i})$ are parameters and
  $(P_{i,r})_{2 \le r \le n}$ and $(P_{i,r,s})_{2 \le r \le s \le n}$
  are linear forms in $u_{i,2},\dots,u_{i,D_i}$.
  We obtain
  \begin{align*}
    \Lambda_i(1)   &= u_{i,1} \\
    \Lambda_i(\mf)   &= u_{i,1} \mf +\sum_{2 \le r \le n}P_{i,r}(u_{i,2},\dots,u_{i,D_i})y_r \\
    \Lambda_i(\mf^2) &= u_{i,1} \mf^2  +2 \mf \sum_{2 \le r \le n}P_{i,r}(u_{i,2},\dots,u_{i,D_i})y_r + 
    2\sum_{2 \le r \le s \le n} P_{i,r,s}(u_{i,2},\dots,u_{i,D_i})y_ry_s,
  \end{align*}
  which gives
  \begin{align*}
    a &= \sum_{i\in T_j}u_{i,1} \\
    b &= \sum_{i\in T_j}u_{i,1} \mf(\balpha_i) +\sum_{i \in T_j, 2 \le r \le n}P_{i,r}(u_{i,2},\dots,u_{i,D_i})y_r \\
    c &= \sum_{i\in T_j}u_{i,1} \mf(\balpha_i)^2
        +2 \sum_{i \in T_j, 2 \le r \le n}\mf(\balpha_i) P_{i,r}(u_{i,2},\dots,u_{i,D_i})y_r    
        +2 \sum_{i \in T_j, 2 \le r \le s \le n} P_{i,r,s}(u_{i,2},\dots,u_{i,D_i})y_ry_s.
  \end{align*}
  Suppose first that $j$ is in $\mathfrak{a}$. Then, $T_j=\{\sigma_j\}$, so we 
  have only one term $\Lambda_{\sigma_j}$ to consider, and $\residueI_{\sigma_j}$ 
  is reduced, so that all coefficients $P_{\sigma_j,r}$ and
  $P_{\sigma_j,r,s}$ vanish. Thus, we are left in
  this case with
  $$
  a = u_{\sigma_j,1}, \quad
  b = u_{\sigma_j,1} \mf(\balpha_{\sigma_j}), \quad
  c = u_{\sigma_j,1} \mf(\balpha_{\sigma_j})^2,
  $$ so that we have $ac=b^2$, for {\em any} choice of $\lambda$ and
  $\mf$. Now, we suppose that $j$ is not in $\mathfrak{a}$, and we prove
  that for a generic choice of $\lambda$ and $\mf$, $ac-b^2$ is nonzero.
  The quantity $ac-b^2$ is a polynomial in the parameters
  $(\bu_i)_{i\in T_j}$, and $(y_i)_{i \in \{2,\dots,n\}}$, and we have
  to show that it is not identically zero. We discuss two cases; in both
  of them, we prove that a suitable specialization of $ac-b^2$ is
  nonzero.

  Suppose first that for at least one index $\sigma$ in $T_j$,
  $\residueI_\sigma$ is not reduced. In this case, there exists as least one
  index $\rho$ in $\{2,\dots,n\}$ such that
  $P_{\sigma,\rho}(u_{\sigma,2},\dots,u_{\sigma,D_\sigma})$ is not
  identically zero. Let us set all $\bu_{\sigma'}$
  to $0$, for $\sigma'$ in $T_j-\{\sigma\}$, as well as $u_{\sigma,1}$,
  and all $y_r$ for $r\ne \rho$. Then, under this specialization,
  $ac-b^2$ becomes
  $-(P_{\sigma,\rho}(u_{\sigma,2},\dots,u_{\sigma,D_\sigma})y_\rho)^2$
  which is nonzero, hence $ac-b^2$ itself is nonzero.

  Else, since $j$ is not in $\mathfrak{a}$, we can assume that $T_j$
  has cardinality at least $2$, with $\residueI_\sigma$ reduced for all $\sigma$
  in $T_j$ (so that $P_{\sigma,r}$ and $P_{\sigma,r,s}$ vanish for 
  all such $\sigma$ and all $r,s$). Suppose that $\sigma$ and $\sigma'$ are two indices in
  $T_j$; we set all indices $u_{\sigma'',1}$ to zero, for $\sigma''$
  in $T_j-\{\sigma,\sigma'\}$. We are left with
  $$
  a=u_{\sigma,1}+u_{\sigma',1},\quad
  b=u_{\sigma,1}\mf(\balpha_{\sigma})+u_{\sigma',1}\mf(\balpha_{\sigma'}),\quad
  c=u_{\sigma,1}\mf(\balpha_{\sigma})^2+u_{\sigma',1}\mf(\balpha_{\sigma'})^2.
  $$
  Then, $ac-b^2$ is equal to $2u_{\sigma,1}u_{\sigma',1}(\mf(\balpha_{\sigma})-\mf(\balpha_{\sigma'}))^2$,
  which is nonzero, since $\balpha_\sigma \ne \balpha_{\sigma'}$.
\end{proof}

The previous lemmas allow us to write Algorithm
$\mathsf{BlockParametrizationX}_1$. After computing $M$, we determine its
factor $F=\prod_{j \in \{1,\dots,c\}, \mu_j=1} (T-r_j)$, which is
obtained by taking the squarefree part of $M$ and dividing it by
its gcd with $\gcd(M,M')$. We split this polynomial further using the previous
lemma in order to find $\prod_{j \in \mathfrak{a}} (T-r_j)$, and we
conclude using the same kind of calculations as in Algorithm
$\mainalgoname$.

\begin{algorithm}[ht]
  \caption{$\mainalgoname{\sf X}_1(\mM_1,\dots,\mM_n,\mU,\mV,\mf$)}
  {\bf Input:} \vspace{-0.5em}
  \begin{itemize}
    \item multiplication matrices $\mM_1,\dots,\mM_n$ in $\K^{D \times D}$
    \item  $\mU,\mV \in \mathbb{K}^{D \times m}$, for some block dimension  $m \in \{1,\dots,D\}$
    \item $\mf =y_2 X_2 + \cdots + y_n X_n$
  \end{itemize}
  {\bf Output:}  \vspace{-0.5em}
  \begin{itemize}
    \item polynomials $((F,G_1,\dots,G_n),X_1)$, with $F,G_1,\dots,G_n$ in $\K[T]$
  \end{itemize}
  \begin{enumerate}
    \item\label{X1step3} { compute $\mat{L}_s = \mUt\mM_1^s$ for $s=0,\dots,2d-1$, with $d = \lceil D/m \rceil$}
    \item\label{X1step4} { compute $\seqelt{s,\mU,\mV}= \mat{L}_s\mV$ for $s=0,\dots, 2d-1$}
    \item\label{X1step5} { compute a minimal matrix generator $\mat{P}_{\mU,\mV}$ of $(\seqelt{s,\mU,\mV})_{0 \le s < 2d}$}
    \item\label{X1step6} { let $M$ be the largest invariant factor of $\mat{P}_{\mU,\mV}$}
    \item\label{X1step7} { let $F$ be  the squarefree part  of $M$}
    \item\label{X1step7b} let $F = F /\gcd(F, \gcd (M, M'))$
    \item\label{X1step8} { let $\row{a}_1 = [M~0 ~\cdots~ 0] (\mat{P}_{\mU,\mV})^{-1}$}
    \item let $\mN = y_2 \mM_2 + \cdots+y_n \mM_n$
    \item \textbf{for} $i=0,1,2$ \textbf{do} \\
      \phantom{for}  let $A_i = \mathsf{{\sf ScalarNumerator}}(\mat{P}_{\mU,\mV}, M, 
      \mN^i \col{\varepsilon}_1, 1, \row{a}_1, (\mat{L}_s)_{0 \le s<d})$
    \item\label{X1step9} let $F = \gcd(F, A_0A_2-A_1^2)$
    \item\label{X1step10} \textbf{for} $i=2,\dots,n$ \textbf{do} \\
      \phantom{for}  let $A_{X_i} = \mathsf{{\sf ScalarNumerator}}(\mat{P}_{\mU,\mV}, M, \mM_i\col{\varepsilon}_1, 1, \row{a}_1, (\mat{L}_s)_{0 \le s <d})$
    \item\label{X1step11}   \textbf{return} $((F,T,A_{X_2}/ A_0 \bmod F, \dots,A_{X_n}/A_{0} \bmod F),X_1)$
  \end{enumerate}  \label{algo:X1block-sparse-fglm}
\end{algorithm}

\begin{lemma}
  For generic choices of $\mU$, $\mV$ and $\mf$, the output
  $((F,G_1,\dots,G_n),X_1)$ of the algorithm $\mathsf{BlockParametrizationX}_1$
  is a zero-dimensional parametrization of $V_{\mathfrak{A}}$.
\end{lemma}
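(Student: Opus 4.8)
The plan is to follow the correctness argument for \cref{algo:block-sparse-fglm}, with $\mM_1$ playing the role of $\mM$ and $M$ that of $\minpoly$, and to use the two structural results \cref{lemma:valuelambda} and \cref{lemma:acb2} to pin down exactly which roots survive the two gcd operations in \cref{algo:X1block-sparse-fglm}. First I would introduce the linear form $\ell:\residueI\to\K$ attached to the first column of $\mU$, namely $f=\sum_{i=1}^D f_i b_i\mapsto\sum_{1\le i\le D}f_i u_{i,1}$, and write $\ell=\sum_{j=1}^c\lambda_j$ with $\lambda_j=\sum_{i\in T_j}\ell_i$ as in \cref{ssec:genseries}. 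By \cref{randXY} applied to $\mM_1$, for generic $\mU,\mV$ the matrix $\mat{P}_{\mU,\mV}$ obtained at Step~\ref{X1step5} is a minimal left generator of $(\mUt\mM_1^s\mV)_{s\ge0}$ and $M$ is the minimal polynomial of $X_1$ in $\residueI$, so $M=\prod_{j=1}^c(T-r_j)^{\mu_j}$. Since ${\rm char}(\K)>D\ge\deg(M)$, we get $\gcd(M,M')=\prod_{j=1}^c(T-r_j)^{\mu_j-1}$, hence the squarefree part of $M$ is $\prod_{j=1}^c(T-r_j)$ and, after Step~\ref{X1step7b},
\[
F=\prod_{j\in\{1,\dots,c\},\,\mu_j=1}(T-r_j).
\]

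Next I would identify the outputs of the $\mathsf{ScalarNumerator}$ calls: by \cref{lemma:omegaOmega}, applied with $\mM_1$ and $M$ in place of $\mM$ and $\minpoly$, and since $\mN^i\col{\varepsilon}_1$ and $\mM_i\col{\varepsilon}_1$ are the coordinate vectors in $\basis$ of $\mf^i$ and of $X_i$, one obtains $A_i=\Omega((\ell(\mf^i X_1^s))_{s\ge0},M)$ for $i=0,1,2$ and $A_{X_i}=\Omega((\ell(X_i X_1^s))_{s\ge0},M)$ for $i=2,\dots,n$; these are well defined because $M$ cancels every sequence $(\ell(v X_1^s))_{s\ge0}$ with $v\in\residueI$. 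The heart of the argument is a mild extension of \cref{lemma:valuelambda}: for any $v\in\residueI$ and any index $j$ with $\mu_j=1$,
\[
\Omega((\ell(v X_1^s))_{s\ge0},M)(r_j)=\lambda_j(v)\,M'(r_j).
\]
I would prove this exactly as \cref{lemma:valuelambda}, from \cref{lemma:formula} with $\lf=X_1$: when $\mu_j=1$ the element $X_1-r_j$ vanishes in every $\residueI_i$ with $i\in T_j$, so that block of the series $\sum_{s\ge0}\ell(v X_1^s)/T^{s+1}$ is the simple pole $\lambda_j(v)/(T-r_j)$, whereas every block with $\mu_{j'}\ge2$ gives a pole of order at most $\mu_{j'}$; multiplying by $M$ and evaluating at the simple root $r_j$ yields the formula.

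With this, for $j$ with $\mu_j=1$ the value of $A_0A_2-A_1^2$ at $r_j$ equals $\big(\lambda_j(1)\lambda_j(\mf^2)-\lambda_j(\mf)^2\big)\,M'(r_j)^2$; since $M'(r_j)\ne0$ and a generic $\mU$ makes each $\lambda_j$ avoid the proper ``bad'' subvariety of \cref{lemma:acb2} (the map sending the first column of $\mU$ to the induced linear form on $\prod_{i\in T_j}\residueI_i$ has Zariski-dense image), that lemma shows this value is zero if and only if $j\in\mathfrak{a}$. Hence Step~\ref{X1step9} gives $F=\prod_{j\in\mathfrak{a}}(T-r_j)$, a squarefree polynomial of degree $|\mathfrak{A}|$ whose roots are the pairwise distinct values $\{\alpha_{i,1}\mid i\in\mathfrak{A}\}$. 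Finally, for $j\in\mathfrak{a}$ one has $T_j=\{\sigma_j\}$ with $\residueI_{\sigma_j}$ a reduced Artinian local $\Kbar$-algebra, hence isomorphic to $\Kbar$, so $\lambda_j=\ell_{\sigma_j}$ equals $f\mapsto u_{\sigma_j,1}f(\balpha_{\sigma_j})$ with $u_{\sigma_j,1}\ne0$ for generic $\mU$. Therefore $A_0(r_j)=u_{\sigma_j,1}M'(r_j)\ne0$, so $A_0$ is invertible modulo $F$, and for $i\ge2$ one gets $(A_{X_i}/A_0)(r_j)=\lambda_j(X_i)/\lambda_j(1)=\alpha_{\sigma_j,i}$, while the returned first coordinate $T$ takes the value $r_j=\alpha_{\sigma_j,1}$. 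Since each returned polynomial has degree less than $\deg(F)$ and agrees with $\balpha_{\sigma_j}$ at the $\deg(F)$ roots of $F$, the tuple $((F,T,A_{X_2}/A_0\bmod F,\dots,A_{X_n}/A_0\bmod F),X_1)$ is a zero-dimensional parametrization of $V_{\mathfrak{A}}$.

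It then remains to check that the genericity hypotheses used above — those of \cref{randXY}, the nonvanishing of the $u_{\sigma_j,1}$, and one instance of \cref{lemma:acb2} for each $j\notin\mathfrak{a}$ — together cut out a nonempty Zariski-open subset of the parameter space of $(\mU,\mV,y_2,\dots,y_n)$, so that they hold simultaneously for a generic choice. I expect the main obstacle to be exactly this coordination: making precise that a single generic $\mU$ induces, for all $j$ at once, a linear form $\lambda_j$ generic enough to invoke \cref{lemma:acb2}, and carrying the proof of \cref{lemma:valuelambda} cleanly through for an arbitrary multiplier $v$.
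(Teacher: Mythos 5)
Your argument follows the paper's proof essentially step by step: the same use of \cref{randXY} to identify $M$, \cref{lemma:valuelambda} to read off values at simple roots, \cref{lemma:acb2} to characterize $\mathfrak{a}$, and the reduced-algebra structure to conclude. The ``mild extension'' you state, $\Omega((\ell(vX_1^s))_{s\ge0},M)(r_j)=\lambda_j(v)M'(r_j)$, is exactly what the paper obtains by applying \cref{lemma:valuelambda} to the linear form $v\cdot\ell:f\mapsto\ell(vf)$ (whose component at $T_j$ is $v\cdot\lambda_j$, so that $(v\cdot\lambda_j)(1)=\lambda_j(v)$); you may either re-derive it as you propose or just cite the lemma with that substitution, as the paper does. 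One small slip: for $j\in\mathfrak{a}$, the form $\ell_{\sigma_j}$ is $f\mapsto\lambda_{j,1}f(\balpha_{\sigma_j})$ for a constant $\lambda_{j,1}$ that is a $\Kbar$-linear functional of the whole first column of $\mU$, not literally the entry $u_{\sigma_j,1}$ (the index $\sigma_j\in\{1,\dots,\dg\}$ has no a priori relation to the basis ordering of $\basis$); this does not affect the conclusion, since the relevant fact is only that $\lambda_{j,1}\ne0$ generically.
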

\begin{proof}
  As in the case of $\mainalgoname$, for generic choices of $\mU$ and
  $\mV$, the degree bound $\deg(\mat{P}_{\mU,\mV}) \le d$ holds and
  $M$ is the minimal polynomial of $X_1$ in $\residueI$; hence, after
  Step~\ref{X1step7b}, we have $F=\prod_{j \in \{1,\dots,c\}, \mu_j=1}
  (T-r_j)$. 

  The calculation of $A_0,A_1,A_2$ and $A_{X_2},\dots,A_{X_n}$ is
  justified as in $\mainalgoname$, by means of
  \cref{lemma:omegaOmega}. For $i=1,\dots,D$, let $u_{i,1}$ is the entry at position
  $(i,1)$ in $\mU$, and define the linear 
  form $\ell: \residueI \to \K$ by 
  \[
    \ell(f) = \sum_{i=1}^D f_i u_{i,1}, \quad\text{for}\quad f =
    \sum_{i=1}^D f_i b_i.
  \]
  Then, the above lemma proves that we have $A_i = \Omega((\ell(\mf^i
  X_1^s))_{s\ge0},M)$ for $i=0,1,2$ as well as $A_{X_i} = \Omega((\ell(X_i
  X_1^s))_{s\ge0},M)$ for $i=2,\dots,n$.

  Take then $j$ in $\{1,\dots,c\}$ such that $\mu_j=1$, that is, a
  root of $F$ as computed at Step~\ref{X1step7b}. By
  \cref{lemma:valuelambda}, for $i=0,1,2$, we have $ A_i(r_j) = M'(r_j)
  (\mf^i \cdot \lambda_j)(1)$, where $\lambda_j =\sum_{i \in T_j}
  \ell_i$, where the $\ell_i$'s are the components of $\ell$,
  and where $\mf^i \cdot \lambda_j$ is the linear form $f \mapsto \lambda_j(\mf^i f)$.

  As a result, the value of $ A_0  A_2 -  A_1^2$ at
  $r_j$ is (up to the nonzero factor $M'(r_j)^2$) equal to the
  quantity $ac-b^2$ defined in \cref{lemma:acb2}, so for a
  generic choice of $\ell$ (that is, of $\mU$) and $\mf$, it vanishes if and only if $j$ is
  in $\mathfrak{a}$. Thus, after Step~\ref{X1step9}, 
  $F$ is equal to $\prod_{j \in \mathfrak{a}} (T-r_j)$.

  The last step is to compute the zero-dimensional parametrization of
  $V_{\mathfrak{A}}$. This is done using again
  \cref{lemma:valuelambda}. Indeed, for $j$ in $\mathfrak{a}$, 
  $T_j$ is simply equal to $\{\sigma_j\}$, so that we have, for $i=2,\dots,n$,
  $$ A_0(r_j)=M'(r_j) \lambda_j(1) \quad\text{and}\quad 
  A_{X_i}(r_j) = M'(r_j) (X_i \cdot \lambda_j)(1) = M'(r_j) \lambda_j(X_i),$$
  where as above, $X_i \cdot \lambda_j$ is the linear form $f \mapsto \lambda_j(X_i f)$.
  Now, since $j$
  is in $\mathfrak{a}$, $\residueI_{\sigma_j}$ is reduced, so that there
  exists a constant $\lambda_{j,1}$ such that for all $f$ in
  $\Kbar[X_1,\dots,X_n]$, $\lambda_j(f)$ takes the form $\lambda_{j,1}
  f(\balpha_{\sigma_j})$. This shows that, as claimed,
  $$\frac{ A_{X_j}(r_j)}{ A_0 (r_j)} = 
  \frac
  {M'(r_j) \lambda_{j,1} \alpha_{\sigma_j,i}}{M'(r_j) \lambda_{j,1}} = \alpha_{\sigma_j,i},$$
  since $M'(r_j)$ is nonzero.
  For $i=1$, since we use $X_1$ as a separating variable for $V_{\mathfrak{A}}$, 
  we simply insert the polynomial $T$ into our list.
\end{proof}

The cost analysis is the same as that of Algorithm $\mainalgoname$, the
crucial difference being that the density $\rho_1$ of $\mM_1$ plays the 
role of the density $\rho$ of $\mM$ used in that algorithm.

\subsection{Computing correction matrices}

Next, we describe an operation of decomposition for linear forms
$\residueI \to \K$; this is essentially akin to the Chinese Remainder
Theorem. We then use it to compute the sequences of correction
matrices $(\mat{\Delta}_s),(\mat{\Delta}'_s),(\mat{\Delta}''_s)$
defined in Algorithm $\mathsf{BlockParametrizationWithSplitting}$.

As a preamble, we introduce the notation $\PP(r,t)$ for the cost of a
{\em power projection} operation, as defined
in~\citep{Shoup94,Shoup99}: given a polynomial $F$ in $\K[T]$ of degree
$r$, a linear form $\ell: \K[T]/F \to \K$, and $H$ in $\K[T]/F$, the
goal is to compute $(\ell(H^s))_{0 \le s < t}$, for some upper bound
$t$. We denote this operation by ${\sf PowerProjection}(F,H,\ell,t)$; this is
essentially the analogue for univariate polynomials of the Krylov
computations that we heavily rely on in this paper. Here, $\ell$ 
is represented by the vector $(\ell(1),\ell(T),\dots,\ell(T^{r-1}))$.

\citet[Theorem~4]{Shoup94} showed that this can be done in
$\PP(r,t)=O(r^{(\omega-1)/2} t)$ operations in $\K$ for $t \le r$. For
$t \ge r$, we first solve the problem up to index $r$ in time
$O(r^{(\omega+1)/2})$; then we use the fact that the sequence 
$(\ell(H^s))_{s \ge 0}$
is linearly recurrent to compute all further values in time
$O(t\M(r)/r)$, as for instance in~\citep[Proposition~1]{BoFlSaSc06}.
Thus, for $t \ge r$, we take $\PP(r,t)=O(r^{(\omega+1)/2} +
t\M(r)/r)$.

\medskip

Let $\mathfrak{A}$ and $\mathfrak{B}$ be defined as in the previous 
subsection, and let $D_{\mathfrak{A}}$ be the number of points in
$V_{\mathfrak{A}}$. Since $\residueI_i$ is a reduced algebra for all
$i$ in $\mathfrak{A}$, $D_\mathfrak{A}$ is also the dimension of
$\residueI_\mathfrak{A} = \prod_{i \in \mathfrak{A}}
\residueI_i$ and $\residueI_\mathfrak{B}=\prod_{i \in \mathfrak{B}}
\residueI_i$ has dimension $D_{\mathfrak{B}}=D-D_{\mathfrak{A}}$.

Consider a linear form $\ell: \residueI \to \K$; we still denote
$\ell$ for its extension $\residueI \otimes_\K \Kbar \to \Kbar$.  It
can then be decomposed as $\ell= \ell_{\mathfrak{A}} +
\ell_{\mathfrak{B}}$, with $\ell_\mathfrak{A}: \residueI_\mathfrak{A}
\to \Kbar$ and $\ell_\mathfrak{B}: \residueI_\mathfrak{B} \to \Kbar$.
Remark that the support of $\ell_\mathfrak{B}$ is contained in
$\mathfrak{B}$, and actually equal to $\mathfrak{B}$ for a generic
$\ell$.

Suppose that we are given the minimal polynomial $M$ of $X_1$ in
$\residueI$, the numerator $C=\Omega( (\ell(X_1^s))_{s \ge 0}, M)$, as
well as the zero-dimensional parametrization $((F,G_1,\dots,G_n),X_1)$
of $V_\mathfrak{A}$ computed in the previous paragraph.  Given
$\lf=t_1 X_1 + \cdots+ t_n X_n$, and an upper bound $\tau$, we show
how to compute the values $\ell_\mathfrak{A}(\lf^s)$, for
$s=0,\dots,\tau-1$.

Let $E=M/F$; the division is exact and  $E$ and $F$ are coprime, by construction.
The equality $\ell= \ell_{\mathfrak{A}} + \ell_{\mathfrak{B}}$ implies
an equality between generating series
\[
  \sum_{s \ge 0} \frac{\ell(X_1^s)}{T^{s+1}} = \sum_{s \ge 0}\frac{\ell_\mathfrak{A}(X_1^s)}{T^{s+1}}  
                                              +\sum_{s \ge 0} \frac{\ell_\mathfrak{B}(X_1^s)}{T^{s+1}}
                                             = \frac{A}{F} + \frac{B}{E},
\]
for some polynomials $A,B$ in $\K[T]$, with $\deg(A) < \deg(F)$ and
$\deg(B) < \deg(E)$. With  $C=\Omega( (\ell(X_1^s))_{s \ge 0}, M)$, we deduce the equality
$$\frac{C}{M}=\frac{A}{F} + \frac{B}{E},$$ from which we find $A = C/E
\bmod F$. Knowing $A$ and $F$ allows us to compute the values
$\ell_\mathfrak{A}(X_1^s)$, for $s=0,\dots,D_\mathfrak{A}-1$, by
Laurent series expansion.  Since $\residueI_\mathfrak{A}$ is reduced,
we have $\lf = t_1 G_1 + \cdots t_n G_n$ in $\residueI_\mathfrak{A}$,
where $G_1,\dots,G_n$ are polynomials in the zero-dimensional
parametrization of $V_\mathfrak{A}$. As a result, we can finally
compute $\ell_\mathfrak{A}(\lf^s)$, for $s=0,\dots,\tau-1$ by
applying our algorithm for univariate power projection to $G=t_1 G_1 + \cdots t_n G_n$.

\cref{algo:decompose} (\textsf{Decompose}) summarizes this discussion. Its cost
bound is
\[
  O(\M(D_\mathfrak{A})\log(D_\mathfrak{A}) + \PP(D_\mathfrak{A},\tau) +nD_\mathfrak{A})
\]
operations in $\K$, where the first term accounts for the cost of the first
three steps, $\PP(D_\mathfrak{A},\tau)$ is the cost of power projection and the
term $O(nD_\mathfrak{A})$ is the cost of computing $G$ as defined above.

\begin{algorithm}[ht]
  \caption{${\sf Decompose}(M, C, ((F,G_1,\dots,G_n),X_1), \lf,\tau)$} {\bf
  Input:} \vspace{-0.5em}
  \begin{itemize}
    \item minimal polynomial $M$ of $X_1$ in $\residueI$
    \item numerator $C=\Omega( (\ell(X_1^s))_{s \ge 0}, M)$
    \item zero-dimensional parametrization $((F,G_1,\dots,G_n),X_1)$ of $V_\mathfrak{A}$
    \item $\lf =t_1 X_1 + \cdots + t_n X_n$
    \item a bound $\tau$
  \end{itemize}
  {\bf Output:}  \vspace{-0.5em}
  \begin{itemize}
    \item $\ell_\mathfrak{A}(\lf^s)$, for $s=0,\dots,\tau-1$
  \end{itemize}
  \begin{enumerate}
    \item let $E=M/F$
    \item let $A=C/E \bmod F$
    \item compute the first $D_\mathfrak{A}$ terms $(v_0,\dots,v_{D_\mathfrak{A}-1})$ of the Laurent series $A/F$
    \item {\bf return} ${\sf PowerProjection}(F, t_1 G_1 + \cdots + t_n G_n, (v_0,\dots,v_{D_\mathfrak{A}-1}),\tau)$
  \end{enumerate}
  \label{algo:decompose}
\end{algorithm}

Algorithm \textsf{Decompose} will be used for obtaining correction matrices
given as input of Algorithm $\mainalgoname{\sf Residual}$. We assume that we
have stored various quantities computed in Algorithm $\mainalgoname{\sf X}_1$:
the sequence of matrices $(\seqelt{s,\mU,\mV})_{0 \le s < 2d}$, the matrix
generator $\mat{P}_{\mU,\mV}$, the minimal polynomial $M$ of $X_1$, and the
parametrization $((F,G_1,\dots,G_n),X_1)$.

Let $\mU$ and $\mV$ be the blocking matrices used in
$\mainalgoname{\sf X}_1$. For $i=1,\dots,m$, we let
$\ell_i: \residueI \to \K$ be the linear form whose values on the
basis $\basis=(b_1,\dots,b_D)$ are given by the $i$-th column of
$\mU$. In other words, $\ell_i(f) = \sum_{j=1}^D f_j u_{j,i}$, for
$f=\sum_{j=1}^D f_j b_j$. Similarly, for $j=1,\dots,m$, we let
$\gamma_j$ be the element of $\residueI$ whose coefficient vector on
the basis $\basis$ is the $j$-th column of $\mV$.
Hereafter, we write $d_\mathfrak{B}=\lceil D_\mathfrak{B}/m \rceil$,
in analogy with the definition of $d$ used so far.
\begin{itemize}
  \item To each $(i,j)$ in $\{1,\dots,m\}\times \{1,\dots,m\}$ is
    associated a linear form $\ell_{i,j}: \residueI\to \K$ defined by
    $\ell_{i,j}(f) =\ell_i(\gamma_j f)$ for all $f$ in $\residueI$.
    Then, the entry $(i,j)$ of the matrix sequence $(\mUt \mM_1^s
    \mV)_{s \ge0}$ is the scalar sequence $(\ell_{i,j}(X_1^s))_{s \ge
    0}$.

    \smallskip

    For all such $(i,j)$, since we know the minimal polynomial $M$ of
    $X_1$, we can compute the scalar numerator $C_{i,j} \in \K[T]$
    associated to $\ell_{i,j}$ and $M$. This is done by applying
    Algorithm $\mathsf{{\sf ScalarNumerator}}$ of
    \cref{ssec:scalar_numer}, using the row vector $\row{a}_i$ defined
    in that section, together with the sequence of matrices
    $\seqelt{s,\mU,\mV}$ and the matrix generator $\mat{P}_{\mU,\mV}$
    computed in Algorithm $\mainalgoname{\sf X}_1$.

    \smallskip

    Once $C_{i,j}$ is known, we can call ${\sf Decompose}$, which allows us
    to compute $\ell_{i,j,\mathfrak{A}}(\lf^s)$, for
    $s=0,\dots,2d_\mathfrak{B}-1$.  We can then construct the sequence
    $(\mat{\Delta}_s)_{0 \le s < 2d_\mathfrak{B}}$ of matrices in
    $\K^{m\times m}$ by setting the $(i,j)$-th entry of $\mat{\Delta}_s$
    to be $\ell_{i,j,\mathfrak{A}}(\lf^s)$.

    \smallskip

  \item To each $(i,k)$ in $\{1,\dots,m\}\times \{1,\dots,n\}$ is
    associated a linear form $\ell'_{i,k}: \residueI\to \K$ defined by
    $\ell'_{i,k}(f) =\ell_i(X_k f)$ for all $f$ in $\residueI$.  Then,
    the $i$th entry of the sequence of column vectors $(\mUt
    \mM_1^s \mM_k \col{\varepsilon}_1)_{s \ge0}$ is the scalar sequence
    $(\ell'_{i,k}(X_1^s))_{s \ge 0}$, where $\col{\varepsilon}_1$ is the
    column vector $\trsp{[1~0\cdots~0]}$ we already used several times.

    \smallskip

    Proceeding as before, we construct the sequence of $m \times n$ matrices
    $(\mat{\Delta}'_s)_{0 \le s < d_\mathfrak{B}}$ by setting the
    $(i,k)$-th entry of $\mat{\Delta}'_s$ to be
    $\ell'_{i,k,\mathfrak{A}}(\lf^s)$. Note that we will only need $d_\mathfrak{B}$
    entries in this sequence.

    \smallskip

  \item Finally, we apply this process to the linear forms $\ell_i$
    themselves; they are such that the $i$th entry of the sequence of
    column vectors $(\mUt \mM_1^s \col{\varepsilon}_1)_{s \ge0}$ is
    the scalar sequence $(\ell_{i}(X_1^s))_{s \ge 0}$. Using again
    $\mathsf{{\sf ScalarNumerator}}$ and {\sf Decompose}, we construct
    the sequence of column vectors $(\mat{\Delta}''_s)_{0 \le s <
    d_\mathfrak{B}}$ by setting the $i$-th entry of $\mat{\Delta}''_s$
    to $\ell_{i,\mathfrak{A}}(\lf^s)$.
\end{itemize}

In terms of cost, computing the vectors $\row{a}_i$, for
$i=1,\dots,m$, uses $O(m^{\omega} \M(D) \log(D) \log(m))$ operations in $\K$
(see \cref{eqn:hol_cost}).  Then, the total time
spent in {\sf ScalarNumerator} is $m(m+n+1)$ times the cost reported in
\cref{ssec:scalar_numer}, which was $O(D^2 + m\M(D))$; similarly, the total
cost incurred by ${\sf Decompose}$ is $m(m+n+1)$ times the cost of a single
call, which was reported above.

\subsection{Describing the residual set}

We finally describe Algorithm $\mainalgoname{\sf Residual}$.  Let
$\mathfrak{A}$ and $\mathfrak{B}$ be as in the previous section.  This
part of the main algorithm computes a zero-dimensional parametrization
of the residual set $V_\mathfrak{B}=V\setminus V_\mathfrak{A}$. For this, we
are going to call a modified version Algorithm
$\mathsf{BlockParametrization}$, where we update the values of our
matrix sequences before computing the minimal matrix generator, using
the correction matrices defined just above.

The resulting algorithm is as follows. A superficial difference with
$\mathsf{BlockParametrization}$ is that names of the main variables
have been changed (so as not to create any confusion with those used
in $\mainalgoname{\sf X}_1$). More importantly, using the correction
matrices makes it possible for us to compute fewer terms in the
sequences, namely only $2\lceil D_\mathfrak{B}/m\rceil$ and
$\lceil D_\mathfrak{B}/m\rceil$, respectively. Hence, if $D_\mathfrak{B} \ll D$
(that is, $V_\mathfrak{B}$ contains few points, with small
multiplicities), this last stage of the algorithm will be fast.

The algorithm uses a subroutine called {\sf ScalarNumeratorCorrected}
at Steps~8 and~9.  It is similar to Algorithm {\sf ScalarNumerator} of
\cref{ssec:scalar_numer}, with a minor difference: instead of computing the
vectors $\mat{D}_s \col{\varepsilon}_1$, resp.\ $\mat{D}_s\mM_i
\col{\varepsilon}_1$, at the first step of  {\sf ScalarNumerator}, it computes $\mat{D}_s
\col{\varepsilon}_1-\mat{\Delta}''_s$, resp.\ $\mat{D}_s\mM_i
\col{\varepsilon}_1-\mat{\Delta}'_{s,i}$, where $\mat{\Delta}'_{s,i}$
is the $i$th column of $\mat{\Delta}'_{s}$.

\begin{algorithm}[ht]
  \caption{$\mainalgoname{\sf Residual}(\mM_1,\dots,\mM_n,\mU,\mV,(\mat{\Delta}_s),(\mat{\Delta}'_s),(\mat{\Delta}''_s),\lf)$}
  {\bf Input:} \vspace{-0.5em}
  \begin{itemize}
    \item $\mM_1,\dots,\mM_n$ defined as above
    \item  $\mU,\mV \in \mathbb{K}^{D \times m}$, for some block dimension  $m \in \{1,\dots,D\}$
    \item sequences of correction matrices $(\mat{\Delta}_s),(\mat{\Delta}'_s),(\mat{\Delta}''_s)$
    \item $\lf =t_1 X_1 + \cdots + t_n X_n$
  \end{itemize}
  {\bf Output:}  \vspace{-0.5em}
  \begin{itemize}
    \item  polynomials $((R,W_1,\dots,W_n),\lf)$, with $R,W_1,\dots,W_n$ in $\K[T]$
  \end{itemize}
  \begin{enumerate}
    \item\label{residualstep1}   let $\mM = t_1 \mM_1 + \cdots + t_n \mM_n$
    \item\label{residualstep3} { compute $\mat{D}_s = \mUt\mM^s$ for $s=0,\dots,2d_\mathfrak{B}-1$, with $d_\mathfrak{B} = \lceil D_\mathfrak{B}/m \rceil$}
    \item\label{residualstep4} { compute $\mat{E}_{s,\mU,\mV}= \mat{D}_s\mV-\mat{\Delta}_s$ for $s=0,\dots, 2d_\mathfrak{B}-1$}
    \item\label{residualstep5} { compute a minimal matrix generator $\mat{S}_{\mU,\mV}$ of $(\mat{E}_{s,\mU,\mV})_{0 \le s < 2d_\mathfrak{B}}$}
    \item\label{residualstep6} { let $S$ be the largest invariant factor of $\mat{S}_{\mU,\mV}$}
    \item\label{residualstep7} { let $R$ be  the squarefree part  of $S$}
    \item\label{residualstep8} { let $\row{a}_1 = [S~0 ~\cdots~ 0] (\mat{S}_{\mU,\mV})^{-1}$}
    \item\label{residualstep9}  let $C_1 = \mathsf{{\sf ScalarNumeratorCorrected}}(\mat{S}_{\mU,\mV}, S, \col{\varepsilon}_1, 1, \row{a}_1,  (\mat{D}_s)_{0 \le s < d_\mathfrak{B}}, (\mat{\Delta}''_s)_{0 \le s < d_\mathfrak{B}})$
    \item\label{residualstep10} \textbf{for} $i=1,\dots,n$ \textbf{do} \\
      \phantom{for}let $C_{X_i} = \mathsf{{\sf ScalarNumeratorCorrected}}(\mat{S}_{\mU,\mV},S, \mM_i\col{\varepsilon}_1, 1, \row{a}_1, (\mat{D}_s)_{0 \le s < d_\mathfrak{B}}, (\mat{\Delta}'_s)_{0 \le s < d_\mathfrak{B}})$
    \item\label{residualstep11}     \textbf{return} $((R, C_{X_1}/ C_1 \bmod R, \dots, C_{X_n}/ C_{1} \bmod R),\lf)$
  \end{enumerate}  \label{algo:block-sparse-fglm-residual}
\end{algorithm}

Let us prove correctness. Since $\residueI=\residueI_\mathfrak{A}
\times \residueI_\mathfrak{B}$, we may assume without loss of
generality that our multiplication matrices are block diagonal, with
two blocks corresponding respectively to bases of $\residueI_\mathfrak{A}$
and $\residueI_\mathfrak{B}$; if not, apply a change of basis to 
reduce to this situation, updating $\mU$ and $\mV$ accordingly. 

We denote by $\mM_\mathfrak{A}$ and $\mM_\mathfrak{B}$ the 
two blocks on the diagonal of matrix $\mM$.
The projection matrices can also be divided into blocks, namely as
$$\mU = \left [\begin{matrix}\mU_\mathfrak{A} \\\mU_\mathfrak{B}
\end{matrix}\right ] \quad\text{and}\quad
\mV = \left [\begin{matrix}\mV_\mathfrak{A} \\\mV_\mathfrak{B}
\end{matrix}\right ],$$
and we have $\mUt \mM^s \mV = \mUt_\mathfrak{A}
\mM_\mathfrak{A}^s \mV_\mathfrak{A} + \mUt_\mathfrak{B}
\mM_\mathfrak{B}^s \mV_\mathfrak{B}$ for $s \ge 0$. The first summand
is none other than the matrix $\mat{\Delta}_s$, so that
$\mat{E}_{s,\mU,\mV}$ is equal to $\mUt_\mathfrak{B}
\mM_\mathfrak{B}^s \mV_\mathfrak{B}$. These are thus the kind of
Krylov matrices we would obtain if we were working with a basis of
$\residueI_\mathfrak{B}$, and shows that we have enough
terms to compute a minimal matrix generator
$\mat{S}_{\mU,\mV}$, at least for generic $\mU$ and $\mV$. 
Similarly, $S$ is generically the minimal polynomial of $X_1$ in
$Q_\mathfrak{B}$, and $R$ its squarefree part.

The same considerations justify the computation of $C_1$ and
$C_{X_1},\dots,C_{X_n}$. Indeed, subtracting the correction matrices
implies
\begin{align*}
  C_1 &= \Omega( (\ell_1(\lf^s)-\ell_{1,\mathfrak{A}}(\lf^s))_{s \ge 0}, S) =  \Omega( (\ell_{1,\mathfrak{B}}(\lf^s))_{s \ge 0}, S), \\
  C_{X_i} &= \Omega( (\ell_1(X_i \lf^s)-\ell_{1,\mathfrak{A}}(X_i \lf^s))_{s \ge 0}, S) =  \Omega( (\ell_{1,\mathfrak{B}}(X_i\lf^s))_{s \ge 0}, S) \;\;\;\text{for } i=1,\dots,n.
\end{align*}
As shown in \cref{ssec:abstractlago}, these are precisely 
the polynomials we need in order to compute a zero-dimensional parametrization
of $V_\mathfrak{B}$.

The cost analysis is similar to that of Algorithm $\mainalgoname$,
with the important exception that the sequence length $d=\lceil
D/m\rceil$ can then be replaced by $d_\mathfrak{B}=\lceil
D_\mathfrak{B}/m\rceil$  (which is hopefully much smaller).

\subsection{Experimental results}

The algorithms in this section were implemented using the same
framework as in the previous section, using in particular NTL's
built-in implementation of power projection.

In \cref{tbl:comparison_algos}, we give the ratio of the runtime of
{\sf Block\-Parametrization\-WithSplitting} to that of our first
algorithm, $\mainalgoname$, for each input: numbers less than $1$
indicate a speed-up.  The last column shows the number of points in
$V_\mathfrak{A}$, that is, $D_\mathfrak{A}$, compared to the total
degree of $I$, which is $D=D_\mathfrak{A}+D_\mathfrak{B}$. The inputs,
the machine used for timings and the prime field are the same as in
\cref{section:ex}.

\begin{table}[ht]
  \centering
  \setlength\tabcolsep{6pt}
  \caption{Comparison of $\mathsf{BlockParametrizationWithSplitting}$ and $\mainalgoname$}
  \label{tbl:comparison_algos}
  \begin{tabular}{c|c|c|c|c|c|c}
    \textbf{name}& $\bm{n}$ & $\bm{D}$ & $\bm{m = 1}$ & $\bm{m = 3}$ & $\bm{m = 6}$&$D_\mathfrak{A}/D$\\
    \hline
    rand1-26&3 &17576&0.453&0.384&0.65&17576/17576 \\
    rand1-28&3 &21952&0.438&0.435&0.562& 21952/21952\\
    rand1-30&3 &27000&0.429&0.577&0.608&27000/27000 \\
    rand2-10&4 &10000&0.437&0.462&0.49& 10000/10000\\
    rand2-11&4 &14641&0.423&0.566&0.435&14641/14641 \\
    rand2-12&4 &20736&0.431&0.437&0.399&20736/20736 \\
    mixed1-22&3 &10864&0.49&0.568&0.791& 10648/10675\\
    mixed1-23&3 &12383&0.477&0.546&0.655& 12167/12194\\
    mixed1-24&3 &14040&0.463&0.514&0.613& 13824/13851\\
    mixed2-10&4 &10256&0.43&0.482&0.626& 10000/10016\\
    mixed2-11&4 &14897&0.414&0.408&0.521& 14641/14657\\
    mixed2-12&4 &20992&0.416&0.438&0.416&20736/20752 \\
    mixed3-12&12 &4109&0.453&0.513&0.664& 4096/4097\\
    mixed3-13&13 &8206&0.435&0.454&0.471& 8192/8193\\
    eco12&12 &1024&0.446&0.572&0.602& 1024/1024\\
    sot1&5 &8694&1.31&1.84&2.37& 1012/8694\\
    W1-6-5-2&5 &18000&0.462&0.471&0.472& 18000/18000\\
    W1-4-6-2&6 &6480&0.452&0.474&0.57& 6480/6480\\
    katsura10&11 &1024&0.557&0.661&0.652& 1024/1024
  \end{tabular}
\end{table}

The performance of $\mathsf{BlockParametrizationWithSplitting}$
depends on the density of $\mM_1$ and the number of points
$D_\mathfrak{A}$. For generic inputs, with no multiplicity,
$D_\mathfrak{A}=D$, so it actually will not spend any time
computing correction matrices or running $\mainalgoname{\sf
Residual}$. On the other hand, in the worst case, if
$D_\mathfrak{A}=0$, so that $D_\mathfrak{B}=D$, then Algorithm
{\sf BlockParametriza}{\sf -tionWithSplitting} may take more than twice as
long as $\mainalgoname$ (due to the two calls to respectively
$\mathsf{BlockParametrizationX}_1$ and
$\mathsf{BlockParametrizationResidual}$, together with the overhead induced 
by power projection).

This unlucky case was seldom seen in our experiments, since the
systems with multiplicities generated randomly had few multiple
points, and thus were favorable to us. An unfavorable case is system
``sot1'', where $V_\mathfrak{A}$ only accounts for $1012$ points out of
$7682$ points in the variety. 

\section*{Appendix}

In this appendix, we prove Theorem~\ref{coro:cost_approx} from
Section~\ref{section:matrix_seq}: {\em  Let $\seq = (\seqelt{s})_{s\ge 0}$ be a linearly recurrent sequence
  of matrices in $\sseqeltSpace$ and let $\degBd = \degBdl+\degBdr+1$,
  where $(\degBdl,\degBdr) \in \NN^2$ are such that the minimal left
  (resp.~right) matrix generators of $\seq$ have degree at most $\degBdl$
  (resp.~at most $\degBdr$). Then, given $\seqelt{0},\dots,\seqelt{d-1}$,
  one can compute a minimal left matrix generator of $\seq$ in
  $O(\rdim^\omega \M(\degBd) \log(\degBd))$ operations in $\field$.}
 The first lemma we need is similar
to \citep[Theorem~4.5]{Turner02}.
\begin{lemma}
  \label{lem:finitely_many_terms}
  Let $\seq = (\seqelt{s})_{s\ge 0}$ be a linearly recurrent sequence
  of matrices in $\sseqeltSpace$ and let $\degBdr \in \NN$ be such
  that minimal right matrix generators of $\seq$ have degree at most
  $\degBdr$.  Then, a vector $\rel =\row{p}_0 + \cdots
  +\row{p}_{\degBd}\var^\degBd \in \relSpace$ is a left relation for
  $\seq$ if and only if $\row{p}_0 \seqelt{s} + \cdots +
  \row{p}_{\degBd} \seqelt{s + \degBd} = \row{0}$ holds for $s \in
  \{0,\ldots,\degBdr-1\}$.
\end{lemma}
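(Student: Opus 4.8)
The forward implication is immediate: a left vector relation satisfies the displayed identity for every $s\ge 0$, in particular for $s\in\{0,\dots,\degBdr-1\}$. So the plan is to prove the converse, and for that I would pass to generating series in $1/\var$ and bring in a \emph{right} matrix generator of $\seq$, whose degree is controlled by $\degBdr$, in order to ``clear denominators'' in $1/\var$.

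First I would set up the $1/\var$-series dictionary, parallel to the one underlying the proposition in \cref{section:matrix_seq}. With $\seqpm=\sum_{s\ge 0}\seqelt{s}/\var^{s+1}$, for a row vector $\rel=\row{p}_0+\cdots+\row{p}_\degBd\var^\degBd$ the product $\rel\seqpm$ has, as coefficient of $\var^{-k-1}$ for each $k\ge 0$, exactly $\row{p}_0\seqelt{k}+\cdots+\row{p}_\degBd\seqelt{k+\degBd}$. Hence, assuming the displayed identity for $s=0,\dots,\degBdr-1$, one gets a decomposition $\rel\seqpm=A+B$ with $A\in\relSpace$ and $B\in\var^{-\degBdr-1}\field\Poxi^{1\times\rdim}$, i.e.\ $B$ involves only the powers $\var^{-\degBdr-1},\var^{-\degBdr-2},\dots$ Symmetrically, I would record that if $\relbas\in\relbasSpace$ is any right matrix relation for $\seq$ (its columns being right vector relations), then $\seqpm\relbas$ is a polynomial matrix, by the mirror of the same column-by-column computation.

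Then I would take $\relbas$ to be a minimal right matrix generator of $\seq$: since $\seq$ is linearly recurrent the module of right vector relations has rank $\rdim$, so $\relbas$ is nonsingular, and by hypothesis $\deg(\relbas)\le\degBdr$. The endgame is to evaluate $\rel\seqpm\relbas$ in two ways. Grouped as $\rel(\seqpm\relbas)$ it is a polynomial row vector. Grouped as $(\rel\seqpm)\relbas=A\relbas+B\relbas$, the term $A\relbas$ is polynomial, while $B\relbas$ involves only the powers $\var^{-1},\var^{-2},\dots$, since $B$ starts at $\var^{-\degBdr-1}$ and $\deg(\relbas)\le\degBdr$. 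Comparing the two readings, $B\relbas$ is simultaneously a polynomial vector and a vector with only strictly negative powers of $\var$, hence $B\relbas=0$; as $\relbas$ is nonsingular this forces $B=0$, so $\rel\seqpm=A$ is polynomial and $\rel$ is a left vector relation for $\seq$, as wanted.

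The one delicate point — essentially the only content beyond formal bookkeeping — is the degree count in $1/\var$: one must check that the vanishing of the identity for $s<\degBdr$ really pushes the lowest term of $B$ down to $\var^{-\degBdr-1}$, so that after multiplication by a matrix of degree at most $\degBdr$ no nonnegative power of $\var$ can survive. Everything else is a direct transcription of the $1/\var$-series calculus already used in \cref{section:matrix_seq}.
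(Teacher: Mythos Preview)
Your proof is correct and takes a genuinely different route from the paper's. The paper first massages a minimal right generator in Popov form into a right relation of the special shape $\var^{\degBdr}\mat{I}_\rdim-\mat{R}$ with $\deg(\mat{R})<\degBdr$, thereby extracting an explicit order-$\degBdr$ recurrence $\seqelt{s+\degBdr}=\sum_{0\le k<\degBdr}\seqelt{s+k}\mat{Q}_k$ on the sequence itself; it then proves the converse implication by a direct induction on $s$, using this recurrence to propagate the vanishing of $\sum_k\row{p}_k\seqelt{s+k}$ from $s<\degBdr$ to all $s$. Your argument instead stays entirely in the $1/\var$-series calculus: splitting $\rel\seqpm=A+B$ with $B$ starting at $\var^{-\degBdr-1}$, multiplying on the right by a nonsingular right generator $\relbas$ of degree at most $\degBdr$, and observing that $B\relbas$ is both polynomial and a strict $1/\var$-series, hence zero. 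Your approach is shorter and avoids Popov-form bookkeeping and the induction; the paper's approach has the merit of exhibiting the concrete recurrence $\seqelt{s+\degBdr}=\sum_k\seqelt{s+k}\mat{Q}_k$, which makes the mechanism of propagation transparent even if it is not strictly needed for the lemma.
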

\begin{proof}
  In this proof, for a $u \times v$ polynomial matrix $\mat{Q}$, we denote by
  $\cdeg{\mat{Q}}$ the size-$v$ vector of the degrees of the columns of
  \(\mat{Q}\).  Consider a minimal right generator $\relbas \in
  \polMatSpace[\rdim]$ in Popov form (as defined e.g. in \citep{Kailath80}).
  Then, we have $\relbas = \mat{L}\,\diag{\var^{t_1},\ldots,\var^{t_{\rdim}}} -
  \mat{Q}$, where $\cdeg{\mat{Q}} < \cdeg{\relbas} = (t_1,\ldots,t_{\rdim})$
  termwise and $\mat{L} \in \matSpace[\rdim]$ is unit upper triangular. Define
  the matrix $\mat{U} =
  \diag{\var^{\degBdr-t_1},\ldots,\var^{\degBdr-t_{\rdim}}} \mat{L}^{-1}$,
  which is in $\polMatSpace[\rdim]$ since $\degBdr \ge \deg(\relbas) = \max_j
  t_j$. Then, the columns of the right multiple $\relbas \mat{U} =
  \var^{\degBdr} \mat{I}_\rdim - \mat{Q} \mat{U}$ are right relations for
  $\seq$, and we have $\deg(\mat{Q} \mat{U}) < \degBdr$. Thus, writing $\mat{Q}
  \mat{U} = \sum_{0 \le k < \degBdr} \mat{Q}_k \var^k$, we have
  $\seqelt{s+\degBdr} = \sum_{0 \le k < \degBdr} \seqelt{s+k} \mat{Q}_k$ for
  all $s \ge 0$.

  Assuming that $\row{p}_0 \seqelt{s} + \cdots + \row{p}_{\degBd} \seqelt{s +
  \degBd} = \row{0}$ holds for all $s \in \{0,\ldots,\degBdr-1\}$, we
  prove by induction that this holds for all $s\in\NN$. Let $s \ge
  \degBdr-1$ and assume that this identity holds for all integers up
  to $s$. Then, the identity concluding the previous paragraph implies
  that
  \begin{align*}
    \sum_{0 \le k \le \degBd} \row{p}_{k} \seqelt{s+1 + k} & =
    \sum_{0 \le k \le \degBd} \row{p}_{k} \left(\sum_{0\le j<\degBdr} \seqelt{s+1+k-\degBdr+j} \mat{Q}_j\right) 
                                                            = \sum_{0\le j<\degBdr} 
                                                           \underbrace{\left(\sum_{0 \le k \le \degBd} \row{p}_{k} \seqelt{s+1-\degBdr+j+k}\right)}_{=\, 0 \text{ since } s+1-\degBdr+j \le s} \mat{Q}_j = \row{0},
  \end{align*}
  and the proof is complete.
\end{proof}

The next result is similar to \citep[Theorem~4.6]{Turner02} (see also
Theorems~4.7 to~4.10 in that reference).  We recall from
\citep{BarBul92,BecLab94} that, given a matrix $\sys \in
\polMatSpace[\rdim][\rdim]$ and an integer $d \in \NN$, the set of
\emph{approximants for $\sys$ at order $d$} is defined as
\[
  \appMod{\sys}{d} = \{ \rel \in \relSpace \mid \rel \sys = \row{0} \bmod \var^d \}.
\]
Then, the next theorem shows that relations for $\seq$ can be retrieved as
subvectors of approximants at order about $\degBdl+\degBdr$ for a matrix
involving the first $\degBdl+\degBdr$ entries of $\seq$. 

\begin{theorem}
  \label{thm:mingen_via_appbas}
  Let $\seq = (\seqelt{s})_{s\ge 0}$ be a linearly recurrent sequence
  of matrices in $\sseqeltSpace$ and let $(\degBdl,\degBdr) \in \NN^2$
  be such that the minimal left (resp.~right) matrix generator of
  $\seq$ have degree at most $\degBdl$ (resp.~at most $\degBdr$).  For
  $\degBd>0$, define
  \[
    \sys =
    \begin{bmatrix}
      \sum_{0\le s < \degBd} \seqelt{s} \var^{\degBd-s-1} \\ - \mat{I}_{\rdim}
    \end{bmatrix} \in \polMatSpace[(\rdim+\rdim)][\rdim].
  \]
  Suppose that  $\degBd \ge \degBdr+1$ and let $\mat{B} \in \polMatSpace[(\rdim+\rdim)][(\rdim+\rdim)]$
  be a basis of $\appMod{\sys}{\degBdl+\degBdr+1}$. Then,
      if $\mat{B}$ is row reduced, it has exactly $\rdim$ rows of
      degree $\le\degBdl$, and they form a submatrix $[\relbas \;\; \remmat] \in
      \polMatSpace[\rdim][(\rdim+\rdim)]$ of $\mat{B}$ such that $\relbas$ is a
      minimal matrix generator for~$\seq$.
\end{theorem}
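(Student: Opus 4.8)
The plan is to set up an exact dictionary between the rows of low degree in a row reduced approximant basis of $\sys$ and the left vector relations for $\seq$, and then to deduce the statement from the predictable degree property of reduced forms. \textbf{Step 1 (the dictionary).} Write an element of $\appMod{\sys}{\degBdl+\degBdr+1}$ as $[\,\rel\ \ \rem\,]$ with $\rel,\rem\in\relSpace$; since the bottom block of $\sys$ is $-\mat{I}_\rdim$, the defining congruence reads $\rel\,F_1\equiv\rem\bmod\var^{\degBdl+\degBdr+1}$, where $F_1=\sum_{0\le s<\degBd}\seqelt s\var^{\degBd-s-1}$. I claim that a pair $[\,\rel\ \ \rem\,]\in\appMod{\sys}{\degBdl+\degBdr+1}$ of degree at most $\degBdl$ has $\rel$ a left vector relation for $\seq$ with $\deg\rem<\deg\rel$, and conversely that every left vector relation $\rel$ of degree at most $\degBdl$ yields such a pair with $\rem=\rel\,F_1\bmod\var^{\degBdl+\degBdr+1}$ (any other admissible $\rem$ has degree $\ge\degBdl+\degBdr+1$). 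For the forward direction, the condition $\deg\rem\le\degBdl$ forces the coefficients of $\var^{\degBdl+1},\dots,\var^{\degBdl+\degBdr}$ in $\rel\,F_1$ to vanish; because the order $\degBdl+\degBdr+1$ and the exponent pattern of $F_1$ are tuned so that $\degBd$ matches $\degBdl+\degBdr+1$, these $\degBdr$ conditions are exactly the equations $\row p_0\seqelt s+\cdots+\row p_{\degBdl}\seqelt{s+\degBdl}=\row0$ for $s=0,\dots,\degBdr-1$, which by \cref{lem:finitely_many_terms} (with degree bound $\degBdl$ on the relation) already force $\rel$ to be a relation. For the converse and the bound $\deg\rem<\deg\rel$, one uses the fraction description recalled before \cref{coro:cost_approx}: a left vector relation $\rel$ makes $\rel\,\seqpm$ a polynomial row vector of degree $<\deg\rel$, and translating through the definition of $F_1$ identifies $\rel\,F_1\bmod\var^{\degBdl+\degBdr+1}$ with $\rel\,\seqpm$ (up to reversal). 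Getting this index bookkeeping right, so that the order $\degBdl+\degBdr+1$ is neither too small nor wasteful, is the main technical obstacle; everything after it is formal.

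\textbf{Step 2 (the submodule $\mathcal{M}$).} Let $\relbas^{\star}$ be a minimal left matrix generator of $\seq$; by hypothesis its row degrees $d_1,\dots,d_\rdim$ are at most $\degBdl$. By Step~1 each row $\row p^{(i)}$ of $\relbas^{\star}$ lifts to an approximant $[\,\row p^{(i)}\ \ \rem^{(i)}\,]$ of degree $d_i$ with $\deg\rem^{(i)}<d_i$; these are $\polRing$-linearly independent (their first blocks are), so they form a basis of a rank-$\rdim$ free submodule $\mathcal{M}$ of $\appMod{\sys}{\degBdl+\degBdr+1}$. Moreover every approximant $[\,\rel\ \ \rem\,]$ of degree at most $\degBdl$ lies in $\mathcal{M}$: indeed $\rel=\sum_i q_i\row p^{(i)}$ with $\deg q_i+d_i\le\degBdl$ (predictable degree of $\relbas^{\star}$), and then $\rem-\sum_i q_i\rem^{(i)}$ has degree at most $\degBdl$ while being $\equiv\row0\bmod\var^{\degBdl+\degBdr+1}$, hence vanishes. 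Thus $\mathcal{M}$ is exactly the submodule generated by all approximants of degree at most $\degBdl$. Finally, since $\deg\rem^{(i)}<d_i$, the matrix $[\,\relbas^{\star}\ \ \remmat^{\star}\,]$ has the same leading row coefficient matrix as $\relbas^{\star}$, hence is row reduced with row degrees $d_1,\dots,d_\rdim$.

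\textbf{Step 3 (exactly $\rdim$ rows of $\mat B$ have degree $\le\degBdl$).} Let $\mat B$ be row reduced with row degrees $\delta_1\le\cdots\le\delta_{2\rdim}$. By the predictable degree property, any $\polRing$-combination of rows of $\mat B$ of degree at most $\degBdl$ involves only rows of degree at most $\degBdl$; since $\mathcal{M}$ is generated by approximants of degree at most $\degBdl$, it is contained in the span of the degree-$\le\degBdl$ rows of $\mat B$, and as $\mathcal{M}$ has rank $\rdim$ there are at least $\rdim$ such rows, which then span $\mathcal{M}$. For the reverse bound, suppose $b_1,\dots,b_{\rdim+1}$ all have degree at most $\degBdl$; write $b_j=[\,\rel_j\ \ \rem_j\,]$, so $\rem_j$ is the minimal choice and Step~1 gives that each $\rel_j$ is a left relation with $\deg\rem_j<\deg\rel_j\le\delta_j$. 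As the left-relation module has rank $\rdim$, there is a nonzero $(c_j)\in\polRing^{\rdim+1}$ with $\sum_j c_j\rel_j=\row0$; then $\sum_j c_jb_j=[\,\row0\ \ \sum_j c_j\rem_j\,]$ is nonzero, so the predictable degree property gives $\deg(\sum_j c_jb_j)=\max_j(\deg c_j+\delta_j)$, whereas $\deg(\sum_j c_j\rem_j)\le\max_j(\deg c_j+\deg\rem_j)<\max_j(\deg c_j+\delta_j)$ --- a contradiction. Hence $\mat B$ has exactly $\rdim$ rows of degree at most $\degBdl$.

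\textbf{Step 4 (conclusion).} Collect those $\rdim$ rows into $[\,\relbas\ \ \remmat\,]$. A subset of the rows of a row reduced matrix is row reduced, and since each $\remmat$-row has degree strictly below that of the corresponding $\relbas$-row, the leading row coefficient matrix of $[\,\relbas\ \ \remmat\,]$ is $[\,\mathrm{lc}(\relbas)\ \ \mat0\,]$; hence $\relbas$ is row reduced. By Step~3 the rows of $[\,\relbas\ \ \remmat\,]$ are a basis of $\mathcal{M}$, and by Step~2 so are the rows of $[\,\relbas^{\star}\ \ \remmat^{\star}\,]$; two bases of the free module $\mathcal{M}$ differ by a unimodular matrix, so $\relbas=\mat U\relbas^{\star}$ with $\mat U\in\mathrm{GL}_\rdim(\polRing)$. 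Since the rows of $\relbas^{\star}$ form a basis of the module of left vector relations for $\seq$, so do the rows of $\relbas=\mat U\relbas^{\star}$; being also row reduced, $\relbas$ is a minimal matrix generator of $\seq$, which is the assertion.
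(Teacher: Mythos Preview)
Your argument is correct and rests on the same two-way correspondence the paper establishes: relations for $\seq$ of degree $\le\degBdl$ are exactly the first blocks of approximants of degree $\le\degBdl$, with the second block uniquely determined and of strictly smaller degree. The paper's proof stops after stating this dictionary and writes ``The theorem follows''; your Steps~2--4 spell out what the paper leaves implicit, namely the identification of the submodule $\mathcal{M}$ generated by low-degree approximants, the counting argument via predictable degree showing exactly $\rdim$ rows of $\mat{B}$ have degree $\le\degBdl$, and the passage to a minimal generator via a unimodular transformation. So the approach is the same, but your write-up is considerably more complete.

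One small remark: in Step~1 you take $\degBd=\degBdl+\degBdr+1$, whereas the theorem as stated only assumes $\degBd\ge\degBdr+1$. The paper's own proof has the same tension --- it works at approximation order $\degBd$ (not $\degBdl+\degBdr+1$) and needs $\degBdl\le\degBd-\degBdr-1$ for the converse direction to reach degree $\le\degBdl$, which again forces $\degBd\ge\degBdl+\degBdr+1$. So your reading is the intended one (and is exactly what \cref{coro:cost_approx} uses); you might just note explicitly that the general $\degBd\ge\degBdr+1$ with order $\degBdl+\degBdr+1$ does not give the relations at indices $s=0,\dots,\degBdr-1$ unless $\degBd=\degBdl+\degBdr+1$.
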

\begin{proof}
  We first observe that for any relation $\rel \in \relSpace$ for $\seq$, there exists $\rem \in
  \remSpace$ such that $\deg(\rem) < \deg(\rel)$ and $[\rel \;\; \rem]
  \in \appMod{\sys}{\degBd}$. Indeed, if $\rel$ is a relation for
  $\seq$ then $\num = \rel \seqpm$ has polynomial entries, where
  $\seqpm = \sum_{s\ge 0} \seqelt{s} \var^{-s-1}$. Then the vector
  $\rem = - \rel (\sum_{s \ge \degBd} \seqelt{s} \var^{\degBd-s-1})$
  has polynomial entries, has degree less than $\deg(\rel)$, and is
  such that $[\rel \;\; \rem] \sys = \num \var^{\degBd}$.

  Conversely, we show that for any vectors $\rel \in \relSpace$ and $\rem
  \in \remSpace$, if $[\rel \;\; \rem] \in\appMod{\sys}{\degBd}$ and
  $\deg([\rel \;\; \rem])\le\degBd-\degBdr-1$, then $\rel$ is a
  relation for $\seq$. Indeed, if $[\rel \;\; \rem]
  \in\appMod{\sys}{\degBd}$ we have $\rel (\sum_{0\le s< \degBd}
  \seqelt{s} \var^{\degBd-s-1}) = \rem \bmod \var^\degBd$. Since
  $\degBd\ge\degBdr+1$ and $\deg([\rel \;\;
  \rem])\le\degBd-\degBdr-1$, this implies that the coefficients of
  degree $\degBd-\degBdr$ to $\degBd-1$ of $\rel(\sum_{0\le s <
  \degBd} \seqelt{s} \var^{\degBd-s-1})$ are zero. Then,
  \cref{lem:finitely_many_terms} shows that $\rel$ is a relation for
  $\seq$. The theorem follows.
\end{proof}

Using the fast approximant basis algorithm of~\citep{GiJeVi03}, this
implies Theorem~\ref{coro:cost_approx}.

\bibliographystyle{elsarticle-harv}

\begin{thebibliography}{50}
\expandafter\ifx\csname natexlab\endcsname\relax\def\natexlab#1{#1}\fi
\providecommand{\url}[1]{\texttt{#1}}
\providecommand{\href}[2]{#2}
\providecommand{\path}[1]{#1}
\providecommand{\DOIprefix}{doi:}
\providecommand{\ArXivprefix}{arXiv:}
\providecommand{\URLprefix}{URL: }
\providecommand{\Pubmedprefix}{pmid:}
\providecommand{\doi}[1]{\href{http://dx.doi.org/#1}{\path{#1}}}
\providecommand{\Pubmed}[1]{\href{pmid:#1}{\path{#1}}}
\providecommand{\bibinfo}[2]{#2}
\ifx\xfnm\relax \def\xfnm[#1]{\unskip,\space#1}\fi
\bibitem[{Alonso et~al.(1996)Alonso, Becker, Roy and W{\"o}rmann}]{AlBeRoWo94}
\bibinfo{author}{Alonso, M.E.}, \bibinfo{author}{Becker, E.},
  \bibinfo{author}{Roy, M.F.}, \bibinfo{author}{W{\"o}rmann, T.},
  \bibinfo{year}{1996}.
\newblock \bibinfo{title}{Zeroes, multiplicities and idempotents for
  zero-dimensional systems}, in: \bibinfo{booktitle}{MEGA'94},
  \bibinfo{publisher}{Birkh{\"a}user}. pp. \bibinfo{pages}{1--15}.
\bibitem[{Bardet et~al.(2015)Bardet, Faug{\`e}re and Salvy}]{BaFaSa15}
\bibinfo{author}{Bardet, M.}, \bibinfo{author}{Faug{\`e}re, J.C.},
  \bibinfo{author}{Salvy, B.}, \bibinfo{year}{2015}.
\newblock \bibinfo{title}{On the complexity of the {F}5 {G}r{\"o}bner basis
  algorithm}.
\newblock \bibinfo{journal}{J. Symbolic Comput.} \bibinfo{volume}{70},
  \bibinfo{pages}{49--70}.
\bibitem[{Becker et~al.(1994)Becker, Mora, Marinari and Traverso}]{BeMoMaTr94}
\bibinfo{author}{Becker, E.}, \bibinfo{author}{Mora, T.},
  \bibinfo{author}{Marinari, M.}, \bibinfo{author}{Traverso, C.},
  \bibinfo{year}{1994}.
\newblock \bibinfo{title}{The shape of the {S}hape {L}emma}, in:
  \bibinfo{booktitle}{ISSAC'94}, \bibinfo{publisher}{ACM}. pp.
  \bibinfo{pages}{129--133}.
\bibitem[{Becker and W{\"o}rmann(1996)}]{BeWo96}
\bibinfo{author}{Becker, E.}, \bibinfo{author}{W{\"o}rmann, T.},
  \bibinfo{year}{1996}.
\newblock \bibinfo{title}{Radical computations of zero-dimensional ideals and
  real root counting}.
\newblock \bibinfo{journal}{Mathematics and Computers in Simulation}
  \bibinfo{volume}{42}, \bibinfo{pages}{561--569}.
\bibitem[{Beckermann and Labahn(1994)}]{BecLab94}
\bibinfo{author}{Beckermann, B.}, \bibinfo{author}{Labahn, G.},
  \bibinfo{year}{1994}.
\newblock \bibinfo{title}{A uniform approach for the fast computation of
  matrix-type {P}ad\'e approximants}.
\newblock \bibinfo{journal}{SIAM J. Matrix Anal. Appl.} \bibinfo{volume}{15},
  \bibinfo{pages}{804--823}.
\bibitem[{Bosma et~al.(1997)Bosma, Cannon and Playoust}]{BoCaPl97}
\bibinfo{author}{Bosma, W.}, \bibinfo{author}{Cannon, J.},
  \bibinfo{author}{Playoust, C.}, \bibinfo{year}{1997}.
\newblock \bibinfo{title}{The {M}agma algebra system. {I}. {T}he user
  language}.
\newblock \bibinfo{journal}{J. Symbolic Comput.} \bibinfo{volume}{24},
  \bibinfo{pages}{235--265}.
\bibitem[{Bostan et~al.(2006)Bostan, Flajolet, Salvy and Schost}]{BoFlSaSc06}
\bibinfo{author}{Bostan, A.}, \bibinfo{author}{Flajolet, P.},
  \bibinfo{author}{Salvy, B.}, \bibinfo{author}{Schost, {\'E}.},
  \bibinfo{year}{2006}.
\newblock \bibinfo{title}{Fast computation of special resultants}.
\newblock \bibinfo{journal}{J. Symbolic Comput.} \bibinfo{volume}{41},
  \bibinfo{pages}{1--29}.
\bibitem[{Bostan et~al.(2003)Bostan, Salvy and Schost}]{BoSaSc03}
\bibinfo{author}{Bostan, A.}, \bibinfo{author}{Salvy, B.},
  \bibinfo{author}{Schost, {\'E}.}, \bibinfo{year}{2003}.
\newblock \bibinfo{title}{Fast algorithms for zero-dimensional polynomial
  systems using duality}.
\newblock \bibinfo{journal}{Appl. Algebra Engrg. Comm. Comput.}
  \bibinfo{volume}{14}, \bibinfo{pages}{239--272}.
\bibitem[{Brent et~al.(1980)Brent, Gustavson and Yun}]{BrGuYu80}
\bibinfo{author}{Brent, R.P.}, \bibinfo{author}{Gustavson, F.G.},
  \bibinfo{author}{Yun, D.Y.Y.}, \bibinfo{year}{1980}.
\newblock \bibinfo{title}{Fast solution of {Toeplitz} systems of equations and
  computation of {Pad\'e} approximants}.
\newblock \bibinfo{journal}{Journal of Algorithms} \bibinfo{volume}{1},
  \bibinfo{pages}{259--295}.
\bibitem[{Coppersmith(1994)}]{Coppersmith94}
\bibinfo{author}{Coppersmith, D.}, \bibinfo{year}{1994}.
\newblock \bibinfo{title}{Solving homogeneous linear equations over {GF}$(2)$
  via block {W}iedemann algorithm}.
\newblock \bibinfo{journal}{Math. Comp.} \bibinfo{volume}{62},
  \bibinfo{pages}{333--350}.
\bibitem[{Faug\`ere(2002)}]{Faugere02}
\bibinfo{author}{Faug\`ere, J.C.}, \bibinfo{year}{2002}.
\newblock \bibinfo{title}{A new efficient algorithm for computing {G}r\"obner
  bases without reductions to zero ({F}5)}, in: \bibinfo{booktitle}{ISSAC'02},
  \bibinfo{publisher}{ACM}. pp. \bibinfo{pages}{75--83}.
\bibitem[{Faug{\`e}re et~al.(2013)Faug{\`e}re, Gaudry, Huot and
  Renault}]{FaGaHuRe13}
\bibinfo{author}{Faug{\`e}re, J.C.}, \bibinfo{author}{Gaudry, P.},
  \bibinfo{author}{Huot, L.}, \bibinfo{author}{Renault, G.},
  \bibinfo{year}{2013}.
\newblock \bibinfo{title}{{Polynomial systems solving by fast linear algebra}}.
\newblock \bibinfo{note}{\url{https://hal.archives-ouvertes.fr/hal-00816724}}.
\bibitem[{Faug\`{e}re et~al.(2014)Faug\`{e}re, Gaudry, Huot and
  Renault}]{FaGaHuRe14}
\bibinfo{author}{Faug\`{e}re, J.C.}, \bibinfo{author}{Gaudry, P.},
  \bibinfo{author}{Huot, L.}, \bibinfo{author}{Renault, G.},
  \bibinfo{year}{2014}.
\newblock \bibinfo{title}{Sub-cubic change of ordering for {G}r{\"o}bner basis:
  a probabilistic approach}, in: \bibinfo{booktitle}{ISSAC'14},
  \bibinfo{publisher}{ACM}. pp. \bibinfo{pages}{170--177}.
\bibitem[{Faug\`ere et~al.(1993)Faug\`ere, Gianni, Lazard and
  Mora}]{FaGiLaMo93}
\bibinfo{author}{Faug\`ere, J.C.}, \bibinfo{author}{Gianni, P.},
  \bibinfo{author}{Lazard, D.}, \bibinfo{author}{Mora, T.},
  \bibinfo{year}{1993}.
\newblock \bibinfo{title}{Efficient computation of zero-dimensional {G}r\"obner
  bases by change of ordering}.
\newblock \bibinfo{journal}{J. Symbolic Comput.} \bibinfo{volume}{16},
  \bibinfo{pages}{329--344}.
\bibitem[{Faug{\`e}re and Mou(2017)}]{FaMo17}
\bibinfo{author}{Faug{\`e}re, J.C.}, \bibinfo{author}{Mou, C.},
  \bibinfo{year}{2017}.
\newblock \bibinfo{title}{Sparse {FGLM} algorithms}.
\newblock \bibinfo{journal}{J. Symbolic Comput.} \bibinfo{volume}{80},
  \bibinfo{pages}{538--569}.
\bibitem[{Faug\`ere et~al.(2013)Faug\`ere, {Safey El Din} and
  Spaenlehauer}]{FaSaSp13}
\bibinfo{author}{Faug\`ere, J.C.}, \bibinfo{author}{{Safey El Din}, M.},
  \bibinfo{author}{Spaenlehauer, P.J.}, \bibinfo{year}{2013}.
\newblock \bibinfo{title}{On the complexity of the generalized {M}in{R}ank
  problem}.
\newblock \bibinfo{journal}{J. Symbolic Comput.} , \bibinfo{pages}{30--58}.
\bibitem[{von~zur Gathen and Gerhard(2013)}]{GaGe13}
\bibinfo{author}{von~zur Gathen, J.}, \bibinfo{author}{Gerhard, J.},
  \bibinfo{year}{2013}.
\newblock \bibinfo{title}{Modern Computer Algebra}.
\newblock \bibinfo{edition}{Third} ed., \bibinfo{publisher}{Cambridge
  University Press}, \bibinfo{address}{Cambridge}.
\bibitem[{Gianni and Mora(1989)}]{GiMo89}
\bibinfo{author}{Gianni, P.}, \bibinfo{author}{Mora, T.}, \bibinfo{year}{1989}.
\newblock \bibinfo{title}{Algebraic solution of systems of polynomial equations
  using {G}r\"obner bases}, in: \bibinfo{booktitle}{AAECC'5},
  \bibinfo{publisher}{Springer}. pp. \bibinfo{pages}{247--257}.
\bibitem[{Giorgi et~al.(2003)Giorgi, Jeannerod and Villard}]{GiJeVi03}
\bibinfo{author}{Giorgi, P.}, \bibinfo{author}{Jeannerod, C.P.},
  \bibinfo{author}{Villard, G.}, \bibinfo{year}{2003}.
\newblock \bibinfo{title}{On the complexity of polynomial matrix computations},
  in: \bibinfo{booktitle}{ISSAC'03}, \bibinfo{publisher}{ACM}. pp.
  \bibinfo{pages}{135--142}.
\bibitem[{Giorgi and Lebreton(2014)}]{GioLeb14}
\bibinfo{author}{Giorgi, P.}, \bibinfo{author}{Lebreton, R.},
  \bibinfo{year}{2014}.
\newblock \bibinfo{title}{Online order basis algorithm and its impact on the
  block {W}iedemann algorithm}, in: \bibinfo{booktitle}{ISSAC'14},
  \bibinfo{publisher}{ACM}. pp. \bibinfo{pages}{202--209}.
\bibitem[{Guennebaud et~al.(2018)Guennebaud, Jacob et~al.}]{Eigen}
\bibinfo{author}{Guennebaud, G.}, \bibinfo{author}{Jacob, B.}, et~al.,
  \bibinfo{year}{2018}.
\newblock \bibinfo{title}{Eigen, version 3.3.7}.
\newblock \bibinfo{howpublished}{\url{http://eigen.tuxfamily.org}}.
\bibitem[{Kailath(1980)}]{Kailath80}
\bibinfo{author}{Kailath, T.}, \bibinfo{year}{1980}.
\newblock \bibinfo{title}{{Linear Systems}}.
\newblock \bibinfo{publisher}{Prentice-Hall}.
\bibitem[{Kaltofen(1995)}]{Kaltofen95}
\bibinfo{author}{Kaltofen, E.}, \bibinfo{year}{1995}.
\newblock \bibinfo{title}{Analysis of {Coppersmith}'s block {Wiedemann}
  algorithm for the parallel solution of sparse linear systems}.
\newblock \bibinfo{journal}{Mathematics of Computation} \bibinfo{volume}{64},
  \bibinfo{pages}{777--806}.
\bibitem[{Kaltofen and Villard(2001)}]{KalVil01}
\bibinfo{author}{Kaltofen, E.}, \bibinfo{author}{Villard, G.},
  \bibinfo{year}{2001}.
\newblock \bibinfo{title}{On the complexity of computing determinants}, in:
  \bibinfo{booktitle}{ISSAC'01}, \bibinfo{publisher}{ACM}. pp.
  \bibinfo{pages}{13--27}.
\bibitem[{Kaltofen and Villard(2004)}]{KaVi04}
\bibinfo{author}{Kaltofen, E.}, \bibinfo{author}{Villard, G.},
  \bibinfo{year}{2004}.
\newblock \bibinfo{title}{On the complexity of computing determinants}.
\newblock \bibinfo{journal}{Comput. Complexity} \bibinfo{volume}{13},
  \bibinfo{pages}{91--130}.
\bibitem[{Keller-Gehrig(1985)}]{Keller85}
\bibinfo{author}{Keller-Gehrig, W.}, \bibinfo{year}{1985}.
\newblock \bibinfo{title}{Fast algorithms for the characteristic polynomial}.
\newblock \bibinfo{journal}{Theoret. Comput. Sci.} \bibinfo{volume}{36},
  \bibinfo{pages}{309--317}.
\bibitem[{LaMacchia and Odlyzko(1990)}]{LaOd90}
\bibinfo{author}{LaMacchia, B.A.}, \bibinfo{author}{Odlyzko, A.M.},
  \bibinfo{year}{1990}.
\newblock \bibinfo{title}{Solving large sparse linear systems over finite
  fields}, in: \bibinfo{booktitle}{Adv. in Cryptography, Crypto '90},
  \bibinfo{publisher}{Springer}. pp. \bibinfo{pages}{109--133}.
\bibitem[{Macaulay(1916)}]{Macaulay16}
\bibinfo{author}{Macaulay, F.S.}, \bibinfo{year}{1916}.
\newblock \bibinfo{title}{The Algebraic Theory of Modular Systems}.
\newblock \bibinfo{publisher}{Cambridge University Press}.
\bibitem[{Marinari et~al.(1996)Marinari, M{\"o}ller and Mora}]{MaMoMo96}
\bibinfo{author}{Marinari, M.G.}, \bibinfo{author}{M{\"o}ller, H.M.},
  \bibinfo{author}{Mora, T.}, \bibinfo{year}{1996}.
\newblock \bibinfo{title}{On multiplicities in polynomial system solving}.
\newblock \bibinfo{journal}{Trans. Amer. Math. Soc.} \bibinfo{volume}{348},
  \bibinfo{pages}{3283--3321}.
\bibitem[{Moreno-Soc{\'\i}as(1991)}]{MorenoSocias91}
\bibinfo{author}{Moreno-Soc{\'\i}as, G.}, \bibinfo{year}{1991}.
\newblock \bibinfo{title}{Autour de la fonction de {H}ilbert-{S}amuel
  (escaliers d'id{\'e}aux polynomiaux)}.
\newblock Ph.D. thesis. \'Ecole polytechnique.
\bibitem[{Morgan(1988)}]{Morgan88}
\bibinfo{author}{Morgan, A.}, \bibinfo{year}{1988}.
\newblock \bibinfo{title}{Solving Polynominal Systems Using Continuation for
  Engineering and Scientific Problems}.
\newblock \bibinfo{publisher}{Prentice-Hall}.
\bibitem[{Mourrain(1997)}]{Mourrain97}
\bibinfo{author}{Mourrain, B.}, \bibinfo{year}{1997}.
\newblock \bibinfo{title}{Isolated points, duality and residues}.
\newblock \bibinfo{journal}{Journal of Pure and Applied Algebra}
  \bibinfo{volume}{117/118}, \bibinfo{pages}{469--493}.
\bibitem[{Neiger(2016)}]{Neiger16}
\bibinfo{author}{Neiger, V.}, \bibinfo{year}{2016}.
\newblock \bibinfo{title}{{Bases of relations in one or several variables: fast
  algorithms and applications}}.
\newblock Ph.D. thesis. {{\'E}cole Normale Sup{\'e}rieure de Lyon}.
\bibitem[{Neiger et~al.(2017)Neiger, Rahkooy and Schost}]{NeRaSc17}
\bibinfo{author}{Neiger, V.}, \bibinfo{author}{Rahkooy, H.},
  \bibinfo{author}{Schost, {\'{E}}.}, \bibinfo{year}{2017}.
\newblock \bibinfo{title}{Algorithms for zero-dimensional ideals using linear
  recurrent sequences}, in: \bibinfo{booktitle}{CASC'17},
  \bibinfo{publisher}{Springer}. pp. \bibinfo{pages}{313--328}.
\bibitem[{Poteaux and Schost(2013)}]{PoSc13b}
\bibinfo{author}{Poteaux, A.}, \bibinfo{author}{Schost, E.},
  \bibinfo{year}{2013}.
\newblock \bibinfo{title}{On the complexity of computing with zero-dimensional
  triangular sets}.
\newblock \bibinfo{journal}{J. Symbolic Comput.} \bibinfo{volume}{50},
  \bibinfo{pages}{110--138}.
\bibitem[{Rouillier(1999)}]{Rouillier99}
\bibinfo{author}{Rouillier, F.}, \bibinfo{year}{1999}.
\newblock \bibinfo{title}{Solving zero-dimensional systems through the
  {R}ational {U}nivariate {R}epresentation}.
\newblock \bibinfo{journal}{Appl. Algebra Engrg. Comm. Comput.}
  \bibinfo{volume}{9}, \bibinfo{pages}{433--461}.
\bibitem[{Sakata(1990)}]{Sakata90}
\bibinfo{author}{Sakata, S.}, \bibinfo{year}{1990}.
\newblock \bibinfo{title}{Extension of the {B}erlekamp-{M}assey algorithm to
  {$N$} dimensions}.
\newblock \bibinfo{journal}{Information and Computation} \bibinfo{volume}{84},
  \bibinfo{pages}{207--239}.
\bibitem[{Shoup(1994)}]{Shoup94}
\bibinfo{author}{Shoup, V.}, \bibinfo{year}{1994}.
\newblock \bibinfo{title}{Fast construction of irreducible polynomials over
  finite fields}.
\newblock \bibinfo{journal}{J. Symbolic Comput.} \bibinfo{volume}{17},
  \bibinfo{pages}{371--391}.
\bibitem[{Shoup(1999)}]{Shoup99}
\bibinfo{author}{Shoup, V.}, \bibinfo{year}{1999}.
\newblock \bibinfo{title}{Efficient computation of minimal polynomials in
  algebraic extensions of finite fields}, in: \bibinfo{booktitle}{ISSAC'99},
  \bibinfo{publisher}{ACM}. pp. \bibinfo{pages}{53--58}.
\bibitem[{Shoup(2018)}]{NTL}
\bibinfo{author}{Shoup, V.}, \bibinfo{year}{2018}.
\newblock \bibinfo{title}{{NTL}: A library for doing number theory, version
  11.3.2}.
\newblock \bibinfo{howpublished}{\texttt{http://www.shoup.net}}.
\bibitem[{Steel(2015)}]{Steel15}
\bibinfo{author}{Steel, A.}, \bibinfo{year}{2015}.
\newblock \bibinfo{title}{Direct solution of the (11,9,8)-{M}in{R}ank problem
  by the block {W}iedemann algorithm in {M}agma with a {T}esla {GPU}}, in:
  \bibinfo{booktitle}{PASCO'15}, \bibinfo{publisher}{ACM}. pp.
  \bibinfo{pages}{2--6}.
\bibitem[{Storjohann(2003)}]{Stor03}
\bibinfo{author}{Storjohann, A.}, \bibinfo{year}{2003}.
\newblock \bibinfo{title}{High-order lifting and integrality certification}.
\newblock \bibinfo{journal}{J. Symbolic Comput.} \bibinfo{volume}{36},
  \bibinfo{pages}{613--648}.
\bibitem[{{The~FFLAS-FFPACK~Group}(2019)}]{fflas-ffpack}
\bibinfo{author}{{The~FFLAS-FFPACK~Group}}, \bibinfo{year}{2019}.
\newblock \bibinfo{title}{{FFLAS-FFPACK}: {F}inite {F}ield {L}inear {A}lgebra
  {S}ubroutines / {P}ackage, version 2.3.2}.
\newblock
  \bibinfo{howpublished}{\url{http://github.com/linbox-team/fflas-ffpack}}.
\bibitem[{{The LinBox Group}(2018)}]{LinBox}
\bibinfo{author}{{The LinBox Group}}, \bibinfo{year}{2018}.
\newblock \bibinfo{title}{Linbox: Linear algebra over black-box matrices,
  version 1.5.3}.
\newblock \bibinfo{howpublished}{\url{https://github.com/linbox-team/linbox/}}.
\bibitem[{Turner(2002)}]{Turner02}
\bibinfo{author}{Turner, W.J.}, \bibinfo{year}{2002}.
\newblock \bibinfo{title}{{Black box linear algebra with the LINBOX library}}.
\newblock Ph.D. thesis. {North Carolina State University}.
\bibitem[{Van~Barel and Bultheel(1992)}]{BarBul92}
\bibinfo{author}{Van~Barel, M.}, \bibinfo{author}{Bultheel, A.},
  \bibinfo{year}{1992}.
\newblock \bibinfo{title}{A general module theoretic framework for vector
  {M-Pad\'e} and matrix rational interpolation}.
\newblock \bibinfo{journal}{Numer. Algorithms} \bibinfo{volume}{3},
  \bibinfo{pages}{451--462}.
\bibitem[{Villard(1997a)}]{Villard97}
\bibinfo{author}{Villard, G.}, \bibinfo{year}{1997}a.
\newblock \bibinfo{title}{Further analysis of {C}oppersmith's block {W}iedemann
  algorithm for the solution of sparse linear systems}, in:
  \bibinfo{booktitle}{ISSAC'97}, \bibinfo{publisher}{ACM}. pp.
  \bibinfo{pages}{32--39}.
\bibitem[{Villard(1997b)}]{Villard97a}
\bibinfo{author}{Villard, G.}, \bibinfo{year}{1997}b.
\newblock \bibinfo{title}{A study of {C}oppersmith's block {W}iedemann
  algorithm using matrix polynomials}.
\newblock \bibinfo{type}{Technical Report}. LMC-IMAG, Report 975 IM.
\bibitem[{Wiedemann(1986)}]{Wiedemann86}
\bibinfo{author}{Wiedemann, D.}, \bibinfo{year}{1986}.
\newblock \bibinfo{title}{Solving sparse linear equations over finite fields}.
\newblock \bibinfo{journal}{IEEE Trans. Inf. Theory} \bibinfo{volume}{IT-32},
  \bibinfo{pages}{54--62}.
\bibitem[{Wolovich(1974)}]{Wolovich74}
\bibinfo{author}{Wolovich, W.A.}, \bibinfo{year}{1974}.
\newblock \bibinfo{title}{Linear Multivariable Systems}.
  volume~\bibinfo{volume}{11} of \textit{\bibinfo{series}{Applied Mathematical
  Sciences}}.
\newblock \bibinfo{publisher}{Springer-Verlag New-York}.

\end{thebibliography}

\end{document}